\RequirePackage[l2tabu, orthodox]{nag}
\documentclass[preprint, 11pt, reqno]{amsart}
\usepackage[normalem]{ulem}
\usepackage{mathrsfs}
\usepackage{fullpage}
\usepackage[dvipsnames]{xcolor}
\usepackage{graphicx}
\usepackage{braket}
\usepackage{hyperref}
\usepackage{breqn}
\usepackage{csquotes}
\usepackage{xcolor}
\usepackage{multirow} 
\usepackage{algorithm}
\usepackage{algpseudocode}
\usepackage{physics}
\usepackage{tikz}
\usepackage{algcompatible}
\usetikzlibrary{quantikz}
\allowdisplaybreaks 

\definecolor{newgreen}{HTML}{228B22}
\definecolor{newred}{HTML}{ff382e}

\usepackage{palatino}
\newcommand{\numtargates}{M}
\newcommand{\sqzrot}{R_Z\left(\frac{\pi}{2^\ell}\right)}
\newcommand{\addrsqzrot}{R_Z\left(\frac{\pi}{2^\ell}\right)_A}
\newcommand{\mczrot}{C^{(m-1)}\left(R_Z\left(\frac{\pi}{2^{\ell}} \right) \right)}
\newcommand{\czrot}{C\left(R_Z\left(\frac{\pi}{2^{\ell}} \right) \right)}
\newcommand{\addmczrot}{C^{(m-1)}_A\left(R_Z\left(\frac{\pi}{2^{\ell}}\right)\right)}
\newcommand{\addczrot}{C_A\left(R_Z\left(\frac{\pi}{2^{\ell}}\right)\right)}

\renewcommand{\phi}{{\varphi}}

\def\<{\langle}
\def\>{\rangle}

\theoremstyle{plain} 
\newtheorem{theorem}{Theorem}
\newtheorem{lemma}[theorem]{Lemma}
\newtheorem{openproblem}{Open Problem}
\newtheorem{proposition}[theorem]{Proposition}

\newcommand{\tn}[1]{\textnormal{#1}}

\newcommand{\custlogicalU}[1]{U_{\mathrm{L}}^{(#1)}}

\def\diag{\textnormal{diag}}
\newcommand{\bracket}[1]{\left(#1\right)}

\newcommand{\cs}{\text{CS}}
\newcommand{\cz}{\text{CZ}}
\newcommand{\ccz}{\text{CCZ}}
\newcommand{\supp}{\operatorname{Supp}}
\newcommand{\physicalU}{U_{\mathrm{P}}}
\newcommand{\logicalU}{U_{\mathrm{L}}}
\newcommand{\globalphase}{\gamma_{\mathrm{P}}}
\newcommand{\overlinelogicalU}{\overline{U}_{\mathrm{L}}}
\newcommand{\phaseP}[1]{\varphi_{{#1}}}
\newcommand{\phaseL}[1]{\lambda_{{#1}}}
\theoremstyle{definition} 
\newtheorem{definition}{Definition}

\newtheorem{remark}{Remark}
\newtheorem{example}{Example}
\newtheorem*{example*}{Example}
\newtheorem*{remark*}{Remark}

\renewcommand{\thefootnote}{\fnsymbol{footnote}}

\newcommand{\benu}{\begin{enumerate}\setlength\itemsep{4pt}}
\begin{document}

\title{Realizing Logical Diagonal Gates \\ via Transversal Physical $Z$-Rotations in CSS Codes$^\ast$}

\author{\NoCaseChange{
    K. Sai Mineesh Reddy and Navin Kashyap \\
    \textnormal{Dept.\ Electrical Communication Engineering, Indian Institute of Science} \\
    \textnormal{\{mineeshk,nkashyap\}@iisc.ac.in}
}}

\maketitle

\begin{abstract}
Calderbank-Shor-Steane (CSS) codes, constructed from nested classical codes $C_2 \subseteq C_1$, are typically optimized for good code parameters. However, practical quantum computing equally demands fault-tolerant logical gates. In this work, we characterize nested pairs $(C_1, C_2)$ whose resulting CSS codes realize a target logical diagonal gate via transversal physical $Z$-rotations. In doing so, we recover a result of Camps-Moreno et al.\ that CSS codes can realize only logical single-qubit $Z$-rotations and multi-qubit controlled-$Z$ rotations via transversal physical $Z$-rotations. Building on our characterization, we develop the ``appending construction'', that takes as input an $[[n',k']]$ CSS code $\mathcal{Q}'$ and a target logical $Z$-rotation (single-qubit or multi-controlled) $U_L$, and extends $\mathcal{Q}'$ by systematically appending $n''$ physical qubits to obtain an $[[n,k]]$ CSS code $\mathcal{Q}$ with $n = n'+n''$ and $k=k'$. The target logical gate $U_L$ is realized in $\mathcal{Q}$ by applying a well-chosen physical transversal $Z$-rotation to the $n''$ appended physical qubits. Moreover, any logical gate realized via transversal physical $Z$-rotations in the input code $\mathcal{Q}'$ remains transversally realizable in the extended code $\mathcal{Q}$. The CSS code $\mathcal{Q}$ may incur a loss in minimum distance, but the loss can be controlled through the parameter choices made in the construction. By repeatedly applying the appending construction, we can extend any CSS code $\mathcal{Q}'$ to obtain a CSS code $\mathcal{Q}$ that supports fault-tolerant implementations of multiple desired logical $Z$-rotations. The cost to be paid for this is the increased physical qubit overhead as the number of target logical gates grows. We illustrate our methodology by providing explicit constructions of CSS code families that transversally realize addressable logical single-qubit $Z$-rotations.
\end{abstract}

\footnotetext[1]{This work was supported in part by the MATRICS grant ANRF/ARGM/2025/000814/MTR awarded by the Anusandhan National Research Foundation (ANRF).}
\renewcommand{\thefootnote}{\arabic{footnote}}

\section{Introduction}\label{sec:introduction}

A central challenge in scalable quantum computing is designing quantum error-correcting codes~\cite{S1995} that simultaneously possess high encoding rates, robust error correction, and diverse fault-tolerant logical gates. Formally, a quantum code $\mathcal{Q}$ with parameters $[[n,k,d]]_2$ encodes $k$ logical qubits (logical dimension) into $n$ physical qubits (physical blocklength) and corrects Pauli errors of weight up to $\lfloor \frac{d-1}{2} \rfloor$. A logical operator is a unitary applied to the $n$ physical qubits that preserves the code space $\mathcal{Q}$, effectively executing a logical gate on the $k$ encoded qubits.

Because physical gates are inherently noisy, logical operators must be implemented fault-tolerantly. A canonical strategy for achieving this is the use of transversal gates (see~\cite[Chapter~5]{G1997}). When a quantum system is partitioned into code blocks with independent error-correcting capabilities, a unitary is called transversal if it never couples physical qubits within the same block. This constraint ensures that physical gate failures remain localized: qubit errors cannot propagate within a block and are reliably corrected by the underlying code block. In this work, we adopt the simplest and most practical notion of transversality, wherein logical operators are implemented via tensor products of single-qubit unitaries. 

Thus, an ideal objective is to find quantum codes capable of realizing a universal logical gate set entirely via transversal physical unitaries, but the Eastin-Knill theorem~\cite{EK2009} forbids this. Quantum gates are typically partitioned into gates belonging to the Clifford group---generated by the Hadamard ($H$), Phase ($S$), and controlled-$Z$ ($\cz$) gates---and non-Clifford gates. Because the Clifford group along with any single non-Clifford gate constitutes a universal gate set (see Section~\ref{sec:preliminaries_and_notation}), no quantum code can transversally implement both simultaneously. 

A natural relaxation is to find codes that transversally realize logical Clifford gates~\cite{TTR2026,BB2024,TTF2025}. For instance, puncturing an $[n,n/2,d]$ classical self-dual code yields an $[[n-1, 1, \ge d-1]]$ Calderbank-Shor-Steane (CSS) code transversally supporting logical $H$, $S$, and logical $\cz$ across any two such code blocks~\cite{JA2025}. Considering $\ell$ such code blocks produces an $[[(n-1) \ell, \ell, \ge d-1]]$ code that transversally realizes the full logical Clifford group, meaning that logical $H$ and $S$ on each logical qubit, and logical $\cz$ between any pair of logical qubits. Instantiating this with asymptotically good classical self-dual codes~\cite{SMT1972,PW2007} for $n$ blocks yields a CSS family realizing the full logical Clifford group wherein both the number of logical qubits and minimum distance scale as the square root of the number of physical qubits.  However, by the Gottesman-Knill theorem~\cite{NC2010}, Clifford-only circuits are efficiently simulatable on classical computers. Consequently, realizing only the Clifford group is insufficient for quantum advantage.

Typically, a diagonal gate such as the $T$ (or the controlled-controlled-$Z$, i.e., $\ccz$) gate is selected as the required non-Clifford gate. Consequently, the set $\{H, \cz, T\}$ (as $S=T^2$) constitutes a universal gate set. In practice, logical Clifford gates are executed transversally within one code, while non-Clifford gates are implemented via protocols like magic state distillation~\cite{CTV2017,KT2019} or code switching~\cite{ADP2014}. However, a crucial observation is that standard universal gate sets consist entirely of diagonal gates supplemented by a single superposition-creating Hadamard gate. This structure features prominently in major quantum algorithms, including Shor's factoring~\cite{S1997} and Grover's search~\cite{G1996} algorithms. Therefore, our methodology focuses on constructing codes that transversally realize a diverse set of logical diagonal gates, relying on code switching solely for the logical Hadamard gate. Because practical algorithms demand a rich variety of diagonal gates, this framework offers a viable paradigm for fault-tolerant quantum computation. 

\subsection{Literature Review}As motivated above, characterizing quantum codes that transversally implement logical diagonal gates is crucial for fault-tolerant quantum computing. Prior works~\cite{WQAS2023,AT2016} investigate logical diagonal gates realized transversally within stabilizer codes. For CSS codes specifically, constructions derived from classical divisible codes transversally realize logical diagonal gates~\cite{HLC2022,HLC2025, HLC2022div,Haah_2018}. Additionally, several studies establish characterizations, constructions, and necessary conditions for specific non-Clifford gates, primarily the $T$ gate~\cite{CYCXSZ2026,MN2025,RCNP2020, RCNPisit2020, BH2012, BDMJL2025, CLMRSS2024,HH2018,BADEFSM2025,HHccz18, CHGE2024,BCR2024,AOHEM2026,PCR2025,BHMR2026,CHGDRI2024} and multi-controlled-$Z$ gates such as the $\ccz$ gate~\cite{GG2025,N2025,WHY2025,GT2024,THLGH2025,CT2021,ET2026,TB2026,GVG2025,MBJMRADL2026,JCD2026,YZZQ2025}. 

While the aforementioned literature primarily focuses on algebraic constructions, another important line of research investigates topological codes. For instance, $2$-D doubled color codes transversally realize the logical $T$ gate on a single logical qubit~\cite{BA2015}. Recently, this doubling technique was utilized to construct a class of quantum color codes encoding a single logical qubit that transversally realize any fixed logical single-qubit $Z$-rotation on the encoded qubit~\cite{DOCJ2026}. Furthermore, higher-dimensional color codes also transversally support logical diagonal gates, specifically the $T$ gate~\cite{B2015,KB2015}, the controlled-$S$ ($\cs$) gate~\cite{B2025}, and the $\ccz$ gate~\cite{PR2013,KYP2015,VB2019,Z2025,ZSPWB2025}. 

In this work, we characterize CSS codes that realize a fixed target logical diagonal gate via transversal physical $Z$-rotations. Camps-Moreno et al.~\cite{CLMRS2026} independently derived a closely related characterization. However, our methodologies differ: given a specific transversal physical $Z$-rotation, their approach determines whether it acts as a logical operator and computes the resulting logical diagonal gate using the code's logical-$X$ basis. In contrast, our target-driven approach \emph{a priori} fixes both the target logical diagonal gate and the transversal physical $Z$-rotation to determine the exact conditions required of the CSS code. 

Furthermore, we specifically focus on realizing \emph{addressable} logical gates, which selectively target specific subsets of logical qubits. Addressability is crucial for practical quantum algorithms, which require localized gates rather than logical gates acting on all logical qubits. Consequently, transversally realizing addressable logical gates is a major focus in the recent literature~\cite{GJ2026,PB2025,QWV2023,HVAR20251,HVAR2025,G2025} and in our current work.

\subsection{Our Main Contributions} 
We now highlight the primary contributions of our paper.

\subsubsection{Characterization of CSS Codes That Realize Logical Diagonal Gates Via Transversal $Z$-Rotations}
Restricting physical gates to dyadic transversal $Z$-rotations, we characterize the CSS codes capable of realizing a target logical diagonal gate. As defined in Sections~\ref{sec:phy_tran_css_notation} and~\ref{sec:phy_tran_css}, an $n$-qubit dyadic transversal $Z$-rotation $U(p,w)$ is a diagonal gate characterized by a non-negative integer $p$ and an integer vector $w \in \mathbb{Z}^n$. We provide, in Theorem~\ref{thm:refined_fault_tolerant_logical_diag_gates}, a precise characterization of when $U(p,w)$ realizes a given logical diagonal gate within a specific CSS code. The theorem can be loosely paraphrased as follows:
\begin{quote}   
Let $\mathcal Q$ be an $[[n,k]]$ CSS code defined by a pair of nested binary linear codes $C_2 \subseteq C_1$, and let $\logicalU = \diag \left(\exp{\iota \frac{\pi}{2^{\ell}} f(a)} : a \in \mathbb{F}_2^k \right)$ be a target $k$-qubit logical diagonal gate specified by an integer-valued function $f:\mathbb{F}_2^k \rightarrow \mathbb{Z}$. The transversal physical $Z$-rotation $U(p,w)$ realizes $\logicalU$ if and only if $p \ge \ell$ and certain modular equations hold on the support of the vector $w$. These modular equations involve $p$, the codewords of $C_2$ and the coset space $C_1/C_2$, and the parameters $f, \ell$ that specify $\logicalU$.
\end{quote}

Furthermore, as established by Camps-Moreno et al.~\cite{CLMRS2026} (see also Remark~\ref{rmk:logicalgatestructure}), CSS codes can realize only logical single-qubit $Z$-rotations and multi-controlled-$Z$ rotations via transversal physical $Z$-rotations. Consequently, whenever we restrict physical gates to transversal $Z$-rotations, we assume that target gates consist entirely of single-qubit and multi-controlled-$Z$ rotations.

Finally, we provide a relaxed characterization of CSS codes that realize a target logical diagonal gate via transversal physical $Z$-rotations. This relaxation allows the logical gate to be realized via transversal $Z$-rotations up to the application of physical $Z$ gates. 

\subsubsection{A Framework to Construct CSS Codes Transversally Realizing A Set of Logical Diagonal Gates}
While our characterization overlaps with prior literature~\cite{CLMRS2026}, our primary contribution lies in leveraging it to propose a systematic constructive methodology: the appending framework. As established in Theorem~\ref{thm:refined_fault_tolerant_logical_diag_gates}, which we have paraphrased above, realizing a target logical gate $\logicalU$ depends only on satisfying specific modular equations on the support of the physical $Z$-rotation's integer vector $w$. This localized dependency directly motivates our framework. 

Given an $[[n,k]]$ primary CSS code and a set of target logical diagonal gates, we systematically append extra coordinates (physical qubits) by attaching auxiliary \emph{appending matrices} to the $X$-stabilizer check matrix and logical-$X$ generator matrix of the primary code. This process yields an extended CSS code wherein each target logical gate is assigned a dedicated subset of the newly appended coordinates. Realizing a specific target gate in this extended code involves choosing a transversal $Z$-rotation that acts exclusively on the physical qubits associated with the target logical gate's dedicated subset of extra coordinates. Furthermore, all logical diagonal gates realized via transversal physical $Z$-rotations in the primary code remain transversally realizable in the extended code.


The modularity of the appending framework allows the primary code to be flexibly extended to transversally support any desired set of logical diagonal gates. Crucially, this extended code preserves the number of logical qubits $k$ while maintaining a minimum distance comparable to that of the primary code. However, on the flip side, the total number of physical qubits grows with the number of implemented logical gates. 

\subsubsection{Appending Matrix Constructions for Addressable Single-Qubit and Multi-Controlled-$Z$ Rotations}

To construct extended CSS codes capable of realizing any target logical gate achievable via transversal $Z$-rotations, we adopt an inductive construction strategy:

We first establish the appending matrix construction for addressable single-qubit $Z$-rotations, which serves as the base case. Building upon this, we systematically derive the appending matrices for addressable $m$-controlled-$Z$ rotations by recursively leveraging the matrices for lower-order $\widetilde{m}$-controlled-$Z$ rotations and single-qubit $Z$-rotations, where $\widetilde{m} < m$. 

\subsubsection{Families of CSS Codes Realizing Addressable Single-Qubit $Z$-Rotations}Leveraging our appending matrix constructions, we derive explicit CSS code families that transversally realize addressable logical single-qubit $Z$-rotations in two distinct regimes.

First, we consider an arbitrary but fixed sequence of target address configurations. In this regime, we obtain: (a) for $\ell=1$, an asymptotically good CSS family with parameters $[[n, \Theta(n), \Omega(n)]]$ that transversally realizes the $S$ gate on any fixed subset of logical qubits; and, (b) for $\ell > 1$, a CSS family with parameters $[[n, \Theta(n), \Omega(n^\eta)]]$ that transversally realizes the $Z$-rotation $\sqzrot$ on any fixed subset of at most $n^{\epsilon}$ logical qubits, where $\epsilon \in (0,1)$ is a constant, and $\eta=\min\left\{\frac{1-\epsilon}{\ell}, \frac{1}{\ell+1}\right\}$.

Second, we consider codes that support \emph{any} address configuration. For a fixed $\ell$, we construct the following CSS families: (a) for $\ell=1$, a family with parameters $[[n, \Omega(n^{1/2}), \Omega(n^{1/2})]]$ that transversally realizes any addressable logical $S$ gate; and, (b) for $\ell > 1$, a family capable of transversally realizing any addressable logical $\sqzrot$ gate, achieving parameters $[[n, \Omega(n^{1/(1+\epsilon)}), \Omega(n^{\epsilon/((\ell+1)(1+\epsilon))})]]$ when $\epsilon \in (0,\ell+1)$, and $[[n, \Omega(n^{1/(1+\epsilon)}), \Omega(n^{1/(1+\epsilon)})]]$ when $\epsilon \ge \ell+1$.

\subsection{Outline}
The paper is organized as follows. Section~\ref{sec:preliminaries_and_notation} establishes the notation and preliminaries. Section~\ref{sec:characterizations} parametrizes the transversal $Z$-rotations that preserve a CSS code and characterizes CSS codes that realize a target logical diagonal gate via transversal physical $Z$-rotations. Building on this, Section~\ref{app_framework} introduces our general appending framework. Section~\ref{ft_sq} provides appending-matrix constructions for addressable logical single-qubit $Z$-rotations. For clarity of exposition, we focus on appending-matrix constructions for $1$-controlled-$Z$ rotations in Section~\ref{ft_cs} of the main text, deferring constructions for arbitrary multi-controlled-$Z$ rotations to Appendix~\ref{ft_mc_zrot}. Section~\ref{css_families} applies these techniques to construct families of CSS codes that transversally realize addressable logical single-qubit $Z$-rotations. 
The paper has several appendices which contain proofs deferred from the main text.

\section{Notation and Preliminaries}\label{sec:preliminaries_and_notation}

To lay the groundwork for our analysis, this section formalizes the core mathematical notation and preliminary concepts used throughout this work.

\subsection{Pauli Operators and the Clifford Hierarchy}

We begin with the single-qubit Pauli operators:
\begin{align*}
    I_2 := \begin{bmatrix}
        1 & 0 \\ 0 & 1
    \end{bmatrix}, \qquad X := \begin{bmatrix}
        0 & 1 \\ 1 & 0
    \end{bmatrix}, \qquad Z := \begin{bmatrix}
        1 & 0 \\ 0 & -1
    \end{bmatrix}, \qquad Y := \iota XZ = \begin{bmatrix}
        0 & -\iota \\ \iota & 0
    \end{bmatrix},
\end{align*}
where $\iota = \sqrt{-1}$. These operators are Hermitian, unitary, and involutory (i.e., $X^2=Y^2=Z^2 = I_2$).

Let $\mathbb{F}_2$ denote the binary field. For any positive integer $n$ and binary vectors $a = (a(1), \ldots, a(n)), b = (b(1), \ldots, b(n)) \in \mathbb{F}_2^{n}$, the $n$-qubit Pauli operators are defined via the tensor product
\begin{equation*}
    E(a,b) :=  \left( \iota^{a(1) b(1)} X^{a(1)} Z^{b(1)} \right) \otimes \cdots \otimes \left( \iota^{a(n) b(n)} X^{a(n)} Z^{b(n)} \right).
\end{equation*}
The $n$-qubit Pauli group is defined as the set
\begin{equation*}
    \mathcal{P}_n := \left\{ \iota^{\kappa} E(a, b) : a, b \in \mathbb{F}_2^n, \kappa \in \{0,1,2,3\} \right\}.
\end{equation*}
Like their single-qubit counterparts, the operators $E(a,b)$ are Hermitian, unitary, and involutory. 

Setting $N=2^n$, let $\mathbb{U}_N$ represent the group of all $N \times N$ unitaries. The Clifford hierarchy is defined recursively, starting with the Pauli group as its first level: $ \mathcal{C}^{(1)} := \mathcal{P}_n$. For $\ell \geq 2$, the $\ell$-th level is defined as the set of unitaries that map the Pauli group into the preceding level under conjugation:
\begin{equation*}
    \mathcal{C}^{(\ell)}:= \left\{   U \in \mathbb{U}_N : U P U^{\dagger} \in \mathcal{C}^{(\ell-1)} \; \forall \; P \in \mathcal{P}_n  \right\}.
\end{equation*}
The second level, $\mathcal{C}^{(2)}$, constitutes the Clifford group. It is generated by the Hadamard $(H)$, Phase $(S)$, and Controlled-NOT $(\text{CNOT})$ gates:
\begin{align*}
   H  := {\frac{1}{\sqrt{2}}} \begin{bmatrix}
      1 & 1 \\ 1 & -1  
    \end{bmatrix},  \qquad S := \begin{bmatrix}
        1 & 0 \\ 0 & \iota
    \end{bmatrix},  \qquad \text{CNOT} := \begin{bmatrix}
        I_2 & 0 \\ 0 & X
    \end{bmatrix}.
\end{align*}
The controlled-$Z$ $(\text{CZ})$ gate is also a Clifford gate given by\footnote{The set $\{H,S, \text{CZ}\}$ also generates the Clifford group.}
\begin{equation*}
    \text{CZ} := \left( I_2 \otimes H \right)  \cdot \text{CNOT} \cdot \left( I_2 \otimes H \right) =  \begin{bmatrix}
        I_2 & 0 \\ 0 & Z
    \end{bmatrix}.
\end{equation*}
As is well-established, the Clifford group supplemented with any unitary from level $\ell \geq 3$ of the Clifford hierarchy forms a universal gate set for quantum computation. 

\subsection{Binary Linear Codes and Asymptotics}

We next review standard coding theory notation. The Hamming weight of a binary vector $x$ is denoted by $w_H(x)$. For a binary linear code $C$, we denote its dimension by $\dim(C)$ and its minimum distance by $d_{\min}(C)$. A code $C$ is called $m$-divisible if every codeword $x \in C$ has Hamming weight $w_H(x) = 0 \pmod{m}$. 

The Schur product of two vectors $x = (x(1), \ldots, x(n))$ and $y=(y(1), \ldots, y(n))$ in $\mathbb{F}_2^n$ is defined as their component-wise multiplication:
\begin{equation*}
    x \ast y := (x(1) y(1), \ldots, x(n) y(n)).
\end{equation*}
This operation extends naturally to codes. The Schur product of two codes $C_1$ and $C_2$ is defined as the $\mathbb{F}_2$-linear span of the pairwise Schur products of their respective codewords:
\begin{equation*}
    C_1 \ast C_2 := \langle x \ast y : x \in C_1, y \in C_2 \rangle.
\end{equation*}

Let $\mathbf{1}\{ \cdot \}$ denote the indicator function and $[n] := \{ 1, 2, \ldots, n \}$. We denote the standard basis of $\mathbb{F}_2^n$ by $\{e_i : i \in [n]\}$, where $e_i$ has a $1$ at the $i$-th coordinate and $0$ elsewhere. The support of a vector $x \in \mathbb{F}_2^n$ specifies the coordinates of its non-zero entries:
\begin{equation*}
    \supp(x) := \{ i \in [n] : x(i) \neq 0 \}.
\end{equation*}

Using inclusion-exclusion and induction, it can be shown that the binary linear combination $\bigoplus_{i=1}^k a(i) x_i$ (where $a(i) \in \mathbb{F}_2$ and $x_i \in \mathbb{F}_2^n$) can be expanded as an arithmetic sum over $\mathbb{Z}$ as follows:
\begin{equation*}
    \bigoplus_{i=1}^{k} a(i) x_i = \sum_{i=1}^{k} (-2)^{i-1}  \sum_{1 \leq j_1 < \cdots < j_i \leq k} \left( \prod_{p=1}^{i} a({j_p}) \right)  \left( x_{j_1} \ast \cdots \ast x_{j_i} \right),
\end{equation*}
where $\bigoplus$ and $\sum$ denote modulo-$2$ and integer addition, respectively. By the linearity of the dot product $w \cdot x = \sum_{i=1}^n w_i x_i$, for any integer vector $w \in \mathbb{Z}^n$, we have
\begin{equation*}
    w \cdot \left( \bigoplus_{i=1}^{k} a(i) x_i \right) = \sum_{i=1}^{k} (-2)^{i-1} \sum_{1 \leq j_1 < \cdots < j_i \leq k} \left( \prod_{p=1}^{i} a({j_p}) \right) w \cdot (x_{j_1} \ast \cdots \ast x_{j_i}).
\end{equation*}
Let $\mathbf{1}_n$ and $\mathbf{0}_n$ denote the all-ones and all-zeros vectors of length $n$, respectively. Setting $w = \mathbf{1}_n$ yields the Hamming weight
\begin{equation*} 
    w_H \left( \bigoplus_{i=1}^{k} a(i) x_i \right) = \sum_{i=1}^{k} (-2)^{i-1} \sum_{1 \leq j_1 < \cdots < j_i \leq k} \left( \prod_{p=1}^{i} a({j_p}) \right) w_H(x_{j_1} \ast \cdots \ast x_{j_i}).
\end{equation*}
Crucially, if $C$ is a $2^m$-divisible code, the preceding equation implies that for any $\ell \in [k]$ and codewords $x_1, \ldots, x_\ell \in C$, we have
\begin{equation*}
    w_H(x_1 \ast \cdots \ast x_\ell) = 0 \pmod{2^{m-\ell+1}}.
\end{equation*}

To evaluate the asymptotic scaling of code parameters, we adopt the Bachmann-Landau order notations $\mathcal{O}$, $\Omega$, and $\Theta$ as defined in~\cite{CLRS2009}.

\subsection{CSS Construction and Encoding}

Consider a nested pair of binary linear codes $C_2 \subseteq C_1$ with parameters $[n, k_2]$ and $[n, k_1]$, respectively. 

The stabilizer group $\mathcal{S}$, generated by the $X$-type operators $E(x, 0)$ for $x \in C_2$ and the $Z$-type operators $E(0, z)$ for $z \in C_1^\perp$, defines an $[[n, k]]$ quantum code, where $k := k_1 - k_2$. The resulting CSS code is denoted by $(C_1, C_2)_{\tn{CSS}}$ or $\mathcal{Q}_{\mathcal{S}}$. Its minimum distance is given by $d_{\min}(\mathcal{Q}_{\mathcal{S}}) := \min\{d_X, d_Z\}$, where the $X$- and $Z$-distances are given by
\begin{equation*}
    d_X := \min_{x \in C_1 \setminus C_2} w_H(x) \quad \text{and} \quad d_Z := \min_{z \in C_2^\perp \setminus C_1^\perp} w_H(z). 
\end{equation*}
Here, $C_1 \setminus C_2$ denotes the set difference, distinct from the coset space $C_1 / C_2$. The projector $P_{\mathcal{S}}$ onto the code space $\mathcal{Q}_{\mathcal{S}}$ is defined as
\begin{equation*}
    P_{\mathcal{S}} := \frac{1}{|C_2|} \left( \sum_{x \in C_2} E(x, 0) \right) \frac{1}{|C_1^\perp|} \left( \sum_{z \in C_1^\perp} E(0,z) \right).
\end{equation*}
Since the $Z$-type operators preserve $\ket{0^n}$, applying $P_{\mathcal{S}}$ to this state yields the logical all-zero basis state, $\ket{0^k}$:
\begin{align*}
    {\ket{0^k}}_L &:= \sqrt{|C_2|} P_{\mathcal{S}} \ket{0^n} \\
    &= \frac{1}{\sqrt{|C_2|}} \sum_{x \in C_2} \ket{x}.
\end{align*}

To construct all the remaining logical basis states $\ket{a}$ for $a \in \mathbb{F}_2^{k} \setminus \{0^k\}$, we first define logical-$X$ operators. Any non-zero coset representative $x \in C_1 / C_2$ acts as a non-trivial logical-$X$ operator $E(x,0)$. Therefore, we fix an arbitrary basis $\{ y_1, \ldots, y_k \}$ for the coset space $C_1 / C_2$.\footnote{Throughout this paper, the choice of coset representatives is not unique; any such choice should be regarded as arbitrary but fixed.} Employing  the standard overline notation $\overlinelogicalU$ to represent a physical operator implementing a corresponding logical gate $\logicalU$ on the code space, the logical-$X$ operator acting as the Pauli-$X$ gate on the $i$-th logical qubit is defined as   
\begin{equation*}
    \overline{X}_i := E(y_i,0).
\end{equation*}
For any $a = (a(1), \ldots, a(k)) \in \mathbb{F}_2^k$, the multi-qubit logical-$X$ gate acting over the qubits within the support of $a$ is obtained by the product
\begin{equation*}
    \overline{X}_1^{a(1)} \cdots \overline{X}_k^{a(k)} = E(y_a,0),
\end{equation*}
where $y_a := \bigoplus_{i=1}^k a(i) y_i$ is the coset representative corresponding to the vector $a$. Applying $E(y_a,0)$ to the logical all-zero state yields the target logical basis state $\ket{a}$:
\begin{align*}
    \ket{a}_L &:= \overline{X}_1^{a(1)} \cdots \overline{X}_k^{a(k)} \ket{0^k}_L = E(y_a, 0) \ket{0^k}_L \\
    &= \frac{1}{\sqrt{|C_2|}} \sum_{x \in C_2} \ket{x \oplus y_a}.
\end{align*}

\subsection{Diagonal Gates and $Z$-Rotations}\label{sec:phy_tran_css_notation}

We first formalize the notation for diagonal unitaries.
\begin{definition}\label{defn:diagonalgate}
    An $n$-qubit unitary operator $U$ is \emph{diagonal} if it acts on the computational basis as
    \begin{equation*}
        U\ket{x} = \exp{ \iota \varphi_x } \ket{x}, \quad \forall \ x \in \mathbb{F}_2^n,
    \end{equation*}
    where $\varphi_x \in \mathbb{R}$. We compactly denote such an operator by
    \begin{equation*}
        U = \diag \left( \exp{  \iota \varphi_x } : x \in \mathbb{F}_2^n \right).
    \end{equation*}
\end{definition}
\begin{remark}\label{rmk:globalphase}    
    Without loss of generality, we set $\varphi_{0^n}=0$, ensuring that $U \ket{0^n} = \ket{0^n}$. Any non-zero $\varphi_{0^n}$ introduces a phase factor $\exp{ \iota  \varphi_{0^n}}$, which can be factored out as a global phase.
\end{remark}

Because the transversal gates we consider are tensor products of single-qubit unitaries, we first define single-qubit diagonal gates, namely $Z$-rotations.
\begin{definition}
    A single-qubit $Z$-rotation by an angle $\theta \in [0, 2\pi)$ is defined by the unitary operator
    \begin{equation*}
        R_Z(\theta) = \begin{bmatrix}
        1 & 0 \\
        0 & \exp{\iota \theta}
        \end{bmatrix}.
    \end{equation*}
    We term this a \emph{dyadic $Z$-rotation} if the phase $\exp{\iota \theta}$ is a $2^\ell$-th root of unity (i.e., $\theta = \frac{2\pi k}{2^\ell}$ for some integer $k$) for a non-negative integer $\ell$. 
\end{definition}
Notably, for $\ell \geq 2$, the dyadic $Z$-rotation $R_Z\left(\frac{\pi}{2^\ell} \right)$ is a non-Clifford gate. We next formalize multi-controlled-$Z$ rotations.
\begin{definition}\label{def:multi_controlled_z}
    For any positive integer $m$ and non-negative integer $\ell$, the $m$-controlled-$Z$ rotation $C^{(m)}\left( R_Z \left(\frac{\pi}{2^\ell} \right) \right)$ is an $(m+1)$-qubit unitary that applies a phase of $\exp{\iota \frac{\pi}{2^\ell}}$ to a computational basis state if and only if all $m$ control qubits and the target qubit are in the $\ket{1}$ state.
\end{definition}

Since the operator $C^{(m)}\left(R_Z \left(\frac{\pi}{2^\ell} \right) \right)$ applies its phase exclusively to the all-ones state $\ket{1}^{\otimes (m+1)}$, the target and control qubits are completely interchangeable, rendering the action of the unitary invariant under any permutation of its input qubits. To specify its action on a particular subset of an ordered $k$-qubit system ($k \ge m+1$), we introduce the indexed notation $C_{i_1, \ldots, i_{m+1}}^{(m)}\left( R_Z\left(\frac{\pi}{2^\ell} \right) \right)$, where $1 \le i_1 < \cdots < i_{m+1} \le k$. By convention, we designate the highest-indexed qubit, $i_{m+1}$, as the target, while the first $m$ indices, $i_1,\ldots,i_m$, identify the control qubits.

\subsection{Addressable $Z$-Rotations}We conclude this section by formalizing the notation for the addressability of single-qubit and multi-controlled-$Z$ rotations.
\begin{definition}\label{def:addressed_single_qubit}    
For a given binary vector $A \in \mathbb{F}_2^k$ and a dyadic $Z$-rotation $\sqzrot$, we define the unitary $\addrsqzrot$ via the following $k$-fold tensor product:
\begin{equation*}
      \addrsqzrot := \bigotimes_{i=1}^k R_Z\left( \frac{\pi}{2^\ell} A(i)\right) = \diag \left( \exp{\iota \frac{\pi}{2^\ell}  w_H(a \ast A) : a \in \mathbb{F}_2^k}\right).
\end{equation*}
We refer to $A$ as the \emph{address vector} and $\addrsqzrot$ as the corresponding \emph{addressed gate}.
\end{definition}

\begin{definition}\label{def:addressed_multi_controlled} 
    Given an order-$(m+1)$ binary tensor $A \in \mathbb{F}_2^{k \times \cdots \times k}$ and an $m$-controlled-$Z$ rotation $C^{(m)}\left(R_Z\left(\frac{\pi}{2^\ell}\right)\right)$, we define the gate  $C_A^{(m)}\left(R_Z\left(\frac{\pi}{2^\ell}\right)\right)$ as the following product over all ordered $(m+1)$-qubit subsets:
    \begin{equation*}
        C_A^{(m)}\left(R_Z\left(\frac{\pi}{2^\ell}\right)\right) := \prod_{1 \leq i_1 < \cdots < i_{m+1} \leq k} C^{(m)}_{i_1, \ldots, i_{m+1}} \left( R_Z\left(\frac{\pi}{2^{\ell}} A\left(i_1, \ldots, i_{m+1} \right) \right) \right).
    \end{equation*}    
    The tensor $A$ is referred to as the \emph{address tensor}, and the unitary $C_A^{(m)}\left(R_Z\left(\frac{\pi}{2^\ell}\right)\right)$ is the corresponding \emph{addressed gate}.
    Because the product is exclusively over $(m+1)$-tuples $(i_1,\ldots,i_{m+1})$ that are strictly increasing, the gate's action depends only on the tensor entries $A(i_1, \ldots, i_{m+1})$ for which $1 \le i_1 < \cdots < i_{m+1} \le k$. To eliminate ambiguity and to ensure that $A$ is uniquely identified for an $m$-controlled-$Z$ rotation, we enforce the convention that $A(i_1, \ldots, i_{m+1})=0$ unless we have $1 \le i_1 < \cdots < i_{m+1} \le k$.
\end{definition}
\begin{example}
    To illustrate this framework, consider a $4$-qubit register where the two-qubit interactions are controlled-$S$ gates: 
    \begin{equation*}
        \text{CS} := C^{(1)}\left( R_Z\left(\frac{\pi}{2} \right) \right).
    \end{equation*}
    Suppose the target addressability configuration is specified by the matrix $A \in \mathbb{F}_2^{4 \times 4}$, whose only non-zero entries are $A(1,2) = A(2,3) = A(2,4) = 1$. The resulting addressed unitary, $\cs_A = C_A^{(1)} \left( R_Z\left(\frac{\pi}{2} \right) \right)$, evaluates to
    \begin{equation*}
        \text{CS}_{A} = \text{CS}_{1,2} \cdot \text{CS}_{2,3} \cdot \text{CS}_{2,4},
    \end{equation*}
    where $\text{CS}_{i,j} = C_{i,j}^{(1)}\left( R_Z\left( \frac{\pi}{2} \right) \right)$. The equivalent quantum circuit for this addressed operation is depicted below:
    \begin{equation*}
        \begin{quantikz}
            \qw & \ket{q_1} & \qw & \ctrl{1} & \qw      & \qw      & \qw \\
            \qw & \ket{q_2} & \qw  & \gate{S} & \ctrl{1} & \ctrl{2} & \qw \\
            \qw & \ket{q_3} & \qw  & \qw      & \gate{S} & \qw      & \qw \\
            \qw & \ket{q_4} & \qw  & \qw      & \qw      & \gate{S} & \qw
        \end{quantikz}
    \end{equation*}    
\end{example}

\section{Characterizations of CSS Codes That Realize Logical Diagonal Gates Fault-Tolerantly}\label{sec:characterizations}

In this section, we characterize CSS codes that support the transversal implementation of a target logical diagonal gate.

\subsection{Physical Diagonal Gates Realizing Logical Diagonal Gates}

The following theorem establishes the structural conditions under which a CSS code realizes a target logical diagonal gate through a fixed physical diagonal gate. Our proof leverages techniques introduced in~\cite{MN2025} and independently in~\cite{CLMRS2026}.
\begin{theorem}\label{thm:diagonalgates}
    Let $(C_1, C_2)_{\tn{CSS}}$ be an $[[n,k]]$ CSS code. An $n$-qubit physical diagonal gate
    \begin{equation*}
        \physicalU = \diag \left( \exp{{\iota  \phaseP{x}} } : x \in \mathbb{F}_2^n \right)
    \end{equation*}
    realizes the $k$-qubit logical diagonal gate
    \begin{equation*}
        \logicalU = \diag \left( \exp{\iota \phaseL{a} } : a \in \mathbb{F}_2^k \right)
    \end{equation*}
    if and only if for all $x \in C_2$ and $a \in \mathbb{F}_2^k$, the following holds:
    \begin{equation}\label{eq:diagonalgates}
        \exp{\iota \phaseP{x \oplus y_a}} = \exp{\iota \phaseL{a}},
    \end{equation}
    where $y_a \in C_1 / C_2$ is the logical-$X$ coset representative corresponding to the logical state $\ket{a}$.
\end{theorem}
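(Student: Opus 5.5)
The plan is to compute the action of $\physicalU$ directly on the logical computational basis states $\ket{a}_L$ and compare coefficients in the physical computational basis. I take ``$\physicalU$ realizes $\logicalU$'' to mean that $\physicalU \ket{\psi}_L = \overlinelogicalU$-action, i.e.\ $\physicalU \ket{a}_L = \exp{\iota \phaseL{a}} \ket{a}_L$ for every $a \in \mathbb{F}_2^k$. By linearity this extends to all code states, so both directions reduce to statements about the basis states $\ket{a}_L$.

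First, using the expression $\ket{a}_L = \frac{1}{\sqrt{|C_2|}} \sum_{x \in C_2} \ket{x \oplus y_a}$ from the encoding subsection, and since $\physicalU$ is diagonal, I will write
\begin{equation*}
    \physicalU \ket{a}_L = \frac{1}{\sqrt{|C_2|}} \sum_{x \in C_2} \exp{\iota \phaseP{x \oplus y_a}} \ket{x \oplus y_a},
\end{equation*}
whereas the target state is
\begin{equation*}
    \exp{\iota \phaseL{a}} \ket{a}_L = \frac{1}{\sqrt{|C_2|}} \sum_{x \in C_2} \exp{\iota \phaseL{a}} \ket{x \oplus y_a}.
\end{equation*}

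The key observation is that for fixed $a$ the vectors $x \oplus y_a$, $x \in C_2$, are pairwise distinct, because $x \mapsto x \oplus y_a$ is injective. Hence the kets $\ket{x\oplus y_a}$ are distinct orthonormal computational basis states. For sufficiency, substituting \eqref{eq:diagonalgates} into the first display gives the second display term by term. For necessity, equality of the two displays forces their coefficients on each basis state $\ket{x \oplus y_a}$ to agree, and that is exactly \eqref{eq:diagonalgates}.

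The only delicate point I expect is the global phase. If realization is only required up to a global phase $\exp{\iota\gamma}$, the same coefficient comparison gives $\exp{\iota\phaseP{x\oplus y_a}} = \exp{\iota(\phaseL{a}+\gamma)}$. I will then use the normalization convention of Remark~\ref{rmk:globalphase}, namely $\phaseP{0^n}=0$ and $\phaseL{0^k}=0$. Taking $x=0^n$ and $a=0^k$ (so that $y_{0^k}=0^n$) forces $\exp{\iota\gamma}=1$, which recovers \eqref{eq:diagonalgates} exactly.
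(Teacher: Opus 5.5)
Your proposal is correct and follows the paper's proof essentially step for step. You expand $\physicalU\ket{a}_L$ over the coset, compare coefficients using the orthonormality of the distinct kets $\ket{x \oplus y_a}$ for $x \in C_2$, and remove the global phase by setting $x = 0^n$, $a = 0^k$ under the normalization of Remark~\ref{rmk:globalphase}.
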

\begin{proof}
    The proof of sufficiency proceeds by direct substitution. For any $a \in \mathbb{F}_2^k$, we have
    \begin{align*}
        \physicalU \ket{a}_L &= \frac{1}{\sqrt{|C_2|}} \bracket{ \sum_{x \in C_2} \exp{\iota \phaseP{x \oplus y_a}} \ket{x \oplus y_a} }. 
    \end{align*}
    Applying the condition in~\eqref{eq:diagonalgates}, this simplifies to
    $$
    \physicalU \ket{a}_L \ = \ \frac{1}{\sqrt{|C_2|}}  \left( \sum_{x \in C_2} \exp{ \iota \phaseL{a}} \ket{x \oplus y_a} \right) \ 
        = \ \exp{\iota \phaseL{a}} \ket{a}_L \ 
        = \ \overlinelogicalU \ket{a}_L.
    $$
    We now establish necessity. Suppose that the physical gate $\physicalU$ realizes the target logical gate $\logicalU$. Then, for any $a \in \mathbb{F}_2^k$, we must have
    \begin{equation}\label{eq:necessityaction}
        \physicalU \ket{a}_L = \globalphase \overlinelogicalU \ket{a}_L,
    \end{equation}
    where $\globalphase$ denotes global phase. The left-hand side of~\eqref{eq:necessityaction} expands as
    \begin{equation}\label{eq:lhs_expansion}
        \physicalU \ket{a}_L = \frac{1}{\sqrt{|C_2|}}  \left( \sum_{x \in C_2} \exp{\iota  \phaseP{x \oplus y_a}} \ket{x \oplus y_a} \right),  
    \end{equation}
    while the right-hand side evaluates to
    \begin{align}
        \globalphase \overlinelogicalU \ket{a}_L &= \globalphase \exp{ \iota \phaseL{a}} \ket{a}_L \nonumber \\
        &= \globalphase \exp{ \iota \phaseL{a}} \frac{1}{\sqrt{|C_2|}} \left( \sum_{x \in C_2} \ket{x \oplus y_a} \right)  \label{eq:rhs_expansion}.
    \end{align}
    Equating~\eqref{eq:lhs_expansion} and~\eqref{eq:rhs_expansion} and canceling the normalization factor $1/\sqrt{|C_2|}$, we find that for all $a \in \mathbb{F}_2^k$, 
    \begin{equation*}
        \sum_{x \in C_2} \exp{\iota \phaseP{x \oplus y_a}} \ket{x \oplus y_a}  = \globalphase \exp{\iota \phaseL{a}} \left( \sum_{x \in C_2} \ket{x \oplus y_a} \right).
    \end{equation*}
    Because the states $\{ \ket{x \oplus y_a} : x \in C_2 \}$ form an orthonormal set in $\left(\mathbb{C}^2\right)^{\otimes n}$, equating the coefficients of the corresponding basis states implies that for all $x \in C_2$,
    \begin{equation*}
        \exp{\iota \phaseP{x \oplus y_a}} = \globalphase \exp{\iota  \phaseL{a}}.
    \end{equation*}
    In particular, setting $x = 0^n$ and $a = 0^k$ yields $\globalphase = 1$ (see  Remark~\ref{rmk:globalphase}), completing the proof.
\end{proof}

\subsection{Physical Transversal $Z$-Rotations That Preserve a CSS Code}\label{sec:phy_tran_css}

For the remainder of this paper, we focus exclusively on CSS codes that realize logical diagonal gates transversally.

The following theorem establishes that transversal $Z$-rotations preserving a CSS code space are necessarily dyadic.
\begin{theorem}[\cite{AT2016}, Theorem~1]
    If $\physicalU = \bigotimes_{i=1}^n R_Z(\theta_i)$ is a logical operator on a CSS code space, then each phase $\exp{\iota  \theta_i}$ must be a $2^{l_i}$-th root of unity for some non-negative integer $l_i$.
\end{theorem}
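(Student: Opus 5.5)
Throughout I assume the standing hypothesis of \cite{AT2016}: no physical qubit is trivial, i.e.\ every coordinate $i$ is genuinely used by the code. Concretely, a single-qubit $Z_i$ is detectable, so $e_i \notin C_2^\perp \setminus C_1^\perp$. Some such hypothesis is needed. For example, if $C_2=\{0\}$ and $C_1=\mathbb{F}_2^n$, the code is the whole space and every $R_Z(\theta)$ preserves it.

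\emph{Step 1 (reduce to a phase condition).} Write $\phaseP{x}=\theta\cdot x$ with $\theta=(\theta_1,\dots,\theta_n)$. A diagonal operator maps each coset state $\ket{a}_L$ to a vector supported on the coset $y_a\oplus C_2$. These cosets are disjoint, and the code space is spanned by the coset states. Hence a diagonal logical operator must send $\ket{a}_L$ to a multiple of itself. This is exactly the situation of Theorem~\ref{thm:diagonalgates} with some (a priori unknown) diagonal $\logicalU$. So $\theta\cdot(x\oplus y)\equiv \theta\cdot y \pmod{2\pi}$ for all $x\in C_2$ and $y\in C_1$.

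\emph{Step 2 (derivatives).} Using the inclusion--exclusion identity from the preliminaries, $\theta\cdot(x\oplus y)=\theta\cdot x+\theta\cdot y-2\,\theta\cdot(x\ast y)$. Taking $y=0$ gives $\theta\cdot x\equiv 0$ for every $x\in C_2$. Substituting back then gives $2\,\theta\cdot(x\ast y)\equiv 0 \pmod{2\pi}$ for all $x\in C_2$ and $y\in C_1$.

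\emph{Step 3 (isolate a coordinate).} Iterating the identity over several elements of $C_1$ produces the relations $2^{r}\,\theta\cdot(x\ast y_1\ast\cdots\ast y_r)\equiv 0$. The goal is to show that for each coordinate $i$ some integer combination of these relations collapses to $2^{l_i}\theta_i\equiv 0$.

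Here I would follow \cite{AT2016}. Expand $R_Z(\theta_i)=\alpha_i I+\beta_i Z_i$, so that $\physicalU=\sum_{S\subseteq[n]}\big(\prod_{i\notin S}\alpha_i\prod_{i\in S}\beta_i\big)Z_S$. Then compress by the code projector: $P_{\mathcal S} Z_S P_{\mathcal S}$ is zero unless the indicator vector $1_S$ lies in $C_2^\perp$. Because $Z_i$ is detectable, the coefficients of the single-qubit terms are forced to interact with higher-weight $Z$-strings. The requirement that $\physicalU$ commute with $P_{\mathcal S}$ then yields polynomial identities in the $e^{\iota\theta_i}$ with integer coefficients that are powers of $2$. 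Comparing these identities coordinate by coordinate should show that each $e^{\iota\theta_i}$ is a root of $z^{2^{l_i}}=1$. The natural induction is on the level $l_i$, peeling off one factor of $2$ per application of Step 2.

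\emph{Main obstacle.} I expect Step 3 to be the hardest part: turning the collective constraints into a constraint on a single $\theta_i$. Steps 1--2 only constrain sums of $\theta_j$ over supports of Schur products of codewords, and these supports may never be a singleton. My first attempt would be to choose the relation from Step 2 with $x\in C_2$ and the $y_j\in C_1$ so that the Schur product has minimal support containing $i$. I would then use a second relation with a support differing exactly at $i$, and subtract. Non-triviality of qubit $i$ is what should guarantee that such a pair exists. If that combinatorial choice fails, I would fall back on the Pauli-expansion argument of \cite{AT2016} verbatim.
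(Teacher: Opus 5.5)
You are right that the statement needs a non-degeneracy hypothesis, and Steps 1 and 2 are correct; they are the reduction the paper itself uses in Theorem~\ref{thm:diagonalgates} and in the proof of Theorem~\ref{thm:refined_fault_tolerant_logical_diag_gates}. However, all of the content lies in Step 3, and there you give only a plan. The Pauli-expansion paragraph states no actual identity, and your fallback is to quote \cite{AT2016}, which is exactly what the paper does in place of a proof.

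More seriously, the plan cannot succeed under the hypothesis you wrote down. The Step 2 relations, $\theta\cdot x\equiv 0$ and $2\,\theta\cdot(x\ast y)\equiv 0 \pmod{2\pi}$ for all $x\in C_2$ and $y\in C_1$, are also sufficient for $\physicalU$ to be logical. So any freedom they leave in $\theta_i$ is a genuine counterexample, not a limitation of technique.

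Two failures follow. First, whenever $e_i\in C_2^\perp$, the angle $\theta_i$ appears in no relation at all, because every relation has a factor from $C_2$. Your condition $e_i\notin C_2^\perp\setminus C_1^\perp$ still permits $e_i\in C_1^\perp$, i.e.\ qubit $i$ frozen to $|0\rangle$. Second, and worse, suppose $e_i+e_j\in C_1^\perp$ for some $j\neq i$, i.e.\ a weight-two $Z$-stabilizer. Then every vector of $C_1$, and hence every Schur product you can form, agrees on coordinates $i$ and $j$. Every relation therefore sees $\theta_i,\theta_j$ only through $\theta_i+\theta_j$, so the two relations ``differing exactly at $i$'' that your Step 3 needs do not exist.

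Concretely, take Shor's $[[9,1,3]]$ code with $C_1=\langle 111000000,\,000111000,\,000000111\rangle$ and $C_2=\langle 111111000,\,000111111\rangle$. The operator $R_Z(\alpha)\otimes R_Z(-\alpha)\otimes I_2^{\otimes 7}$ is the logical identity for every real $\alpha$, even though every single-qubit $Z_i$ is detectable. So the statement, as you formalized it, is false.

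What is missing is the right hypothesis plus a small combinatorial lemma. With them your own approach closes and no Pauli expansion is needed. Assume $e_i\notin C_2^\perp$ and $e_i+e_j\notin C_1^\perp$ for all $j\neq i$; the examples above show both are necessary. Choose $x\in C_2$ with $x(i)=1$, and let $\{j_1,\ldots,j_r\}=\supp(x)\setminus\{i\}$.

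For each $j_s$, the projection of $C_1$ onto coordinates $\{i,j_s\}$ is a subspace of $\mathbb{F}_2^2$. It contains $(1,1)$, coming from $x$. By the second assumption it is not $\{(0,0),(1,1)\}$, so it is all of $\mathbb{F}_2^2$. Hence there is $y^{(s)}\in C_1$ with $y^{(s)}(i)=1$ and $y^{(s)}(j_s)=0$. Then $x\ast y^{(1)}\ast\cdots\ast y^{(r)}=e_i$.

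Your iterated relation $2^{r}\,\theta\cdot(x\ast y^{(1)}\ast\cdots\ast y^{(r)})\equiv 0 \pmod{2\pi}$ now gives $2^{r}\theta_i\in 2\pi\mathbb{Z}$; if $r=0$, use $\theta\cdot x\equiv 0$ directly. Thus $e^{\iota\theta_i}$ is a $2^{w_H(x)-1}$-th root of unity.
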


\begin{remark}\label{rmk:transversalzrotations}
    As a consequence of the preceding theorem, we hereafter restrict our attention to dyadic transversal $Z$-rotations of the form
    \begin{equation*}
        \physicalU = \bigotimes_{i=1}^n R_Z\left( \frac{\pi q_i}{2^{p_i}} \right),
    \end{equation*}
    where each $p_i$ is a non-negative integer and each $q_i$ is an odd integer.
\end{remark}

The following lemma demonstrates that every transversal $Z$-rotation can be parameterized by a non-negative integer and an integer vector.
\begin{lemma}\label{lem:z_rotation_parameterization}
    Any $n$-qubit dyadic transversal $Z$-rotation is characterized by a non-negative integer $p$ and an integer vector $w \in \mathbb{Z}^n$ containing at least one odd component, such that
    \begin{equation}\label{eq:z_rotation_diag}
        U = \diag \left(\exp{\iota \frac{\pi}{2^p} (w \cdot x)} : x\in \mathbb{F}_2^n \right).
    \end{equation}
    Moreover, the integer $p$ is uniquely determined, and the vector $w$ is unique modulo $2^{p+1}$.
\end{lemma}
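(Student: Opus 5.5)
Starting from the form given in Remark~\ref{rmk:transversalzrotations}, $\physicalU = \bigotimes_{i=1}^n R_Z\left(\frac{\pi q_i}{2^{p_i}}\right)$, I will first compute the diagonal action of the tensor product. On a basis state $\ket{x}$, the factor $R_Z(\theta_i)$ contributes $\exp{\iota \theta_i x(i)}$. Hence $\physicalU\ket{x} = \exp{\iota \sum_i \theta_i x(i)}\ket{x}$, and the phase is linear in $x$ over $\mathbb{Z}$.

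For existence, qubits with trivial rotation ($\theta_i = 0$) are excluded and handled separately; otherwise $q_i$ is odd. Set $p := \max_i p_i$ over the nontrivial indices, and define $w(i) := q_i 2^{p - p_i}$, with $w(i) := 0$ for trivial factors. Then $\theta_i = \frac{\pi}{2^p} w(i)$, so the phase equals $\frac{\pi}{2^p}(w\cdot x)$, which gives \eqref{eq:z_rotation_diag}. Any index attaining the maximum has $w(i) = q_i$ odd, so $w$ has an odd component. One degenerate case needs care: the identity gate, where no such index exists. There I would either take the convention $p=0$ with $w$ odd in every coordinate, since $R_Z(\pi \cdot 2) = I_2$, or simply assume $\physicalU$ is nontrivial.

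For uniqueness, I will read off each single-qubit factor by evaluating at $x = e_i$. The gate determines $\exp{\iota \frac{\pi}{2^p} w(i)}$ for every $i$, that is, $w(i) \bmod 2^{p+1}$ given $p$. Now suppose $(p,w)$ and $(p',w')$ both represent $\physicalU$, with $p \le p'$. Then for every $i$,
\[
2^{p'-p} w(i) \equiv w'(i) \pmod{2^{p'+1}}.
\]
If $p < p'$, the left side is even for all $i$, so every $w'(i)$ is even. This contradicts the requirement that $w'$ has an odd component. Hence $p = p'$, and then $w \equiv w' \pmod{2^{p+1}}$ coordinatewise.

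The only delicate point I expect is this uniqueness step together with the identity-gate convention. The remainder is direct bookkeeping.
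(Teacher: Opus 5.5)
Your proof is correct and follows the paper's argument. Existence uses the same choice $p=\max_i p_i$, $w(i)=2^{p-p_i}q_i$, and uniqueness evaluates at $x=e_i$; your WLOG $p\le p'$ parity argument is equivalent to the paper's symmetric comparison of root-of-unity orders, and setting $w(i)=0$ for trivial factors is a small refinement the paper glosses over by assuming every $q_i$ is odd.

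One correction to your aside on the identity gate: the convention ``$p=0$ with $w$ odd in every coordinate'' gives $Z^{\otimes n}$, not the identity. In fact no pair $(p,w)$ with an odd component can represent the identity, since $U(p,w)=I$ forces $w(i)\equiv 0 \pmod{2^{p+1}}$ for all $i$. So the identity must simply be excluded, which is your second option.
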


The proof of this lemma is provided in Appendix~\ref{app:z_rotation_parameterization_proof}. Under this parameterization, the diagonal unitary $U$ resides precisely in the $(p+1)$-th level of the Clifford hierarchy, $\mathcal{C}^{(p+1)}$. 

Motivated by the characterization established in Lemma~\ref{lem:z_rotation_parameterization}, we introduce the following formal parameterization for transversal $Z$-rotations.
\begin{definition}\label{def:z_rotation_tuple}
    Any $n$-qubit dyadic transversal $Z$-rotation is uniquely identified by a non-negative integer $p$ and an integer vector $w \in \mathbb{Z}^n$ containing at least one odd component. We denote this operator by the tuple $(p, w)$, explicitly defined as
    \begin{equation*}
        U(p, w) := \diag\left( \exp{\iota \frac{\pi}{2^p} \left( w \cdot x\right)} : x \in \mathbb{F}_2^n \right).
    \end{equation*}
\end{definition}

\subsection{CSS Codes Realizing Logical Diagonal Gates Fault-Tolerantly}
In this subsection, we characterize CSS codes capable of realizing a targeted logical diagonal gate via transversal physical $Z$-rotations. Previously, Camps-Moreno et al.~\cite{CLMRS2026} established similar conditions under which transversal physical $Z$-rotations preserve the code space, determined their resulting logical actions, and identified the operator groups that induce non-trivial logical gates versus those that induce the logical identity. 
In contrast, the following lemma simultaneously fixes both the transversal physical operations and the target logical gate. 
\begin{lemma}\label{lem:fault_tolerant_logical_diag_gates}
    Let $(C_1, C_2)_{\tn{CSS}}$ be an $[[n,k]]$ CSS code. An $n$-qubit transversal physical $Z$-rotation $U{(p,w)}$ realizes the $k$-qubit logical diagonal gate
    \begin{equation*}
        \logicalU = \diag \left( \exp{\iota \lambda_{a} } : a \in \mathbb{F}_2^k \right)
    \end{equation*}
    if and only if for all $x \in C_2$ and $a \in \mathbb{F}_2^k$, the following condition holds:
    \begin{equation*}
        \exp{\iota \frac{\pi}{2^p} \left( w \cdot \left( x \oplus y_a \right) \right)} = \exp{\iota \lambda_{a}},
    \end{equation*}
    where $y_a \in C_1 / C_2$ is the logical-$X$ representative corresponding to the logical state $\ket{a}$.
\end{lemma}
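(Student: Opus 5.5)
This lemma is a direct specialization of Theorem~\ref{thm:diagonalgates} to the transversal $Z$-rotations parameterized in Definition~\ref{def:z_rotation_tuple}. The plan is therefore to identify the physical phases $\phaseP{x}$ of $U(p,w)$ and substitute them into condition~\eqref{eq:diagonalgates}.

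By Lemma~\ref{lem:z_rotation_parameterization} and Definition~\ref{def:z_rotation_tuple}, the operator $U(p,w)$ is the diagonal gate
\begin{equation*}
U(p,w)=\diag\left(\exp{\iota \tfrac{\pi}{2^p}(w\cdot x)} : x\in\mathbb{F}_2^n\right),
\end{equation*}
so it has the form of $\physicalU$ in Theorem~\ref{thm:diagonalgates} with $\phaseP{x} = \frac{\pi}{2^p}(w\cdot x)$. Here $x$ is regarded as a $0/1$ integer vector and $w\cdot x$ is the integer dot product. One point should be checked first. The necessity direction of Theorem~\ref{thm:diagonalgates} uses the normalization $\phaseP{0^n}=0$ from Remark~\ref{rmk:globalphase} to conclude that $\globalphase=1$. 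This normalization holds automatically, since $w\cdot 0^n = 0$. Likewise, the logical gate is assumed normalized with $\phaseL{0^k}=0$, as in the theorem.

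With these phases, condition~\eqref{eq:diagonalgates} becomes, for every $x\in C_2$ and every $a\in\mathbb{F}_2^k$,
\begin{equation*}
\exp{\iota \tfrac{\pi}{2^p}\left(w\cdot(x\oplus y_a)\right)} = \exp{\iota\lambda_a}.
\end{equation*}
This is exactly the condition in the lemma. Both directions of the equivalence then follow at once from the corresponding directions of Theorem~\ref{thm:diagonalgates}.

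There is essentially no obstacle here. The only care needed is to keep the integer dot product $w\cdot(x\oplus y_a)$ distinct from any mod-$2$ arithmetic: the vector $x\oplus y_a$ is computed over $\mathbb{F}_2$, and the result is then embedded into $\mathbb{Z}^n$ before taking the dot product with $w$. A second small point is that the lemma holds for any fixed choice of the coset representatives $y_a$, because Theorem~\ref{thm:diagonalgates} does.
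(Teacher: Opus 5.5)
Your proposal is correct and follows the paper's own proof: the paper also derives the lemma by substituting the phases $\frac{\pi}{2^p}(w\cdot x)$ of $U(p,w)$ from Definition~\ref{def:z_rotation_tuple} into Theorem~\ref{thm:diagonalgates}. You also check that $w\cdot 0^n=0$, so the normalization in Remark~\ref{rmk:globalphase} holds automatically. The paper leaves this implicit, and stating it is a small improvement.
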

\begin{proof}
    The result follows directly by substituting the parameterization of $U(p,w)$ from Definition~\ref{def:z_rotation_tuple} into the condition established in Theorem~\ref{thm:diagonalgates}.
\end{proof}

The preceding characterization implies that each phase exponent $\lambda_a$ must be of the form $\frac{\pi m_a}{2^p}$ for some integer $m_a$. The following theorem refines this characterization by explicitly incorporating the discrete phase constraint.

\begin{theorem}\label{thm:refined_fault_tolerant_logical_diag_gates}
    Let $(C_1, C_2)_{\tn{CSS}}$ be an $[[n,k]]$ CSS code, and let $f: \mathbb{F}_2^k \rightarrow \mathbb{Z}$ be an integer-valued function such that $f(0^k)=0\pmod{2^{\ell+1}}$ and $f(a_0)$ is odd for some $a_0 \in \mathbb{F}_2^k$. An $n$-qubit transversal physical $Z$-rotation $U(p,w)$ realizes the $k$-qubit logical diagonal gate
    \begin{equation*}
        \logicalU = \diag \left(\exp{\iota \frac{\pi}{2^{\ell}} f(a)} : a \in \mathbb{F}_2^k \right)
    \end{equation*}
    if and only if $p \geq \ell$ and for all $x \in C_2$ and $a \in \mathbb{F}_2^k$, the following modular equations hold:
    \begin{align}
        w \cdot x &= 0 \pmod{2^{p+1}},  \label{eq:char_crit_1}\\
        w \cdot y_a &= 2^{p-\ell} f(a) \pmod{2^{p+1}},  \label{eq:char_crit_2}\\
        w \cdot \bracket{x \ast y_a} &= 0 \pmod{2^p}, \label{eq:char_crit_3}
    \end{align}
    where $y_a \in C_1 / C_2$ is the logical-$X$ representative corresponding to the logical state $\ket{a}$.
\end{theorem}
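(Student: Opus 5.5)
The plan is to start from Lemma~\ref{lem:fault_tolerant_logical_diag_gates}. It says that $U(p,w)$ realizes $\logicalU$ if and only if, for all $x\in C_2$ and $a\in\mathbb{F}_2^k$,
\begin{equation*}
\frac{\pi}{2^p}\, w\cdot(x\oplus y_a) \equiv \frac{\pi}{2^\ell} f(a) \pmod{2\pi}.
\end{equation*}
If $p\ge \ell$, this is the same as the integer congruence $w\cdot(x\oplus y_a) \equiv 2^{p-\ell}f(a) \pmod{2^{p+1}}$. The work is then to show this single family of congruences is equivalent to \eqref{eq:char_crit_1}--\eqref{eq:char_crit_3}, and to dispose of the case $p<\ell$ separately. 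The one algebraic tool is the two-term case of the inclusion--exclusion expansion from the preliminaries:
\begin{equation*}
w\cdot(x\oplus y)=w\cdot x+w\cdot y-2\,w\cdot(x\ast y).
\end{equation*}

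First, the constraint $p\ge\ell$. Take $x=0^n$ and $a=a_0$ with $f(a_0)$ odd. The lemma gives $\exp{\iota\frac{\pi}{2^p}w\cdot y_{a_0}}=\exp{\iota\frac{\pi}{2^\ell}f(a_0)}$. The left side is a $2^{p+1}$-th root of unity. The right side is a primitive $2^{\ell+1}$-th root of unity because $f(a_0)$ is odd. This forces $\ell+1\le p+1$, i.e.\ $p\ge\ell$. With that in hand the exponent condition becomes the integer congruence above, which I will call $(\star)$.

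Next, necessity of the three equations given $(\star)$. Putting $a=0^k$ (so $y_0=0^n$) and using $f(0^k)\equiv 0 \pmod{2^{\ell+1}}$, so that $2^{p-\ell}f(0^k)\equiv 0\pmod{2^{p+1}}$, gives \eqref{eq:char_crit_1}. Putting $x=0^n$ gives \eqref{eq:char_crit_2}. For general $(x,a)$, expand $w\cdot(x\oplus y_a)$ by the identity and subtract \eqref{eq:char_crit_1} and \eqref{eq:char_crit_2}. This leaves $2\,w\cdot(x\ast y_a)\equiv 0\pmod{2^{p+1}}$, which is \eqref{eq:char_crit_3}. Sufficiency is the same identity read in reverse: if \eqref{eq:char_crit_1}--\eqref{eq:char_crit_3} hold and $p\ge\ell$, the three terms sum to $2^{p-\ell}f(a)$ modulo $2^{p+1}$, so $(\star)$ holds, and Lemma~\ref{lem:fault_tolerant_logical_diag_gates} gives the realization.

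The only step needing care is the root-of-unity order argument for $p\ge\ell$. It relies on the hypotheses that some $f(a_0)$ is odd and that $f(0^k)\equiv 0\pmod{2^{\ell+1}}$, the latter being the normalization matching Remark~\ref{rmk:globalphase}. Everything else is a direct manipulation of the two-term expansion.
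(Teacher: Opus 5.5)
Your proof is correct and follows essentially the same route as the paper. You invoke Lemma~\ref{lem:fault_tolerant_logical_diag_gates}, use the odd value $f(a_0)$ to force $p\ge\ell$, and reduce to a single congruence modulo $2^{p+1}$. You then specialize $a=0^k$ and $x=0^n$ in the expansion $w\cdot(x\oplus y_a)=w\cdot x+w\cdot y_a-2\,w\cdot(x\ast y_a)$ for necessity, and read the same identity backwards for sufficiency. Your root-of-unity order argument for $p\ge\ell$, and your use of $f(0^k)\equiv 0 \pmod{2^{\ell+1}}$ to get $2^{p-\ell}f(0^k)\equiv 0 \pmod{2^{p+1}}$, spell out steps that the paper states more tersely.
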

\begin{proof}
    By Lemma~\ref{lem:fault_tolerant_logical_diag_gates}, the physical gate $U(p,w)$ realizes the logical gate $\logicalU$ if and only if for all $x \in C_2$ and $a \in \mathbb{F}_2^k$,
    \begin{equation}\label{eq:refined_phase_exponent}
         \exp{\iota \frac{\pi}{2^{p}} \left( w \cdot \left( x \oplus y_a \right) \right)} = \exp{\iota \frac{\pi}{2^\ell} f(a)}.
    \end{equation} 
    In particular, when this expression is evaluated at $a = a_0$, the condition that $f(a_0)$ is odd necessitates that $p \geq \ell$. Consequently, condition~\eqref{eq:refined_phase_exponent} is equivalent to
    \begin{equation}\label{eq:orig_char}
        w \cdot \bracket{x \oplus y_a} = 2^{p - \ell} f(a) \pmod{2^{p + 1}}.
    \end{equation} 
    For the remainder of the proof, we first establish necessity. The left-hand side of~\eqref{eq:orig_char} can be expanded as
    \begin{equation}\label{eq:weight_expansion}
        w \cdot x + w \cdot y_a - 2 \left( w \cdot (x \ast y_a) \right) = 2^{p - \ell} f(a) \pmod{2^{p+1}}.
    \end{equation}
    Since $f(0^k) = 0 \pmod{2^{p+1}}$, setting $a=0$ (which implies $y_a = 0$) in~\eqref{eq:weight_expansion} yields~\eqref{eq:char_crit_1}:
    \begin{equation}\label{eq:interim1}
    w \cdot x = 0 \pmod{2^{p+1}}.
    \end{equation}
    Similarly, setting $x=0$ in~\eqref{eq:weight_expansion} isolates~\eqref{eq:char_crit_2}:
    \begin{equation}\label{eq:interim2}
    w \cdot y_a = 2^{p - \ell} f(a) \pmod{2^{p+1}}.
    \end{equation}
    Substituting~\eqref{eq:interim1} and~\eqref{eq:interim2} back into~\eqref{eq:weight_expansion}, we find that $-2 \left( w \cdot (x \ast y_a) \right) = 0 \pmod{2^{p+1}}$, yielding~\eqref{eq:char_crit_3}.
    To prove sufficiency, we simply substitute the conditions~\eqref{eq:char_crit_1}--\eqref{eq:char_crit_3} back into the left-hand side of~\eqref{eq:weight_expansion}, which immediately recovers the original modular equation~\eqref{eq:orig_char}.
\end{proof}

\begin{remark}\label{rmk:inverse}
    From the above characterization, it is immediate that within a CSS code, a physical rotation $U(p,w)$ realizes the logical gate $\logicalU$ if and only if $U(p,-w)$ realizes $\logicalU^\dag$. This follows directly from the diagonal representation
    \begin{equation*}
        \logicalU^\dag = \diag \left(\exp{-\iota \frac{\pi}{2^{\ell}}f(a)} : a \in \mathbb{F}_2^k \right).
    \end{equation*}
\end{remark}
\begin{remark}\label{rmk:logical_op}
  The modular criteria in~\eqref{eq:char_crit_1} and~\eqref{eq:char_crit_3} merely guarantee that the physical rotation $U(p,w)$ acts as a logical operator. To see this, we apply the gate to an arbitrary logical basis state $\ket{a}$:
   \begin{align*}
       U(p,w) \ket{a}_L &= \frac{1}{\sqrt{|C_2|}} \sum_{x \in C_2} \exp{\iota \frac{\pi}{2^{p}}  w \cdot (x \oplus y_a)} \ket{x \oplus y_a} \\
       &= \frac{1}{\sqrt{|C_2|}} \sum_{x \in C_2} \exp{\iota \; \frac{\pi}{2^{p}}  \left( w \cdot x - 2 w \cdot (x \ast y_a) + w \cdot y_a \right)} \ket{x \oplus y_a} \\
       &= \exp{\iota \; \frac{\pi}{2^{p}} \; \left(w \cdot y_a \right)} \ket{a}_L,
   \end{align*}
   the final equality being obtained by an application of~\eqref{eq:char_crit_1} and~\eqref{eq:char_crit_3}. Thus, realizing a specific target logical gate $\logicalU$ depends entirely on satisfying the modular equation~\eqref{eq:char_crit_2} over the coset space.
\end{remark}

The following remark provides a general formula for the logical diagonal gates realizable via transversal physical $Z$-rotations within a CSS code. From this, we observe that such logical gates are limited to single-qubit $Z$-rotations and multi-controlled-$Z$ rotations. Both this formula and the observation were independently derived by Camps-Moreno et al.~\cite{CLMRS2026}.
\begin{remark}\label{rmk:logicalgatestructure}
The function $f$ in Theorem~\ref{thm:refined_fault_tolerant_logical_diag_gates} cannot be chosen arbitrarily. This restriction arises because $f$ expands as
    \begin{equation*}
         2^{p-\ell} f(a) = w \cdot y_a = \sum_{i=1}^{k} {(-2)}^{i-1} \sum_{1 \le j_1 < \cdots < j_i \le k} \left( \prod_{q=1}^i a({j_q}) \right) w \cdot (y_{j_1} \ast \cdots \ast y_{j_i}) \pmod{2^{p+1}}.
    \end{equation*}
    Consequently, the logical gate $\logicalU$ decomposes into a product of single-qubit dyadic $Z$-rotations and multi-controlled dyadic $Z$-rotations, whose realization is governed entirely by the coset basis $\{y_1, \ldots, y_k\}$ and its $t$-fold Schur products (for $t \ge 2$): 
    \begin{align*}
        \logicalU &=  \left( \prod_{i=1}^{k} R_Z\left(\frac{\pi}{2^p} w \cdot y_i\right)_{e_i} \right) \cdot \left( \prod_{1 \le j_1 < j_2 \le k} C_{j_1, j_2}\left( R_Z\left( -\frac{\pi}{2^{p-1}} w \cdot (y_{j_1} \ast y_{j_2}) \right) \right) \right)   \\
        &\qquad \qquad \qquad \qquad \qquad \qquad \cdots \ 
        C_{1, \ldots, k}\left( R_Z\left( {(-1)}^{k-1} \frac{\pi}{2^{p-k+1}} w \cdot (y_{1} \ast \cdots \ast y_{k}) \right) \right).
    \end{align*}
\end{remark}

\bigskip

For certain choices of the function $f: \mathbb{F}_2^k \to \mathbb{Z}$, to satisfy the conditions~\eqref{eq:char_crit_1}--\eqref{eq:char_crit_3} in Theorem~\ref{thm:refined_fault_tolerant_logical_diag_gates}, a stronger condition than $p \ge \ell$ is needed. For example, take $k=3$ and consider the function $f(a) = 1$ if $a = 111$ and $f(a) = 0$ otherwise. The logical gate defined by this $f$ is the $2$-controlled $Z$-rotation $C^{(2)}\left(R_Z(\frac{\pi}{2^{\ell}})\right)$. In particular, for $\ell=0$, this is the $\ccz$ gate. Observe by Theorem~\ref{thm:refined_fault_tolerant_logical_diag_gates} that a transversal physical $Z$-rotation $U(p,w)$ realizes logical $\ccz$ only if $p \ge 0$. However, for $p \le 1$, transversal physical $Z$-rotations lie within the Clifford group, whereas $\ccz$ is a non-Clifford gate in the third level of the Clifford hierarchy. So, one intuitively expects that $U(p,w)$ can realize $\ccz$ only if $p \ge 2$. Indeed, Proposition~\ref{prop:addressable_multi_controlled} below shows that the inequality $p \ge \ell$ must be further strengthened for $U(p,w)$ to realize an addressable multi-controlled-$Z$ rotation. 

Let $(C_1, C_2)_{\tn{CSS}}$ be an $[[n,k]]$ CSS code, and consider an addressable logical $(m-1)$-controlled-$R_Z\left(\frac{\pi}{2^\ell}\right)$ gate acting on the $k$-qubit code space, where $2 \leq m \leq k$. Parameterized by an order-$m$ target address tensor $A \in \mathbb{F}_2^{k \times \cdots \times k}$, the logical gate $C^{(m-1)}_A \left(R_Z\left(\frac{\pi}{2^{\ell}} \right) \right)$ is given by
\begin{align*}
   C^{(m-1)}_A \left(R_Z\left(\frac{\pi}{2^{\ell}} \right) \right) 
   &= \prod_{1 \leq i_1 < \cdots < i_m \leq k} C^{(m-1)}_{i_1, \ldots, i_m} \left( R_Z\left(\frac{\pi}{2^{\ell}} A\left(i_1, \ldots, i_m\right) \right) \right) \\[5pt]
   &=  \diag \left(\exp{\iota \frac{\pi}{2^{\ell}} \left(\sum_{1 \leq i_1 < \cdots < i_m \leq k} A\left(i_1, \ldots, i_m \right)  a\left(i_1\right) \cdots a\left(i_m\right) \right)} : a \in \mathbb{F}_2^k \right),
\end{align*}
where at least one upper-triangular tensor entry $A(i_1, \ldots, i_m)$ is odd. 
\begin{proposition}\label{prop:addressable_multi_controlled}
    An $n$-qubit transversal physical $Z$-rotation $U(p,w)$ realizes the logical gate $\addmczrot$ on $(C_1, C_2)_{\tn{CSS}}$ if and only if $p \geq \ell+m-1$ and for all $x \in C_2$ and $a \in \mathbb{F}_2^k$, the following modular condition holds:
    \begin{align*}
        w \cdot x &= 0 \pmod{2^{p+1}}, \\
        w \cdot y_a &= 2^{p-\ell} \; \left(\sum_{1 \leq i_1 < \cdots < i_m \leq k} A\left(i_1, \ldots, i_m \right)  a\left({i_1}\right) \cdots a\left(i_m\right) \right) \pmod{2^{p+1}}, \\[4pt]
        w \cdot \left( x \ast y_a \right) &= 0 \pmod{2^p}.
    \end{align*}
    where $y_a \in C_1 / C_2$ is the logical-$X$ representative corresponding to the logical state $\ket{a}$.
\end{proposition}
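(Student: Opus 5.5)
The plan is to apply Theorem~\ref{thm:refined_fault_tolerant_logical_diag_gates} with
\[
f(a)=\sum_{1 \le i_1<\cdots<i_m \le k} A(i_1,\ldots,i_m)\,a(i_1)\cdots a(i_m).
\]
Here $f(0^k)=0$. Since some entry $A(i_1,\ldots,i_m)$ is odd, the indicator vector of $\{i_1,\ldots,i_m\}$ gives an $a_0$ with $f(a_0)$ odd. The theorem therefore applies directly. The three modular conditions in the statement are exactly \eqref{eq:char_crit_1}--\eqref{eq:char_crit_3}, together with $p \ge \ell$. Sufficiency then follows once $p\ge \ell+m-1$, since this implies $p\ge\ell$. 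The only new content is necessity of the stronger bound $p \ge \ell+m-1$, given that $U(p,w)$ realizes the gate.

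For necessity, I would use the expansion in Remark~\ref{rmk:logicalgatestructure}. Fix an upper-triangular tuple $(i_1,\ldots,i_m)$ with $A(i_1,\ldots,i_m)$ odd. Restrict to vectors $a$ supported on $S=\{i_1,\ldots,i_m\}$ and apply M\"obius inversion (inclusion--exclusion) over subsets $T\subseteq S$ to the identity $w\cdot y_a = 2^{p-\ell} f(a) \pmod{2^{p+1}}$. On the right-hand side, $f$ restricted to supports inside $S$ equals $A(i_1,\ldots,i_m)\prod_{j\in S}a(j)$, since all other terms need an index outside $S$. Its top M\"obius coefficient is therefore $A(i_1,\ldots,i_m)$, and the right-hand side contributes $2^{p-\ell}A(i_1,\ldots,i_m)$. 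On the left-hand side, the expansion shows that the top coefficient of $a\mapsto w\cdot y_a$ is $(-2)^{m-1}\, w\cdot(y_{i_1}\ast\cdots\ast y_{i_m})$. Hence
\[
(-2)^{m-1}\, w\cdot(y_{i_1}\ast\cdots\ast y_{i_m}) \equiv 2^{p-\ell}A(i_1,\ldots,i_m) \pmod{2^{p+1}}.
\]
Both congruences hold modulo $2^{p+1}$, and the M\"obius inversion is an integer combination, so the combined congruence also holds modulo $2^{p+1}$.

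Now compare 2-adic valuations. The left side is divisible by $2^{m-1}$. If $p-\ell<m-1$, then $p-\ell<p+1$ (as $\ell\ge 0$), so the right side has 2-adic valuation exactly $p-\ell$, because $A(i_1,\ldots,i_m)$ is odd. Valuation $p-\ell<m-1$ cannot match a multiple of $2^{m-1}$ modulo $2^{p+1}$, because $2^{p+1}$ is divisible by $2^{p-\ell+1}$. Contradiction, so $p\ge \ell+m-1$.

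The main obstacle is the bookkeeping in the M\"obius inversion step. The expansion of $w\cdot y_a$ is an identity over $\mathbb{Z}$, valid because $y_a=\bigoplus_i a(i)y_i$. I need to check that inverting over the Boolean lattice on $S$ isolates precisely the single Schur-product term for $S$, and that the resulting congruence survives modulo $2^{p+1}$. Both hold because every step is an integer linear combination of congruences.
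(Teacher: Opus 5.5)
Your proposal is correct and follows essentially the same route as the paper. You apply Theorem~\ref{thm:refined_fault_tolerant_logical_diag_gates} with the multilinear $f$, use inclusion--exclusion over subsets of an $m$-set $S$ with odd address entry to isolate $(-2)^{m-1}\, w\cdot(y_{i_1}\ast\cdots\ast y_{i_m}) \equiv 2^{p-\ell}A(i_1,\ldots,i_m) \pmod{2^{p+1}}$, and conclude by a parity comparison. Your M\"obius inversion over all $T\subseteq S$ is, if anything, slightly more careful than the paper's write-up: the paper only discards lower-order Schur terms indexed by subsets of $[m-1]$, but proper subsets of $S$ containing the last index also appear in the expansion. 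These vanish for the same reason, since $f$ is zero on the indicator vector of every proper subset of $S$.
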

The proof of this proposition is deferred to Appendix~\ref{app:prop_addressable_multi_controlled}.

\section{Appending Framework: Constructing CSS Codes That Realize Logical Diagonal Gates Fault-Tolerantly}\label{app_framework}

In this section, we propose a systematic framework for extending a CSS code to realize an arbitrary number of target logical diagonal gates (single-qubit $Z$-rotations or multi-controlled-$Z$ rotations) via transversal physical $Z$-rotations. Specifically, given an $[[n',k']]$ primary CSS code and an arbitrary but fixed sequence of logical diagonal gates $\custlogicalU{1}, \ldots, \custlogicalU{\numtargates}$ (for any $\numtargates \in \mathbb{N}$) as described in Theorem~\ref{thm:refined_fault_tolerant_logical_diag_gates}, our objective is to construct a ``derived'' CSS code that preserves the $k'$ logical qubits while realizing these logical gates transversally. Furthermore, we require this derived code to maintain a minimum distance comparable to that of the primary code. 

Consider a primary CSS code, denoted by $(C_1', C_2')_{\tn{CSS}}$, with parameters $[[n',k', \geq d']]$, where the minimum $X$- and $Z$-distances satisfy
\begin{align*}
    d_X' = \min_{x \in C_1' \setminus C_2'} w_H(x) &\geq d', \\
    d_Z' = \min_{z \in (C_2')^\perp \setminus (C_1')^\perp} w_H(z) &\geq d_{\min}((C_2')^\perp) \geq d'.
\end{align*}
Let $G_{C_2'}$ and $G_{C_1' / C_2'}$ denote the generator matrices for the code $C_2'$ and the coset space $C_1' / C_2'$, respectively. The generator matrix for $C_1'$, denoted by $G_{C_1'}$, can then be expressed as the block matrix
\begin{equation*}
    G_{C_1'} = \begin{bmatrix}
        G_{C_1' / C_2'} \\ G_{C_2'}
    \end{bmatrix}.
\end{equation*}
For each $i \in \mathbb{N}$, suppose the diagonal representation of $\custlogicalU{i}$ is given by
\begin{equation*}
    \custlogicalU{i} = \diag \left(\exp{\iota \frac{\pi}{2^{\ell_i}} f_i(a)} : a \in \mathbb{F}_2^{k'} \right),
\end{equation*}
where $\ell_i$ is a non-negative integer and $f_i : \mathbb{F}_2^{k'} \rightarrow \mathbb{Z}$ is a function as in Theorem~\ref{thm:refined_fault_tolerant_logical_diag_gates}.

The primary CSS code $(C_1', C_2')_{\tn{CSS}}$ need not inherently realize any of the target gates $\custlogicalU{i}$ transversally. To handle this, for each target logical gate $\custlogicalU{i}$, we append dedicated matrices to the generator matrices $G_{C_2'}$ and $G_{C_1'/C_2'}$ of the primary CSS code to obtain augmented generator matrices $G_{C_2}$ and $G_{C_1/C_2}$ that define our derived CSS code $(C_1, C_2)_{\tn{CSS}}$. Each appended matrix is structured to ensure that the modular equations~\eqref{eq:char_crit_1}--\eqref{eq:char_crit_3} required to realize the specific logical gate $\custlogicalU{i}$ in $(C_1, C_2)_{\tn{CSS}}$ are satisfied for some transversal physical $Z$-rotation $U(p_i,w_i)$. Moreover, any logical diagonal gate realized via transversal physical $Z$-rotations in the primary code remains transversally realizable within the derived code. 

To execute this plan, carefully chosen auxiliary matrices $G_1^{(1)} \in \mathbb{F}_2^{k' \times n_1}, \ldots, G_1^{(\numtargates)} \in \mathbb{F}_2^{k' \times n_{\numtargates}}$ are sequentially appended to $G_{C_1' / C_2'}$, where each $G_1^{(i)}$ is dedicated to realizing the logical gate $\custlogicalU{i}$. (The issue of precisely how these matrices $G_1^{(i)}$ are chosen will be dealt with later.) Thus, we have
\begin{equation*}
    G_{C_1 / C_2} = \begin{bmatrix}
            G_{C_1' / C_2'} & G_1^{(1)} & G_1^{(2)} & \cdots & G_1^{(\numtargates)}
        \end{bmatrix}.
\end{equation*}        
Appending the matrix $G_{C_2'}$ with zeros will result in a trivial $Z$-distance ($d_Z=1$) for the resulting CSS code. Instead, we augment $G_{C_2'}$ with another carefully chosen set of auxiliary matrices $G_2^{(1)}, \ldots, G_2^{(\numtargates)}$, each $G_2^{(i)}$ being a $k_2^{(i)} \times n_i$ matrix over $\mathbb{F}_2$. Note that $G_1^{(i)}$ and $G_2^{(i)}$ have the same number of columns so that one can be stacked atop the other. Thus,
\begin{equation*}
G_{C_2} = \begin{bmatrix}
            G_{C_2'} & 0 & 0 & \cdots & 0 \\
            0 & G_2^{(1)} & 0 & \cdots & 0 \\
            0 & 0 & G_2^{(2)} & \cdots & 0 \\
            \vdots & \vdots & \vdots & \ddots & \vdots \\
            0 & 0 & 0 & \cdots & G_2^{(\numtargates)}
        \end{bmatrix}
\end{equation*}
and
\begin{equation*}
 G_{C_1} \ = \ \begin{bmatrix}
            G_{C_1 / C_2} \\ G_{C_2} \end{bmatrix}
            \ = \ \begin{bmatrix}
            G_{C_1' / C_2'} & G_1^{(1)} & G_1^{(2)} & \cdots & G_1^{(\numtargates)} \\
            G_{C_2'} & 0 & 0 & \cdots & 0 \\
            0 & G_2^{(1)} & 0 & \cdots & 0 \\
            0 & 0 & G_2^{(2)} & \cdots & 0 \\
            \vdots & \vdots & \vdots & \ddots & \vdots \\
            0 & 0 & 0 & \cdots & G_2^{(\numtargates)}
        \end{bmatrix}.
\end{equation*}
The lemma below, proved in Appendix~\ref{app:lem_params}, establishes the parameters of the derived code ${(C_1, C_2)}_{\tn{CSS}}$.
\begin{lemma}\label{lem:params}
    ${(C_1, C_2)}_{\tn{CSS}}$ is an $[[ n' + \sum_{i=1}^{\numtargates} n_i, k', \min\{d_X, d_Z\} ]]$ code, where $d_X \geq d'$ and 
    \begin{equation*}
        d_Z \geq \min\{d_{\min}((C_2')^\perp), d_{\min}((C_2^{(1)})^\perp), \ldots, d_{\min}((C_2^{(\numtargates)})^\perp)\},
    \end{equation*}
    with $C_2^{(i)}$ being the code generated by the rows of $G_2^{(i)}$ for each $i \in [\numtargates]$.
\end{lemma}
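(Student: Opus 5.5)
The plan has three parts: first show $C_2 \subseteq C_1$ and count dimensions, then bound $d_X$ by projecting onto the first $n'$ coordinates, and finally bound $d_Z$ by decomposing $C_2^\perp$ blockwise.

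\textbf{Nesting and dimension.} The blocklength is $n'+\sum_i n_i$ by construction. $C_2 \subseteq C_1$ holds because the rows of $G_{C_2}$ appear among the rows of $G_{C_1}$. For the dimension, I will argue that the rows of $G_{C_1}$ are linearly independent, provided the rows of each $G_2^{(i)}$ are independent (I will assume the $G_2^{(i)}$ are full-rank generator matrices). Suppose a combination $u G_{C_1/C_2} + v G_{C_2'} + \sum_i v_i G_2^{(i)}$ vanishes. Its restriction to the first $n'$ coordinates is $u G_{C_1'/C_2'} + v G_{C_2'}=0$, which forces $u=0$ and $v=0$ because $G_{C_1'}$ has full rank. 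Then each block gives $v_i G_2^{(i)}=0$, so $v_i=0$. Hence $\dim C_1 - \dim C_2 = k'$.

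\textbf{$X$-distance.} Let $\pi$ be the projection onto the first $n'$ coordinates. Any $x\in C_1$ can be written $x = uG_{C_1/C_2} + c$ with $c\in C_2$, and then $\pi(x) = uG_{C_1'/C_2'} + c'$ for some $c'\in C_2'$. If $x \notin C_2$, then $u\neq 0$. Since the rows of $G_{C_1'/C_2'}$ are independent modulo $C_2'$, we get $\pi(x)\in C_1'\setminus C_2'$. Therefore $w_H(x)\ge w_H(\pi(x))\ge d'$.

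\textbf{$Z$-distance.} Because $G_{C_2}$ is block diagonal, $C_2 = C_2'\oplus C_2^{(1)}\oplus\cdots\oplus C_2^{(M)}$ as a direct sum over disjoint coordinate blocks. Its dual therefore decomposes as $C_2^\perp = (C_2')^\perp \oplus (C_2^{(1)})^\perp\oplus\cdots$. Any $z\in C_2^\perp\setminus C_1^\perp$ is nonzero, so at least one of its block components is a nonzero codeword of the corresponding dual code. That component alone has weight at least the minimum distance of that dual, which gives the stated lower bound on $d_Z$. (Codes with trivial dual have $d_{\min}$ taken as $\infty$.)

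The only step needing care is the dimension count, specifically the full-rank assumption on the appended blocks. The distance bounds then follow routinely from the projection and block-decomposition arguments.
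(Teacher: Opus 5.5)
Your proposal is correct and follows essentially the same route as the paper. For $d_X$, you restrict a codeword of $C_1\setminus C_2$ to the primary $n'$ coordinates, where it lands in $C_1'\setminus C_2'$; for $d_Z$, you use the block-diagonal direct-sum decomposition of $C_2^\perp$. Your explicit dimension count is a welcome addition (the paper only says it follows from the construction), but the full-rank assumption on the $G_2^{(i)}$ is not needed. The fact that $uG_{C_1/C_2}\in C_2$ forces $u=0$ already gives $\dim C_1-\dim C_2=k'$.
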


We now establish the precise conditions on the matrices $G_1^{(i)}$ and $G_2^{(i)}$ required to realize $\custlogicalU{i}$ via a transversal $Z$-rotation $U(p_i,w_i)$ applied to the physical qubits of $(C_1, C_2)_{\tn{CSS}}$. This rotation is applied exclusively to the physical qubits corresponding to the block-column of $G_{C_1}$ that contains $G_1^{(i)}$ and $G_2^{(i)}$. Specifically, the vector $w_i$ takes the form
\begin{equation*}
    w_i = \left( \mathbf{0}_{n'}, \mathbf{0}_{n_1}, \ldots, \mathbf{0}_{n_{i-1}}, \widetilde{w}_i, \mathbf{0}_{n_{i+1}}, \ldots, \mathbf{0}_{n_{\numtargates}} \right).
\end{equation*}

Let $y_1', \ldots, y_{k'}'$ and $y_1^{(i)}, \ldots, y_{k'}^{(i)}$ denote the row vectors of the matrices $G_{C_1' / C_2'}$ and $G_1^{(i)}$, respectively.
The appending procedure merges these components so that the rows of $G_{C_1 / C_2}$ take the block form 
\begin{align*}
    y_j &= \left(y_j', y_j^{(1)}, \ldots, y_j^{(\numtargates)} \right) \text{ for } j \in [k'].
\end{align*}
Therefore, for any $a \in \mathbb{F}_2^{k'}$, the corresponding logical-$X$ representative is given by $y_a = \left( y_a', y_a^{(1)}, \ldots, y_a^{(\numtargates)} \right)$, where 
\begin{equation*}
    y_a' = \bigoplus_{j=1}^{k'} a(j) y_j' \quad \text{and} \quad y_a^{(i)} = \bigoplus_{j=1}^{k'} a(j) y_j^{(i)} \text{ for } i \in [\numtargates].
\end{equation*}
Moreover, because $C_2$ is constructed as a direct sum, for any $x \in C_2$, there exist $x' \in C_2'$ and $x_i \in C_2^{(i)} := \text{rowspace}(G_2^{(i)})$ such that $x=(x',x_1, \ldots, x_{\numtargates})$.

By Theorem~\ref{thm:refined_fault_tolerant_logical_diag_gates} and the preceding decomposition, the physical gate $U(p_i, w_i)$ realizes the logical gate $\custlogicalU{i}$ if and only if for all $x_i \in C_2^{(i)}$ and $a \in \mathbb{F}_2^{k'}$, the following modular constraints hold:
\begin{align}
    \widetilde{w}_i \cdot x_i &= 0 \pmod{2^{p_i+1}} \nonumber  \\
    \widetilde{w}_i \cdot y_a^{(i)} &= 2^{p_i - \ell_i} f_i(a) \pmod{2^{p_i+1}} \label{eq:app_generic_equations} \\
    \widetilde{w}_i \cdot \bracket{x_i \ast y_a^{(i)}} &= 0 \pmod{2^{p_i}} \nonumber. 
\end{align}
These conditions make it clear that realizing the logical gate $\custlogicalU{i}$ within the code ${(C_1, C_2)}_{\tn{CSS}}$ depends entirely on the design of the auxiliary matrices $G_1^{(i)}$ and $G_2^{(i)}$, as well as the applied $Z$-rotation $U(p_i, w_i)$. In subsequent sections, we will detail appending-matrix construction strategies along with the corresponding transversal physical $Z$-rotations tailored to satisfy these conditions for single-qubit and multi-controlled-$Z$ rotations.


The structure of $x \in C_2$ and $y_a \in C_1/C_2$ has one other important consequence, again by virtue of Theorem~\ref{thm:refined_fault_tolerant_logical_diag_gates}: the physical $Z$-rotation $U(p,w')$ realizes a logical diagonal gate $U_L$ in the primary CSS code $(C_1', C_2')_{\tn{CSS}}$ if and only if $U(p,w)$ realizes the same gate $U_L$ in $(C_1, C_2)_{\tn{CSS}}$, when $w$ is set to be
$$
w = (w',\mathbf{0}_{n_1}, \ldots, \mathbf{0}_{n_M}).
$$
In particular, this means that if a logical diagonal gate $U_L$ is realized in the primary CSS code by some transversal physical $Z$-rotation $U_P$, then the same gate $U_L$ is realized in $(C_1, C_2)_{\tn{CSS}}$ by applying the same transversal $Z$-rotation $U_P$ to the $n'$ physical qubits that correspond to the primary code.

In summary, our appending framework provides a flexible means of extending a primary CSS code to support fault-tolerant implementations of desired logical dyadic $Z$-rotations, single-qubit as well as multi-controlled. For instance, one can realize both $T$ and $\text{CS}$ gates within the same code. As noted earlier, such a capability is particularly valuable when designing codes for quantum algorithms that require a diverse set of diagonal operations, as all gates are consolidated within a single code to maintain robust error correction. However, the cost we pay for this is the increased physical qubit overhead as the number of target logical gates grows. 

\section{Fault-tolerant Addressable Logical Single-Qubit $Z$-Rotations in CSS codes}\label{ft_sq}

For a non-negative integer $\ell$, we begin this section by characterizing CSS codes that transversally support an addressable logical $R_Z\left(\frac{\pi}{2^\ell}\right)$ gate for an arbitrary but fixed addressability configuration. The following proposition is an immediate consequence of Theorem~\ref{thm:refined_fault_tolerant_logical_diag_gates}.
\begin{proposition}\label{prop:target_addr_logical_sq}
    Let ${(C_1, C_2)}_{\tn{CSS}}$ be an ${[[n,k]]}$ CSS code. For any $A \in \mathbb{F}_2^k$, the code realizes the addressable logical gate $\addrsqzrot$
    via a transversal physical $Z$-rotation $U(p,w)$ if and only if $p \geq \ell$ and, for all $x \in C_2$ and $a \in \mathbb{F}_2^k$, the following modular conditions hold:
    \begin{align*}
        w \cdot x &= 0 \pmod{2^{p + 1}}, \qquad
        w \cdot y_a = 2^{p - \ell} w_H(a \ast A) \pmod{2^{p+1}},\\
        &\qquad \qquad \qquad w \cdot (x \ast y_a) = 0 \pmod{2^{p}}.
    \end{align*}
\end{proposition}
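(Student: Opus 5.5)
This is a direct specialization of Theorem~\ref{thm:refined_fault_tolerant_logical_diag_gates}. By Definition~\ref{def:addressed_single_qubit}, the addressed gate has the diagonal form
\begin{equation*}
\addrsqzrot = \diag\left(\exp{\iota \frac{\pi}{2^\ell} f(a)} : a \in \mathbb{F}_2^k\right), \qquad f(a) := w_H(a \ast A).
\end{equation*}
Here $f:\mathbb{F}_2^k \to \mathbb{Z}$ is integer valued. Substituting this $f$ into conditions~\eqref{eq:char_crit_1}--\eqref{eq:char_crit_3} together with the constraint $p \ge \ell$ reproduces the three congruences in the proposition verbatim. So the proof consists of checking the hypotheses of Theorem~\ref{thm:refined_fault_tolerant_logical_diag_gates}.

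\textbf{The two hypotheses on $f$.}
\begin{itemize}
\item We need $f(0^k) = 0 \pmod{2^{\ell+1}}$. This holds trivially, since $0^k \ast A = 0^k$ and hence $f(0^k)=0$.
\item We need some $a_0$ with $f(a_0)$ odd. For $A \neq 0^k$, pick an index $i \in \supp(A)$ and set $a_0 = e_i$. Then $f(e_i) = w_H(e_i \ast A) = A(i) = 1$, which is odd.
\end{itemize}
With both hypotheses verified, the theorem applies and gives exactly the stated equivalence.

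\textbf{The degenerate case $A = 0^k$.} This is the only point needing care. Here the target gate is the logical identity, and $f \equiv 0$ has no odd value, so the theorem's hypothesis fails. I would handle this case directly from Lemma~\ref{lem:fault_tolerant_logical_diag_gates} with $\lambda_a = 0$. That lemma says $U(p,w)$ realizes the identity iff $w\cdot(x\oplus y_a) = 0 \pmod{2^{p+1}}$ for all $x \in C_2$ and $a \in \mathbb{F}_2^k$. The expansion
\begin{equation*}
w\cdot(x\oplus y_a) = w\cdot x + w\cdot y_a - 2\,w\cdot(x\ast y_a),
\end{equation*}
as in the proof of Theorem~\ref{thm:refined_fault_tolerant_logical_diag_gates}, then yields the three congruences with right-hand side $2^{p-\ell}\cdot 0 = 0$. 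This goes through for any non-negative $p$.

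Strictly speaking, $p \ge \ell$ is then not forced when $A = 0^k$. I would either note that the proposition implicitly assumes $A \neq 0^k$, as the paper's convention of requiring an odd entry suggests, or remark that the trivial case is excluded.
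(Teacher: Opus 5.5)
Your proposal is correct and matches the paper, which calls this proposition an immediate consequence of Theorem~\ref{thm:refined_fault_tolerant_logical_diag_gates} with $f(a)=w_H(a\ast A)$ and does not spell out the hypothesis checks you supply. Your handling of $A=0^k$ is a valid refinement that the paper leaves implicit by tacitly assuming $A\neq 0^k$: for $\ell\ge 1$ the condition $p\ge\ell$ can genuinely fail there, since $U(0,w)$ with $w \bmod 2$ a nonzero element of $C_1^\perp$ is a $Z$-stabilizer and realizes the logical identity.
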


We now provide an alternative characterization of CSS codes capable of supporting the addressable logical gate $\addrsqzrot$, decomposing the modular condition on the coset space $C_1/C_2$ into equivalent constraints on its basis vectors. Furthermore, we use this characterization to construct the appending matrices introduced in Section~\ref{app_framework} to realize these gates. While Proposition~\ref{prop:target_addr_logical_sq} allows any $p \ge \ell$, for our purposes, it suffices to take $p=\ell$.

We begin with a technical lemma that reduces modular constraints over the entire code to equivalent conditions over its basis vectors.
\begin{lemma}\label{lem:schur_basis_vectors}    
    Let $C \subseteq \mathbb{F}_{2}^n$ be a binary code with basis $\{x_1, \ldots, x_k\}$, and let $A \in \mathbb{F}_2^k$ be a fixed vector. For any $a \in \mathbb{F}_2^k$, define the corresponding codeword $x_a = \bigoplus_{i=1}^k a(i) x_i$. For any integer vector $w \in \mathbb{Z}^n$, the modular condition
    \begin{equation}\label{eq:orig_mod_code}
        w \cdot x_a = w_H(a \ast A) \pmod{2^{\ell+1}}    
    \end{equation}    
    holds for all $a \in \mathbb{F}_2^k$ if and only if the following conditions are satisfied for all $i \in [k]$ and ordered indices $1 \leq i_1 < \cdots < i_{j} \le k$ with $j \in \{2, \ldots, \ell+1\}$:
    \begin{align} 
        w \cdot x_{i} &= A(i) \pmod{2^{\ell+1}}, \label{eq:basis_1} \\ 
        w \cdot (x_{i_1} \ast \cdots \ast x_{i_{j}}) &= 0 \pmod{2^{\ell-j+2}}. \label{eq:basis_2}  
    \end{align}
\end{lemma}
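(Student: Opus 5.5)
The plan is to use the inclusion--exclusion expansion from Section~\ref{sec:preliminaries_and_notation},
\begin{equation*}
    w \cdot x_a = \sum_{j=1}^{k} (-2)^{j-1} \sum_{1 \le i_1 < \cdots < i_j \le k} \Big(\prod_{q=1}^{j} a(i_q)\Big)\, w\cdot(x_{i_1} \ast \cdots \ast x_{i_j}),
\end{equation*}
together with $w_H(a \ast A) = \sum_{i} a(i) A(i)$ over $\mathbb{Z}$, with $A(i)$ regarded as $0$ or $1$. Reduced modulo $2^{\ell+1}$, every term with $j \ge \ell+2$ vanishes because it carries the factor $2^{j-1}$, so only the Schur products of order at most $\ell+1$ matter. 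This matches the range $j \in \{2,\ldots,\ell+1\}$ in the statement.

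\emph{Sufficiency.} Substitute \eqref{eq:basis_1} and \eqref{eq:basis_2} into the expansion. The $j=1$ terms give $\sum_i a(i)A(i) = w_H(a\ast A)$ modulo $2^{\ell+1}$. For $2 \le j \le \ell+1$, the quantity $w\cdot(x_{i_1}\ast\cdots\ast x_{i_j})$ is divisible by $2^{\ell-j+2}$, so after multiplying by $2^{j-1}$ the term is divisible by $2^{\ell+1}$. The higher-order terms vanish as noted above. This yields \eqref{eq:orig_mod_code}.

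\emph{Necessity.} I will proceed by strong induction on the support size $j = w_H(a)$, showing \eqref{eq:basis_2} for the index set $\supp(a)$. Taking $a = e_i$ gives \eqref{eq:basis_1} directly. Now let $a$ have support $\{i_1<\cdots<i_j\}$ with $2\le j\le \ell+1$. The induction hypothesis covers every proper sub-tuple $\{i'_1<\cdots<i'_t\}$ of the support with $t\ge 2$: each such term $2^{t-1}\, w\cdot(x_{i'_1}\ast\cdots\ast x_{i'_t})$ is $0 \bmod 2^{\ell+1}$. The singleton terms sum to $w_H(a\ast A)$ modulo $2^{\ell+1}$. Applying \eqref{eq:orig_mod_code} to this $a$ and cancelling these known contributions leaves
\begin{equation*}
    (-2)^{j-1}\, w\cdot(x_{i_1}\ast\cdots\ast x_{i_j}) = 0 \pmod{2^{\ell+1}}.
\end{equation*}
Since $(-1)^{j-1}$ is a unit, this gives $w\cdot(x_{i_1}\ast\cdots\ast x_{i_j}) = 0 \pmod{2^{\ell-j+2}}$, which closes the induction.

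The only delicate step is the bookkeeping of moduli in the induction. I must check that each lower-order hypothesis, stated modulo $2^{\ell-t+2}$, becomes exactly $0 \bmod 2^{\ell+1}$ after multiplication by $2^{t-1}$. I must also check that the dividing step $2^{j-1}\mid$ correctly produces the exponent $\ell-j+2$, which stays at least $1$ for $j\le \ell+1$.
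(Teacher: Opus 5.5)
Your proposal is correct and follows essentially the same route as the paper. Sufficiency uses the inclusion--exclusion expansion truncated modulo $2^{\ell+1}$. Necessity evaluates at $a=e_i$ and then inducts on the support size, cancelling the lower-order Schur products via the induction hypothesis to isolate $(-2)^{j-1}\, w\cdot(x_{i_1}\ast\cdots\ast x_{i_j})$.
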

This lemma is proved in Appendix~\ref{app:lem_schur_basis_vectors}. Applying the lemma to Proposition~\ref{prop:target_addr_logical_sq} yields the following characterization.
\begin{proposition}\label{prop:target_addr_log_sq_basis}    
    Let ${(C_1, C_2)}_{\tn{CSS}}$ be an $[[n,k]]$ CSS code, where $\{y_1, \ldots, y_k\}$ is a basis of $C_1/C_2$. For any $A \in \mathbb{F}_2^k$, the code realizes the addressable logical gate $\addrsqzrot$ via a transversal physical $Z$-rotation $U(\ell,w)$ if and only if for all $x \in C_2$, $a \in \mathbb{F}_2^k$, $i \in [k]$, and ordered indices $1 \leq i_1 < \cdots < i_{j} \le k$ with $j \in \{2, \ldots, \ell+1\}$, the following modular equations hold:
    \begin{alignat*}{2}
        w \cdot x &= 0 \pmod{2^{\ell+1}},
        &\qquad w \cdot (x \ast y_a) &= 0 \pmod{2^{\ell}}, \\
        w \cdot y_i &= A(i) \pmod{2^{\ell+1}}, &\qquad 
        w \cdot (y_{i_1} \ast \cdots \ast y_{i_j}) &= 0 \pmod{2^{\ell-j+2}}.
    \end{alignat*}
\end{proposition}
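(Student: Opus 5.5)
The plan is to specialize Proposition~\ref{prop:target_addr_logical_sq} to $p=\ell$ and then replace its coset-space condition by basis-vector conditions using Lemma~\ref{lem:schur_basis_vectors}. First, set $p=\ell$ in Proposition~\ref{prop:target_addr_logical_sq}. The constraint $p\ge \ell$ is then automatic, and the factor $2^{p-\ell}$ equals $1$. So $U(\ell,w)$ realizes $\addrsqzrot$ if and only if, for all $x\in C_2$ and $a\in\mathbb{F}_2^k$,
\begin{equation*}
w\cdot x = 0 \pmod{2^{\ell+1}},\qquad w\cdot y_a = w_H(a\ast A) \pmod{2^{\ell+1}},\qquad w\cdot(x\ast y_a)=0 \pmod{2^{\ell}}.
\end{equation*}
The first and third conditions are exactly the first row of the claimed system, so they carry over verbatim.

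Next, apply Lemma~\ref{lem:schur_basis_vectors} to the middle condition. Take $x_i := y_i$ for $i\in[k]$, so that $x_a = y_a=\bigoplus_i a(i)y_i$, which matches the paper's definition of the logical-$X$ representative. The lemma is stated for a binary code $C$ with basis $\{x_1,\ldots,x_k\}$. Its content, however, is only about the map $a\mapsto \bigoplus_i a(i)x_i$ through the integer expansion of $w\cdot(\bigoplus_i a(i)x_i)$ from Section~\ref{sec:preliminaries_and_notation}. So I would apply it to the code spanned by the fixed representatives $y_1,\ldots,y_k$. These are linearly independent, because their classes form a basis of $C_1/C_2$.

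The lemma then says that the condition $w\cdot y_a = w_H(a\ast A)\pmod{2^{\ell+1}}$ for all $a$ is equivalent to two families of conditions:
\begin{itemize}
\item $w\cdot y_i = A(i)\pmod{2^{\ell+1}}$ for every $i\in[k]$;
\item $w\cdot(y_{i_1}\ast\cdots\ast y_{i_j})=0\pmod{2^{\ell-j+2}}$ for $2\le j\le \ell+1$.
\end{itemize}
These are precisely the second row of the claimed system. Combining the two rows gives the stated equivalence in both directions.

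The only point needing care, and the one I expect to be the main obstacle, is checking that Lemma~\ref{lem:schur_basis_vectors} is being applied to the same fixed representatives $y_i$ that define $y_a$ in Proposition~\ref{prop:target_addr_logical_sq}. The Schur products depend on the representatives chosen, not just on their cosets. Since both statements use the same arbitrary but fixed choice $\{y_1,\ldots,y_k\}$, this is consistent.
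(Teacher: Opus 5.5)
Your proposal is correct and follows the paper's own route: the paper obtains this proposition by setting $p=\ell$ in Proposition~\ref{prop:target_addr_logical_sq} and applying Lemma~\ref{lem:schur_basis_vectors} with $x_i=y_i$ to the coset condition $w\cdot y_a = w_H(a\ast A) \pmod{2^{\ell+1}}$. Your additional check is correct and worth making, and the paper leaves it implicit: the lemma must be applied to the same fixed representatives $y_1,\ldots,y_k$ that define $y_a$, and these are linearly independent because their cosets form a basis.
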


\subsection{Appending Matrices for a Targeted Addressable Logical Gate $\addrsqzrot$}\label{app_targ_addr_log_sq}

Given a primary $[[n',k']]$ CSS code $(C_1', C_2')_{\tn{CSS}}$ and an arbitrary but fixed addressability vector $A \in \mathbb{F}_2^{k'}$, our objective is to extend the primary code to transversally realize the addressable logical gate $\addrsqzrot$. We achieve this via the appending framework in Section~\ref{app_framework}, by deriving appending matrices tailored specifically to $\addrsqzrot$. To construct these matrices, we leverage auxiliary CSS codes in which the physical transversal $\sqzrot^{\dagger}$ realizes the logical transversal $\sqzrot$ (see Section~\ref{css_families}). 

To this end, let $(\widetilde{C}_1, \widetilde{C}_2)_{\tn{CSS}}$ be an $[[\widetilde{n}, \widetilde{k}, \geq \widetilde{d}]]$ CSS code in which the physical transversal $\sqzrot^\dagger$ realizes the logical transversal $\sqzrot$. We require $\widetilde{d}_Z \ge d_{\min} ( \widetilde{C}_2^\perp ) \ge \widetilde{d}$ so that all logical-$Z$ operators (including $Z$-stabilizers) have weight at least $\widetilde{d}$. Following standard notation, let $G_{\widetilde{C}_2}$ and $G_{\widetilde{C}_1 / \widetilde{C}_2}$ denote the generator matrices for the code $\widetilde{C}_2$ and the coset space $\widetilde{C}_1 / \widetilde{C}_2$, respectively. Let $\widetilde{y}_1, \ldots, \widetilde{y}_{\widetilde{k}}$ denote the rows of $G_{\widetilde{C}_1/\widetilde{C}_2}$. By applying Proposition~\ref{prop:target_addr_log_sq_basis} with the negative all-ones vector $w=-\mathbf{1}_{\widetilde{n}}$ (since $\sqzrot^{\dagger} = \sqzrot^{-1}$), for all $\widetilde{x} \in \widetilde{C}_2$, $a \in \mathbb{F}_2^{\widetilde{k}}$, $i \in [\widetilde{k}]$, and ordered indices $1 \leq i_1 < \cdots < i_{j} \le \widetilde{k}$ with $j \in \{2, \ldots, \ell+1\}$, the following conditions are satisfied:
\begin{alignat}{2}
    w_H(\widetilde{x}) &= 0 \pmod{2^{\ell+1}},
    &\qquad w_H(\widetilde{x} \ast \widetilde{y}_a) &= 0 \pmod{2^{\ell}}, \nonumber \\
    w_H(\widetilde{y}_i) &= -1 \pmod{2^{\ell+1}}, &\qquad 
    w_H(\widetilde{y}_{i_1} \ast \cdots \ast \widetilde{y}_{i_j}) &= 0 \pmod{2^{\ell-j+2}},
    \label{eq:secondary_css_basis_mod}
\end{alignat}
where $\widetilde{y}_a = \bigoplus_{i=1}^{\widetilde{k}} a(i) \widetilde{y}_i$.

\subsubsection{Setting up the construction} To align with the notation established in Section~\ref{app_framework}, assume that $\numtargates=1$ and $\custlogicalU{1}=\addrsqzrot$. For notational brevity, we temporarily suppress the superscript $(1)$. Thus, $G_1$ and $G_2$ are the auxiliary matrices we need for our appending construction. 
The derived CSS code ${(C_1, C_2)}_{\tn{CSS}}$ is obtained from the generator matrices
\begin{equation*}
    G_{C_2} = \begin{bmatrix}
        G_{C_2'} & 0 \\
        0 & G_2
    \end{bmatrix}, \qquad G_{C_1 / C_2} = \begin{bmatrix}
        G_{C_1' / C_2'} & G_1
    \end{bmatrix}, \qquad G_{C_1} = \begin{bmatrix}
        G_{C_1 / C_2} \\ G_{C_2}
    \end{bmatrix}. 
\end{equation*}
Let $y_1', \ldots, y_{k'}'$ denote the rows of $G_{C_1' / C_2'}$, and let $y_1, \ldots, y_{k'}$ denote the rows of the appended matrix $G_1$. The rows of $G_{C_1 / C_2}$ are $(y_1', y_1), \ldots, (y_{k'}', y_{k'})$. For any $a \in \mathbb{F}_2^{k'}$, the corresponding coset representative is given by $(y_a', y_a)$, where $y_a' = \bigoplus_{i=1}^{k'} a(i) y_i'$ and $y_a = \bigoplus_{i=1}^{k'} a(i) y_i$. 

Recall from the general appending framework that we aim to realize the addressable gate $\addrsqzrot$ via a transversal $Z$-rotation $U(\ell,w)$, where the vector $w$ is supported solely on the appended coordinates. Let $n''$ denote the number of appended coordinates, or equivalently, the number of columns in $G_1$ (or $G_2$). By Proposition~\ref{prop:target_addr_log_sq_basis}, the physical rotation $U(\ell,w)$ with $w = (\mathbf{0}_{n'}, -\mathbf{1}_{n''})$ realizes $\addrsqzrot$ within the code ${(C_1, C_2)}_{\tn{CSS}}$ if the following conditions hold for all $x \in \text{rowspace}(G_2)$, $a \in \mathbb{F}_2^{k'}$, $i \in [k']$, and ordered indices $1 \leq i_1 < \cdots < i_{j} \le k'$ with $j \in \{2, \ldots, \ell+1\}$:
\begin{alignat}{2}
    w_H(x) &= 0 \pmod{2^{\ell+1}}, &\qquad w_H(x \ast y_a) &= 0 \pmod{2^\ell}, \notag \\
    w_H(y_i) &= -A(i) \pmod{2^{\ell+1}},  &\qquad w_H(y_{i_1} \ast \cdots \ast y_{i_j}) &= 0 \pmod{2^{\ell-j+2}}.     
    \label{eq:app_mod}
\end{alignat}
A comparison of~\eqref{eq:app_mod} with \eqref{eq:secondary_css_basis_mod} motivates us to extract vectors from the code $(\widetilde{C}_1,\widetilde{C}_2)_{\tn{CSS}}$ to construct $G_1$ and $G_2$. We first describe how to do this for the case when $\lvert \supp(A)\rvert \le \widetilde{k}$, and subsequently show how this can be extended to the general case. 

\subsubsection{The bounded support case: $\lvert \supp(A)\rvert \le \widetilde{k}$} \label{sec:bounded_support} For each $i \in [k']$, the construction must satisfy $w_H(y_i) = -A(i) \pmod{2^{\ell+1}}$. This motivates the following assignment strategy: when $A(i)=0$, the row $y_i$ is set to the zero vector, and when $A(i)=1$, we populate the row with a vector $\widetilde{y}_j$. Crucially, the cross-weight condition $w_H(y_p \ast y_q) = 0 \pmod{2^{\ell}}$ strictly prohibits reusing the same vector $\widetilde{y}_j$ for distinct indices $p \ne q$ with $A(p)=A(q)=1$. Consequently, each non-zero entry in $A$ demands a unique $\widetilde{y}_j$. The assumption $\lvert \supp(A) \rvert \le \widetilde{k}$ guarantees a sufficient supply of these distinct vectors, ensuring the assignment is realizable without repetition.

To simplify the presentation of the construction, we assume that $A(i)=1$ if and only if $i \in  [\mathsf{s}]$, where $\mathsf{s}:=\lvert \supp(A) \rvert \le \widetilde{k}$. We define the rows $y_1, \ldots, y_{k'}$ of $G_1$ as follows: we set $y_i = \widetilde{y}_i$ for target rows $i \in [\mathsf{s}]$, and set $y_i=0$ for all remaining rows $\mathsf{s}+1 \le i \le k'$.

Furthermore, setting $G_2=G_{\widetilde{C}_2}$ satisfies all required modular equations to realize the logical gate $\addrsqzrot$ by applying physical $U(\ell,w)$ with $w = (\mathbf{0}_{n'}, -\mathbf{1}_{\widetilde{n}})$. Since $d_{\min}(\widetilde{C}_2^\perp) \ge \widetilde{d}$, by Lemma~\ref{lem:params}, the distance of the final code $(C_1, C_2)_{\tn{CSS}}$ is at least $\min\{d_X, d_Z\}$, where $d_X \ge d'$ and $d_Z \ge \min \{d_Z', \widetilde{d}\}$.
\subsubsection{Removing the restriction on $|\supp(A)|$} \label{sec:unrestricted_support} 
A limitation of the construction described in Section~\ref{sec:bounded_support} is that it does not directly scale: if $\lvert\supp(A)\rvert > \widetilde{k}$, we exhaust the available supply of coset vectors $\widetilde{y}_j$, and reusing them violates the required modular constraints (specifically, $w_H(y_p \ast y_q) = 0 \pmod{2^\ell}$). To overcome this, we decompose the logical gate $\addrsqzrot$ into $g := \lceil \lvert\supp(A)\rvert / \widetilde{k} \rceil$ addressable gates $\sqzrot_{A_1}, \ldots, \sqzrot_{A_g}$ such that each of the first $g-1$ address vectors $A_i$ ($i = 1,2,\ldots,g-1$) has support size equal to $\widetilde{k}$, and $|\supp(A_g)|\; \le \widetilde{k}$.
To be precise, again setting $\mathsf{s}=\lvert\supp(A)\rvert$, and letting $j_1,j_2,\ldots,j_{\mathsf{s}}$ denote the indices in $\supp(A)$ listed in increasing order, we define the address vectors for $1 \le i \le g-1$ by $A_i = \left( \mathbf{0}_{j_{(i-1)\widetilde{k}}}, A\left(j_{(i-1)\widetilde{k}}+1:j_{i\widetilde{k}}\right), \mathbf{0}_{k'-j_{i\widetilde{k}}} \right)$, and the final vector by $A_g = \left(\mathbf{0}_{j_{(g-1)\widetilde{k}}}, A\left(j_{(g-1)\widetilde{k}}+1:k'\right) \right)$, where $j_0 := 0$ and $A(\ell_1:\ell_2) := \bigl(A(\ell_1),A(\ell_1+1),\ldots,A(\ell_2)\bigr)$. In other words, $A_1$ is determined by the prefix of $A$ up to and including the $\widetilde{k}$-th $1$, $A_2$ by the next sub-block of $A$ up to the $2\widetilde{k}$-th $1$, and so on up to $A_{g-1}$, with $A_g$ capturing the remaining suffix. Altogether, we have $\addrsqzrot = \prod_{i=1}^g \sqzrot_{A_i}$. 

We now instantiate the general appending framework from Section~\ref{app_framework} with $\numtargates=g$, setting $\custlogicalU{i}=\sqzrot_{A_i}$ for each $i \in [g]$. Because $\lvert \supp(A_i) \rvert \leq \widetilde{k}$, we can leverage the strategy in Section~\ref{sec:bounded_support} to construct the corresponding appending matrices. Thus, for each $i \in [g]$, a transversal $Z$-rotation $U(\ell,w_i)$ realizes the logical gate $\sqzrot_{A_i}$, where $w_i$ is supported strictly on the appended coordinates corresponding to $\sqzrot_{A_i}$:
\begin{equation*}
    w_i = (\mathbf{0}_{n' + (i-1)\widetilde{n}}, -\mathbf{1}_{\widetilde{n}}, \mathbf{0}_{(g-i)\widetilde{n}}).
\end{equation*}
Consequently, the product of these physical rotations, $\prod_{i=1}^g U(\ell,w_i) = U\left(\ell, \sum_{i=1}^g w_i\right)$, realizes the target logical gate $\prod_{i=1}^g \sqzrot_{A_i} = \addrsqzrot$. Since $\sum_{i=1}^g w_i = (\mathbf{0}_{n'}, -\mathbf{1}_{\widetilde{n}g})$, we conclude that the target logical gate $\addrsqzrot$ is realized within the CSS code ${(C_1,C_2)}_{\tn{CSS}}$ by applying the transversal $Z$-rotation $U(\ell,w)$, with $w = (\mathbf{0}_{n'}, -\mathbf{1}_{\widetilde{n}g})$, to the physical qubits of the code. Finally, Lemma~\ref{lem:params} guarantees the resulting code ${(C_1,C_2)}_{\tn{CSS}}$ has distance at least $\min\{d_X, d_Z\}$, with $d_X \ge d'$ and $d_Z \ge \min\{d_Z', \widetilde{d}\}$.

\subsubsection{Summary}\label{sec:summary}
We now summarize the appending-matrix construction. Fix a non-negative integer $\ell$ and an $[[\widetilde{n}, \widetilde{k}, \ge \widetilde{d}]]$ auxiliary code wherein the physical transversal $\sqzrot^\dag$ realizes the logical transversal $\sqzrot$. Given a primary CSS code $(C_1', C_2')_{\tn{CSS}}$ and a target address vector $A \in \mathbb{F}_2^{k'}$, we perform the following steps to obtain the derived code:
\begin{enumerate}    
    \item Determine the number of appending matrices $\numtargates=\lceil \lvert \supp(A) \rvert/\widetilde{k} \rceil$, and define the matrices $G_1^{(j)} \in \mathbb{F}_2^{k' \times \widetilde{n}}$ and $G_2^{(j)} \in \mathbb{F}_2^{k_2 \times \widetilde{n}}$ for $j \in [\numtargates]$, where $k_2 := \dim(\widetilde{C}_2)$.
    
    \item Let $\mathsf{s}=\lvert \supp(A) \rvert$, and let $1\le i_1 < \cdots < i_{\mathsf{s}} \le k'$ denote the ordered indices in $\supp(A)$. Setting $i_0=0$, we decompose $A$ into $\numtargates$ address vectors $A_1, \ldots, A_\numtargates$, where
    \begin{align*}
        A_j &= \left( \mathbf{0}_{i_{(j-1)\widetilde{k}}}, A\left(i_{(j-1)\widetilde{k}}+1:i_{j\widetilde{k}}\right), \mathbf{0}_{k'-i_{j\widetilde{k}}} \right), \quad \text{for }j \in [\numtargates-1] \text{ and} \\
        A_{\numtargates} &=  \left(\mathbf{0}_{i_{(\numtargates-1)\widetilde{k}}}, A\left( i_{(\numtargates-1)\widetilde{k}}+1:k'\right) \right)
    \end{align*}
    
    \item For each $j \in [\numtargates]$:
    \begin{enumerate}
        \item Let $\mathsf{s}_j = \lvert \supp(A_j) \rvert$, and let $p_1 < \cdots < p_{\mathsf{s}_j}$ denote the indices in $\supp(A_j)$. We define the rows $y_1^{(j)}, \ldots, y_{k'}^{(j)}$ of $G_1^{(j)}$ for each $h \in [k']$ as
        \begin{equation*}
            y_{h}^{(j)} = \begin{cases} 
            \widetilde{y}_q, & \text{if } h=p_q \text{ for some } q \in [\mathsf{s}_j], \\ 
                     \mathbf{0}_{\widetilde{n}}, & \text{otherwise.} 
        \end{cases} 
    \end{equation*}
        \item Set $G_2^{(j)} = G_{\widetilde{C}_2}$.
    \end{enumerate}
    \item The generator matrices of the final derived CSS code $(C_1, C_2)_{\tn{CSS}}$ are given by
    \begin{equation*}
        G_{C_2} = \begin{bmatrix}
            G_{C_2'} & 0 & 0 & \cdots & 0 \\
            0 & G_2^{(1)} & 0 & \cdots & 0 \\
            0 & 0 & G_2^{(2)} & \cdots & 0 \\
            \vdots & \vdots & \vdots & \ddots & \vdots \\
            0 & 0 & 0 & \cdots & G_2^{(\numtargates)}
        \end{bmatrix} \quad \text{and} \quad G_{C_1 / C_2} = \begin{bmatrix}
            G_{C_1' / C_2'} & G_1^{(1)} & G_1^{(2)} & \cdots & G_1^{(\numtargates)}
        \end{bmatrix}.
    \end{equation*}
\end{enumerate}

The above procedure yields a derived CSS code with parameters $[[n'+\widetilde{n}M, k', \ge \min\{d', \widetilde{d}\}]]$ in which the transversal physical $Z$-rotation $U(\ell,w)$ with $w=(\mathbf{0}_{n'}, -\mathbf{1}_{\widetilde{n}\numtargates})$ realizes the target logical gate $\sqzrot_{A}$.

The overall assignment strategy guarantees that no coset vector $\widetilde{y}_{j}$ is repeated within any single block $G_1^{(i)}$. The rationale behind partitioning $\addrsqzrot$ thus becomes apparent: if $y_p$ and $y_q$ are any two distinct non-zero rows of $\left[ G_1^{(1)} \; \cdots \;  G_1^{(\numtargates)} \right]$, then either $w_H(y_p \ast y_q) = w_H(\widetilde{y}_{j_1} \ast \widetilde{y}_{j_2})$ for some $j_1 \ne j_2$  (this happens if the two rows pass through the same block $G_1^{(i)}$) or $y_p \ast y_q = 0$. In either case, $w_H(y_p \ast y_q) = 0 \pmod{2^\ell}$.


We next illustrate the mechanics of the appending construction with a concrete example.

\begin{example}\label{example:sq}
Let the primary CSS component of our construction be the order-$4$ quantum Hamming code~\cite{S1996}, denoted ${(C_1', C_2')}_{\tn{CSS}}$, which achieves parameters $[[n'=15, k'=7, d'= 3]]$. Let the vectors ${y}_1', \ldots, {y}_{7}'$ denote the rows of the coset generator matrix $G_{C_1' / C_2'}$. Given a target addressability vector $A \in \mathbb{F}_2^{7}$ defined by non-zero coordinates $A(1) = A(3) = A(6) = 1$, our goal is to construct a derived CSS code encoding $k'=7$ logical qubits that realizes the addressable logical gate $T_A$, where $T=R_Z\left(\frac\pi4\right)$.

Our second component is an auxiliary CSS code, $(\widetilde{C}_1, \widetilde{C}_2)_{\mathrm{CSS}}$ wherein the physical transversal $T^{\dagger}$ realizes the logical transversal $T$. We fulfill this using a ${[[\widetilde{n}=14, \widetilde{k}=2, \widetilde{d} \geq 2]]}$ code derived by puncturing two coordinates of the Reed--Muller code $\mathrm{RM}(1,4)$. While the existence of this code is well known, we will detail its construction in the Section~\ref{css_families}. Let the vectors $\widetilde{y}_1, \widetilde{y}_2$ denote the rows of the coset space generator matrix $G_{\widetilde{C}_1 / \widetilde{C}_2}$. 

We now commence the construction of the appending matrices. Since $\lvert \supp(A) \rvert \geq \widetilde{k}$, we decompose $T_A$ into $g = \lceil \lvert \supp(A) \rvert / \widetilde{k} \rceil = 2$ gates $T_{A_1}$ and $T_{A_2}$, where
\begin{equation*}
    A_1 = (1,0,1,0,0,0,0) \quad \text{ and } \quad A_2 = (0,0,0,0,0,1,0).
\end{equation*}

Setting $\custlogicalU{1} = T_{A_1}$ and $\custlogicalU{2} = T_{A_2}$, the appending matrices $G_1^{(1)}$ and $G_1^{(2)}$ in $\mathbb{F}_2^{7 \times 14}$ are constructed as follows:
\begin{equation*}
    G_1^{(1)} = \begin{bmatrix}
        \widetilde{y}_1 \\ 0 \\ \widetilde{y}_2 \\ 0 \\ 0 \\ 0 \\ 0  
    \end{bmatrix} \quad \text{and} \quad G_1^{(2)} = \begin{bmatrix}
        0 \\ 0 \\ 0 \\ 0 \\ 0 \\ \widetilde{y}_1 \\ 0  
    \end{bmatrix}, 
\end{equation*}
while the accompanying matrices $G_2^{(1)}$ and $G_2^{(2)}$ are given by
\begin{equation*}
    G_2^{(1)} = G_2^{(2)} = G_{\widetilde{C}_2}.
\end{equation*}

Let ${(C_1, C_2)}_{\tn{CSS}}$ denote the resultant CSS code obtained by integrating these appending matrices into the primary code. The generator matrix of the subcode $C_2$ is given by
\begin{equation*}
    G_{C_2} = \begin{bmatrix}
        G_{C_2'} & 0 & 0 \\
        0 & G_2^{(1)} & 0 \\
        0 & 0 & G_2^{(2)}
    \end{bmatrix} = \begin{bmatrix}
        G_{C_2'} & 0 & 0 \\
        0 & G_{\widetilde{C}_2} & 0 \\
        0 & 0 & G_{\widetilde{C}_2}
    \end{bmatrix},
\end{equation*}
and the generator matrix for the coset space $C_1 / C_2$ is given by
\begin{equation*}
    G_{C_1/ C_2} = \begin{bmatrix}
        G_{C_1' / C_2'} & G_1^{(1)} & G_1^{(2)}
    \end{bmatrix} =   \left[\begin{array}{c|c|c} 
        {y}_1' & \widetilde{y}_1 & 0 \\ 
        {y}_2' & 0 & 0  \\ 
        {y}_3' & \widetilde{y}_2 & 0  \\ 
        {y}_4' & 0 & 0  \\
        {y}_5' & 0 & 0  \\ 
        {y}_6' & 0 & \widetilde{y}_1 \\
        y_7' & 0 & 0
    \end{array}\right].
\end{equation*}

This completes our appending construction. By invoking Lemma~\ref{lem:params}, we deduce that ${(C_1, C_2)}_{\tn{CSS}}$ is a ${[[43, 7, \geq 2]]}$ code. Applying the transversal $Z$-rotation $U(2, w_1)$ with $w_1=(\mathbf{0}_{15}, -\mathbf{1}_{14},\mathbf{0}_{14})$ realizes logical $T_{A_1}$, while $U(2, w_2)$ with $w_2=(\mathbf{0}_{15}, \mathbf{0}_{14}, -\mathbf{1}_{14})$ realizes logical $T_{A_2}$. Consequently, the product of these physical rotations, $U(2,w_1) U(2,w_2) = U(2,w)$ with combined vector $w=(\mathbf{0}_{15}, -\mathbf{1}_{14}, -\mathbf{1}_{14})$, successfully realizes the target logical gate $T_{A_1} T_{A_2} = T_{A}$.
\end{example}    


We end this section by pointing out that the methodology herein enables us to construct CSS codes that admit transversal realizations of all possible addressable logical $\sqzrot$ gates, for some fixed $\ell \ge 0$. This is based on the observation that to do this, it suffices to construct CSS codes that admit transversal realizations of $\sqzrot$ on each logical qubit independently. Indeed, suppose that for each $i$, a transversal $Z$-rotation $U(\ell,w_i)$ realizes the logical $\sqzrot$ gate on the $i$-th logical qubit. Then, for any target address vector $A$, the composite gate $\prod_{i} U(\ell,w_i)^{A(i)}$ realizes the addressable logical gate $\addrsqzrot$.  

The construction of such CSS codes is straightforward: we simply invoke the appending framework of Section~\ref{app_framework} with $\numtargates=k'$ and set $\custlogicalU{i}=\sqzrot_{e_i}$. The strategy described in Section~\ref{sec:bounded_support} tells us how to choose appending matrices for single-qubit $Z$-rotations with address vectors supported on a single coordinate.

\subsection{Relaxed Modular Equations to Realize Addressable Logical Single-Qubit $Z$-Rotations}\label{sec:sq_relaxed}

The system of modular equations in Proposition~\ref{prop:target_addr_log_sq_basis} can be further relaxed. Specifically, weakening the modulo-$2^{\ell+1}$ constraint $w \cdot y_i = A(i) \pmod{2^{\ell+1}}$ to make it a modulo-$2^\ell$ constraint still allows the code to transversally realize the logical gate $\addrsqzrot$, as we now demonstrate. 

Using Lemma~\ref{lem:relaxed} in Appendix~\ref{sec:relaxed}, we observe that any vector $w$ satisfying the relaxed conditions can be systematically refined into a vector $w'$ that satisfies the original modular equations, thereby realizing the target gate $\addrsqzrot$ via $U(\ell,w')$. Applying Lemma~\ref{lem:relaxed} to Proposition~\ref{prop:target_addr_log_sq_basis} yields the following alternative characterization:
\begin{proposition}\label{prop:target_addr_log_sq_basis_relaxed}        
    Let ${(C_1, C_2)}_{\tn{CSS}}$ be an $[[n,k]]$ CSS code, where $\{x_1, \ldots, x_{k_2}\}$ and $\{y_1, \ldots, y_k\}$ are bases for $C_2$ and $C_1/C_2$, respectively. For any $A \in \mathbb{F}_2^k$, the code realizes the addressable logical gate $\addrsqzrot$ via a transversal physical $Z$-rotation $U(\ell,w')$ for some $w' \in \mathbb{Z}^n$ if and only if there exists a vector $w \in \mathbb{Z}^n$ such that for all $a \in \mathbb{F}_2^k$, $i \in [k_2]$, ordered indices $1 \leq i_1 < \cdots < i_{j} \le k_2$ with $j \in \{2, \ldots, \ell+1\}$, $p \in [k]$, and ordered indices $1 \leq p_1 < \cdots < p_q \le k$ with $q \in \{2, \ldots, k\}$, the following conditions hold:
    \begin{alignat*}{3}
        w \cdot x_i &= 0 \pmod{2^{\ell}}, &\qquad w \cdot (x_{i_1} \ast \cdots \ast x_{i_j}) &= 0\pmod{2^{\ell-j+2}},
        &\qquad w \cdot (x \ast y_a) &= 0 \pmod{2^{\ell}}, \\
        w \cdot y_p &= A(p) \pmod{2^{\ell}}, &\qquad 
        w \cdot (y_{p_1} \ast \cdots \ast y_{p_q}) &= 0 \pmod{2^{\ell-q+2}}.
    \end{alignat*}
\end{proposition}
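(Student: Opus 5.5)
The plan is to prove both directions by reducing to Proposition~\ref{prop:target_addr_log_sq_basis}. The converse direction needs almost nothing. The forward direction lifts a vector $w$ satisfying the relaxed conditions to a vector $w' = w + 2^{\ell} u$, with a suitable binary vector $u \in \mathbb{F}_2^n$ viewed as a $0/1$ integer vector. This is presumably the content of Lemma~\ref{lem:relaxed}; I would carry it out as follows.

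\emph{Necessity.} Suppose $U(\ell,w')$ realizes $\addrsqzrot$ and take $w=w'$. By Proposition~\ref{prop:target_addr_log_sq_basis}, $w'\cdot x = 0 \pmod{2^{\ell+1}}$ for all $x\in C_2$.
\begin{itemize}
\item Applying Lemma~\ref{lem:schur_basis_vectors} to $C_2$ with basis $\{x_1,\dots,x_{k_2}\}$ and address vector $A=\mathbf{0}$ gives $w'\cdot x_i = 0 \pmod{2^{\ell+1}}$ and $w'\cdot(x_{i_1}\ast\cdots\ast x_{i_j}) = 0 \pmod{2^{\ell-j+2}}$ for $2\le j\le \ell+1$.
\item The condition $w'\cdot(x\ast y_a)=0 \pmod{2^\ell}$ is part of Proposition~\ref{prop:target_addr_log_sq_basis}.
\item The same proposition gives $w'\cdot y_p = A(p) \pmod{2^{\ell+1}}$, which implies the mod-$2^\ell$ version.
\item The $y$-product conditions for $2\le q\le \ell+1$ are likewise given there. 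For $q\ge \ell+2$ the modulus $2^{\ell-q+2}$ is at most $1$, so those conditions are vacuous (interpreting them as trivially true).
\end{itemize}

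\emph{Sufficiency.} Let $w$ satisfy the relaxed conditions. Then for each $i$, $w\cdot x_i \equiv c_i 2^{\ell} \pmod{2^{\ell+1}}$ for some bit $c_i$, and for each $p$, $w\cdot y_p \equiv A(p) + d_p 2^\ell \pmod{2^{\ell+1}}$ for some bit $d_p$.
\begin{itemize}
\item The vectors $x_1,\dots,x_{k_2},y_1,\dots,y_k$ form a basis of $C_1$, so they are linearly independent over $\mathbb{F}_2$. Hence the linear system $u\cdot x_i \equiv c_i$, $u\cdot y_p \equiv d_p \pmod 2$ has a solution $u\in\mathbb{F}_2^n$, since the generator matrix of $C_1$ has full row rank.
\item Set $w' = w + 2^\ell u$. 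Then $w'\cdot x_i \equiv 0$ and $w'\cdot y_p \equiv A(p) \pmod{2^{\ell+1}}$.
\item Every remaining condition has modulus dividing $2^{\ell}$: the $x\ast y_a$ condition, and the Schur-product conditions for $j,q\ge 2$, whose moduli are $2^{\ell-j+2}\le 2^\ell$. These are therefore unchanged under the shift by $2^\ell u$.
\end{itemize}

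Now apply Lemma~\ref{lem:schur_basis_vectors} in the reverse direction, once to $C_2$ with $A=\mathbf{0}$ and once to the span of $\{y_p\}$ with address vector $A$.
\begin{itemize}
\item For $C_2$ this gives $w'\cdot x = 0 \pmod{2^{\ell+1}}$ for all $x\in C_2$.
\item For the $y_p$ it gives $w'\cdot y_a = w_H(a\ast A) \pmod{2^{\ell+1}}$. Strictly we only need the basis conditions $w'\cdot y_p$ and the Schur conditions with $q\le \ell+1$ that Proposition~\ref{prop:target_addr_log_sq_basis} lists.
\end{itemize}
Together with the preserved condition $w'\cdot(x\ast y_a)=0 \pmod{2^\ell}$, Proposition~\ref{prop:target_addr_log_sq_basis} then shows that $U(\ell,w')$ realizes $\addrsqzrot$.

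The main step to get right is the correction: we need a single binary $u$ that fixes the top bit of every basis inner product simultaneously without disturbing the lower-order constraints. This is exactly where linear independence of the combined basis of $C_1$, and the fact that all Schur-product moduli are at most $2^\ell$, are essential. The case $\ell=0$ should also be checked: the relaxed conditions are then vacuous modulo $1$, and the correction $u$ alone enforces $w'\cdot x_i$ even and $w'\cdot y_p\equiv A(p) \pmod 2$, so the same argument goes through.
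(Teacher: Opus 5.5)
Your proof is correct and follows the paper's route. The paper obtains sufficiency from Lemma~\ref{lem:relaxed}: it shifts $w$ by $2^{\ell}$ times binary correction vectors that fix the top bit of each basis inner product, and the Schur-product and $x \ast y_a$ conditions are untouched because their moduli divide $2^{\ell}$. Necessity it takes directly from Proposition~\ref{prop:target_addr_log_sq_basis}. Your version is slightly tighter on two points. First, you solve a single linear system over the full basis $\{x_i\}\cup\{y_p\}$ of $C_1$, which guarantees that the $y$-correction does not disturb $w'\cdot x_i \pmod{2^{\ell+1}}$ and vice versa; the paper's separate corrections $w' = w - 2^{\ell}w_1 - 2^{\ell}w_2$ leave this implicit. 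Second, you correctly note that the $x$-Schur conditions in the necessity direction require Lemma~\ref{lem:schur_basis_vectors} applied to $C_2$ with $A=\mathbf{0}$, which the paper calls trivial.
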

The vector $w$ in Proposition~\ref{prop:target_addr_log_sq_basis_relaxed} need not satisfy the original characterization in Proposition~\ref{prop:target_addr_log_sq_basis} required to realize the addressable logical gate $\addrsqzrot$ directly. Instead, by Lemma~\ref{lem:relaxed}, we can map the physical rotation $U(\ell, w)$ to a refined rotation $U(\ell,w')$, where the modified vector $w'=w-2^{\ell}w_1 - 2^{\ell} w_2$ exactly satisfies the original characterization in Proposition~\ref{prop:target_addr_log_sq_basis}. Here, following the proof of Lemma~\ref{lem:relaxed}, the correction vector $w_1$ is obtained when lifting the conditions on $w \cdot x_i$ from $0 \pmod{2^\ell}$ to $0 \pmod{2^{\ell+1}}$, and $w_2$ is obtained when lifting the conditions on $w \cdot y_p$ from $A(p) \pmod{2^\ell}$ to $A(p) \pmod{2^{\ell+1}}$.

Because $U(\ell,w')$ differs from $U(\ell,w)$ only by the application of physical Pauli $Z$ gates, the relaxed conditions can be interpreted as characterizing all CSS codes capable of realizing the addressable logical gate $\addrsqzrot$ via transversal physical $Z$-rotations, up to the application of Pauli $Z$ corrections. In particular, the process of lifting the modular conditions using Lemma~\ref{lem:relaxed} establishes the sufficiency of Proposition~\ref{prop:target_addr_log_sq_basis_relaxed}. Conversely, the necessity follows immediately, as any vector satisfying the strict characterization of Proposition~\ref{prop:target_addr_log_sq_basis} trivially satisfies the relaxed modular conditions.

\section{Fault-tolerant Addressable Logical $\czrot$ Gates in CSS Codes}\label{ft_cs}

In this section, we utilize Proposition~\ref{prop:addressable_multi_controlled} to derive an alternative characterization of CSS codes that transversally realize an addressable logical $\czrot$ gate, which motivates our appending-matrix construction for $1$-controlled-$Z$ rotations. Although Proposition~\ref{prop:addressable_multi_controlled} allows any $p \ge \ell + 1$, we restrict our attention to $p = \ell + 1$. 

While the following characterization and appending-matrix construction extend to general multi-controlled-$Z$ rotations, the associated technical details can obscure the core ideas. Therefore, we focus here on $1$-controlled-$Z$ rotations and defer a self-contained treatment of the general case to Appendix~\ref{ft_mc_zrot}.   

Lemma~\ref{lem:schur_basis_vectors_for_mc} in Appendix~\ref{ft_mc_zrot} extends Lemma~\ref{lem:schur_basis_vectors} from binary vectors $A \in \mathbb{F}_2^k$ to binary tensors $A \in \mathbb{F}_2^{k \times \cdots \times k}$. Applying it to Proposition~\ref{prop:addressable_multi_controlled} for $1$-controlled-$Z$ rotations directly yields the following result.
\begin{proposition}\label{prop:target_addr_log_cs_basis}    
    Let ${(C_1, C_2)}_{\tn{CSS}}$ be an $[[n,k]]$ CSS code, where $\{y_1, \ldots, y_k\}$ is a basis of $C_1/C_2$. For any matrix $A \in \mathbb{F}_2^{k \times k}$, the code realizes the addressable logical gate $\addczrot$ via a transversal physical $Z$-rotation $U(\ell+1,w)$ if and only if for all $x \in C_2$, $a \in \mathbb{F}_2^k$, and ordered indices $1 \le i_1 < \cdots < i_j \le k$ with $j \in [\ell+2]$, the following conditions hold:
    \begin{align*}
        &w \cdot x = 0 \pmod{2^{\ell+2}},  \qquad
        w \cdot {(x \ast y_a)} = 0 \pmod{2^{\ell+1}}, \\
        &w \cdot \left( y_{i_1} \ast \cdots \ast y_{i_{j}}\right) = \begin{cases} 
            -A(i_1, i_2) \pmod{2^{\ell+1}}, & \text{if } j=2, \\ 
            0 \pmod{2^{\ell-j+3}}, & \text{otherwise.} 
        \end{cases}
    \end{align*}
\end{proposition}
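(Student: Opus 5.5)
Specialize Proposition~\ref{prop:addressable_multi_controlled} to $m=2$ and $p=\ell+1$, which is the smallest admissible value since $p\ge \ell+m-1=\ell+1$. It then says that $U(\ell+1,w)$ realizes $\addczrot$ if and only if three conditions hold for all $x\in C_2$ and $a\in\mathbb{F}_2^k$:
\begin{equation*}
w\cdot x=0 \pmod{2^{\ell+2}}, \qquad w\cdot(x\ast y_a)=0 \pmod{2^{\ell+1}},
\end{equation*}
\begin{equation*}
w\cdot y_a = 2\sum_{i_1<i_2}A(i_1,i_2)\,a(i_1)a(i_2) \pmod{2^{\ell+2}}.
\end{equation*}
The first two are exactly the $C_2$-conditions in the statement. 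It remains to show that the third, quantified over all $a$, is equivalent to the conditions on the Schur products of the basis vectors. This is the $m=2$ case of Lemma~\ref{lem:schur_basis_vectors_for_mc}. Since that lemma is stated later, I would prove this case directly, in the same way as Lemma~\ref{lem:schur_basis_vectors}.

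For the equivalence I would use the inclusion--exclusion expansion from Section~\ref{sec:preliminaries_and_notation}. Write $S_J := w\cdot(y_{j_1}\ast\cdots\ast y_{j_t})$ for $J=\{j_1<\dots<j_t\}$. Then
\begin{equation*}
w\cdot y_a=\sum_{\emptyset\neq J\subseteq \supp(a)}(-2)^{|J|-1}S_J .
\end{equation*}

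\emph{Sufficiency.} Assume $S_{\{i\}}=0 \pmod{2^{\ell+2}}$, $S_{\{i_1,i_2\}}=-A(i_1,i_2)\pmod{2^{\ell+1}}$, and $S_J=0\pmod{2^{\ell-|J|+3}}$ for $3\le|J|\le \ell+2$. Each term $(-2)^{|J|-1}S_J$ is then computed modulo $2^{\ell+2}$ as follows:
\begin{itemize}
\item singletons contribute $0$;
\item each pair contributes $-2S_J = 2A(i_1,i_2)$;
\item each $J$ with $3\le|J|\le\ell+2$ contributes a multiple of $2^{|J|-1}\cdot 2^{\ell-|J|+3}=2^{\ell+2}$;
\item each $J$ with $|J|\ge \ell+3$ contributes a multiple of $2^{|J|-1}$, which is divisible by $2^{\ell+2}$.
\end{itemize}
Summing gives exactly the required right-hand side.

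\emph{Necessity.} I would argue by strong induction on $|\supp(a)|$, using the Möbius-type inversion built into the expansion.
\begin{itemize}
\item Taking $a=e_i$ gives $S_{\{i\}}=0 \pmod{2^{\ell+2}}$.
\item Taking $a=e_{i_1}+e_{i_2}$ gives $S_{\{i_1\}}+S_{\{i_2\}}-2S_{\{i_1,i_2\}}=2A(i_1,i_2)\pmod{2^{\ell+2}}$. Substituting the singleton result and halving yields $S_{\{i_1,i_2\}}=-A(i_1,i_2)\pmod{2^{\ell+1}}$.
\item For $|\supp(a)|=t$ with $3\le t\le \ell+2$, the right-hand side equals $2\sum_{J\subseteq\supp(a),|J|=2}A(J)$. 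The singleton and pair terms on the left cancel it modulo $2^{\ell+2}$: each pair term equals $-2S_J=2A(J)+2^{\ell+2}(\cdot)$, and doubling only improves the modulus. Every proper subset $J$ with $3\le|J|<t$ contributes a multiple of $2^{\ell+2}$ by the induction hypothesis. What remains is $(-2)^{t-1}S_{\supp(a)}=0\pmod{2^{\ell+2}}$, hence $S_{\supp(a)}=0\pmod{2^{\ell-t+3}}$.
\end{itemize}

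The main obstacle is the modulus bookkeeping in the necessity step. Two points need care. First, the pair condition holds only modulo $2^{\ell+1}$, so in larger subsets its error of $2^{\ell+1}$ must be absorbed by the factor $2$ in front of it. Second, the sign $(-2)^{t-1}$ must be tracked so that the target $-A$ rather than $+A$ comes out. Dividing by $2^{t-1}$ is valid because the modulus $2^{\ell+2}$ is at least $2^{t-1}$ when $t\le \ell+2$; for $t>\ell+2$ no condition is required.
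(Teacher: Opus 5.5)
Your proposal is correct and follows the paper's route. You specialize Proposition~\ref{prop:addressable_multi_controlled} to $m=2$, $p=\ell+1$, and then convert the coset condition into basis Schur-product conditions via the $m=2$ case of Lemma~\ref{lem:schur_basis_vectors_for_mc}, whose proof is the same induction on $|\supp(a)|$ you carry out, with the modulus and sign bookkeeping handled correctly.
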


\subsection{Appending Matrices for a Targeted Addressable Logical gate $\addczrot$}\label{app_targ_addr_log_cs}

Given a primary $[[n',k']]$ CSS code $(C_1', C_2')_{\tn{CSS}}$ and a target address matrix $A \in \mathbb{F}_2^{k' \times k'}$, we utilize the appending framework from Section~\ref{app_framework} to construct an extended code that transversally realizes the addressable logical gate $\addczrot$. We derive the required appending matrices using the auxiliary $[[\widetilde{n}, \widetilde{k}, \widetilde{d}]]$ CSS code $(\widetilde{C}_1, \widetilde{C}_2)_{\tn{CSS}}$ from Section~\ref{app_targ_addr_log_sq}, wherein the physical transversal $R_Z\left( \frac{\pi}{2^{\ell+1}} \right)^{\dagger}$ realizes the logical transversal $R_Z\left( \frac{\pi}{2^{\ell+1}} \right)$, and $\widetilde{d}_Z \ge d_{\min}(\widetilde{C}_2^\perp) \ge \widetilde{d}$.  

Recall from Section~\ref{app_targ_addr_log_sq} that if $\widetilde{y}_1, \ldots, \widetilde{y}_{\widetilde{k}}$ are the rows of the coset generator matrix $G_{\widetilde{C}_1/\widetilde{C}_2}$, the following hold for all $\widetilde{x} \in \widetilde{C}_2$, $a \in \mathbb{F}_2^{\widetilde{k}}$, $i \in [\widetilde{k}]$, and ordered indices $1 \leq i_1 < \cdots < i_{j} \le \widetilde{k}$ with $j \in \{2, \ldots, \ell+2\}$:
\begin{alignat}{2}
    w_H(\widetilde{x}) &= 0 \pmod{2^{\ell+2}},
    &\qquad w_H(\widetilde{x} \ast \widetilde{y}_a) &= 0 \pmod{2^{\ell+1}}, \nonumber \\
    w_H(\widetilde{y}_i) &= -1 \pmod{2^{\ell+2}}, &\qquad 
    w_H(\widetilde{y}_{i_1} \ast \cdots \ast \widetilde{y}_{i_j}) &= 0 \pmod{2^{\ell-j+3}}.
    \label{eq:secondary_css_basis_mod_cs}
\end{alignat}

Our construction requires the following properties of the rows of $G_{\widetilde{C}_1 / \widetilde{C}_2}$ established in Proposition~\ref{prop:schur_sum_basis_mc_secc_ode}: 
\begin{enumerate}
    \item \begin{equation}\label{eq:sum} 
        w_H\left( \bigoplus_{i=1}^{\widetilde{k}} \widetilde{y}_i\right) = -\widetilde{k} \pmod{2^{\ell+2}}.
    \end{equation}
    \item For any index $p \in [\widetilde{k}]$,
    \begin{equation}
        w_H\left( \left(\bigoplus_{i=1}^{\widetilde{k}} \widetilde{y}_i \right) \ast \widetilde{y}_p \right) = -1 \pmod{2^{\ell+1}}.\label{eq:schur_prop_sc_code_cs_1}
    \end{equation}
    \item For any integer $q \ge 2$ and ordered sequence of indices $1 \le p_1 < \cdots < p_q \le \widetilde{k}$,
    \begin{equation}    
        w_H\left( \left(\bigoplus_{i=1}^{\widetilde{k}} \widetilde{y}_i \right) \ast \left( \widetilde{y}_{p_1} \ast \cdots \ast \widetilde{y}_{p_q} \right) \right) = 0 \pmod{2^{\ell-q+2}}.\label{eq:schur_prop_sc_code_cs_2}
    \end{equation}
\end{enumerate}

\subsubsection{Reduction to a simpler special case}
We first decompose the target gate $\addczrot$ by sequentially ordering its control qubits. Let $c_1 < \cdots < c_\kappa$ denote the ordered indices in $[k'-1]$ corresponding to active control qubits---that is, indices $i$ for which there exists at least one target qubit $t \in \{i+1, \ldots, k'\}$ such that $A(i, t)=1$. We factor $\addczrot$ into $\kappa$ addressed gates $C_{A_1}\left( R_Z\left( \frac{\pi}{2^\ell} \right) \right), \ldots, C_{A_\kappa}\left( R_Z\left( \frac{\pi}{2^\ell} \right) \right)$. Each matrix $A_j$ isolates the interactions controlled by $c_j$ by inheriting the $c_j$-th row slice of $A$ and vanishing elsewhere:
\begin{equation*}
    A_j(p,q):= \begin{cases}
        A(c_j,q), & \text{if }  p=c_j \text{ and } q \in \{c_j+1, \ldots, k'\}, \\
        0, & \text{otherwise.}
    \end{cases}
\end{equation*}

If the appending-matrix construction is established for each factor $C_{A_j}\left( R_Z\left( \frac{\pi}{2^\ell} \right) \right)$, the appending framework guarantees that each gate $C_{A_j}\left( R_Z\left( \frac{\pi}{2^\ell} \right) \right)$ is realized by some transversal $Z$-rotation $U(\ell+1,w_j)$ in the final extended code. The composition $\prod_{j=1}^\kappa U(\ell+1,w_j)$ then realizes the full target gate $\addczrot$. Consequently, without loss of generality, we assume that the first row is the only non-zero row of $A$---meaning the logical $C\left( R_Z\left( \frac{\pi}{2^\ell} \right) \right)$ gates act exclusively with the first logical qubit as the control qubit and a subset of the remaining qubits as targets.

Furthermore, we restrict the construction to target matrices satisfying $\lvert \supp(A(1,2:k')) \rvert \le \widetilde{k}$, where $A(p, q:r):=(A(p,q), A(p,q+1), \ldots, A(p,r))$ for $p \in [k']$ and $1 \le q \le r \le k'$. If the target support exceeds $\widetilde{k}$, the flexibility of the appending framework allows us to partition the addressable gate $\addczrot$ into $g=\lceil \lvert \supp(A(1,2:k')) \rvert / \widetilde{k} \rceil$ factors, each acting from control qubit $1$ onto at most $\widetilde{k}$ targets in $\{2, \ldots, k'\}$. Constructing appending matrices independently for each factor and composing their corresponding physical $Z$-rotations realizes the target gate $\addczrot$ within the final extended code. 

\subsubsection{The appending-matrix construction in the special case of $|\supp(A(1,2:k'))| \ \le \widetilde{k}$}

To simplify the presentation of the construction, we further assume that $A(i,j)=1$ if and only if $i=1$ and $j \in \{2, 3, \ldots, \mathsf{s}+1 \}$, where $\mathsf{s}:=\lvert \supp(A(1,2:k')) \rvert \le \widetilde{k}$. This effectively designates $\{2, \ldots, \mathsf{s}+1\}$ as the set of all target qubits. Again, recall that we are given a primary $[[n',k']]$ CSS code $(C_1', C_2')_{\tn{CSS}}$, and that we will obtain appending matrices $G_1$ and $G_2$ using an auxiliary $[[\widetilde{n}, \widetilde{k}, \widetilde{d}]]$ CSS code $(\widetilde{C}_1, \widetilde{C}_2)_{\tn{CSS}}$ for which the modular equations in~\eqref{eq:secondary_css_basis_mod_cs} hold. We define the rows $y_1, \ldots, y_{k'}$ of $G_1$ as follows: we set $y_1 = \bigoplus_{p=1}^{\widetilde{k}} \widetilde{y}_p$ for the control qubit row, assign $y_j = \widetilde{y}_{j-1}$ for target qubits $2 \le j \le \mathsf{s}+1$, and set $y_j=0$ for all remaining rows $\mathsf{s}+1 < j \le k'$.

Next, we set $G_2 = G_{\widetilde{C}_2}$, and let ${(\widehat{C}_1, \widehat{C}_2)}_{\tn{CSS}}$ denote the CSS code obtained by applying the appending-matrix construction to $(C_1', C_2')_{\tn{CSS}}$. 

\subsubsection{$U(\ell+1,w)$ is a logical operator for $(\widehat{C}_1, \widehat{C}_2)_{\tn{CSS}}$}

Consider the transversal $Z$-rotation $U(\ell+1, w)$ with $w=(\mathbf{0}_{n'}, \mathbf{1}_{\widetilde{n}})$. By construction, $G_2 = G_{\widetilde{C}_2}$ and the rows of $G_1$ are drawn from $\widetilde{C}_1/\widetilde{C}_2$. Consequently, for any $x=(x',\widetilde{x}) \in \widehat{C}_2$ and $y_a=(y', \widetilde{y}) \in \widehat{C}_1/ \widehat{C}_2$ (with $a \in \mathbb{F}_2^{k'}$), we have $\widetilde{x} \in \widetilde{C}_2$ and $\widetilde{y} \in \widetilde{C}_1/ \widetilde{C}_2$. By~\eqref{eq:secondary_css_basis_mod_cs}, this guarantees
\begin{equation*}
    w \cdot x = 0 \pmod{2^{\ell+2}} \quad \text{ and } \quad w \cdot (x \ast y_a) = 0 \pmod{2^{\ell + 1}}.
\end{equation*}

Thus, Remark~\ref{rmk:logical_op} establishes $U(\ell+1,w)$ as a logical operator on $(\widehat{C}_1, \widehat{C}_2)_{\tn{CSS}}$. 

\subsubsection{$U(\ell+1,w)$ realizes $\addczrot$ up to a logical phase} For any logical state $\ket{a}$ with $a \in \mathbb{F}_2^{k'}$, the action of $U(\ell+1,w)$ is given by
\begin{align*}
    U(\ell+1,w) &\ket{a}_L = \exp{\iota \frac{\pi}{2^{\ell+1}} \left(w \cdot \left( \bigoplus_{i=1}^{k'} a(i) y_i\right) \right)} \ket{a}_L \\
    &= \exp{\iota \frac{\pi}{2^{\ell+1}} \left( \sum_{i=1}^{k'} {(-2)}^{i-1} \left( \sum_{1 \leq j_1 < \cdots < j_i \leq k'} \left( \prod_{p=1}^i a({j_p}) \right) w_H\left( y_{j_1} \ast \cdots \ast y_{j_i}\right) \right) \right)} \ket{a}_L.
\end{align*}
Separating the linear and quadratic terms, we can express this action as
\begin{align*}
    U(\ell+1,w)\ket{a}_L &= \exp\Biggl\{ \iota \frac{\pi}{2^{\ell+1}} \Biggl( \sum_{i=1}^{k'} a(i) w_H(y_i) -2 \sum_{1 \le i < j \le k'} a(i)a(j) w_H(y_i \ast y_j) \\ 
    &\qquad\qquad+
    \sum_{i=3}^{k'}
    {(-2)}^{i-1} \left( \sum_{1 \leq j_1 < \cdots < j_i \leq k'} \left( \prod_{p=1}^i a({j_p}) \right) w_H\left( y_{j_1} \ast \cdots \ast y_{j_i}\right) \right)\Biggr) \Biggr\} \ket{a}_L.
\end{align*}

We next show that the $t$-fold Schur products for $t > 2$ vanish in the phase exponent, while the $2$-fold products exactly realize the target gate $\addczrot$. The $1$-fold terms (the linear terms), however, introduce an extraneous logical phase, which we will eliminate in the next subsection. 

To see why the $t$-fold Schur products for $t > 2$ vanish, first consider any ordered sequence of $t>2$ target qubit indices $2 \le p_1 < \cdots < p_t \le \mathsf{s}+1$. Condition~\eqref{eq:secondary_css_basis_mod_cs} implies
\begin{equation*}
    w_H(y_{p_1} \ast \cdots \ast y_{p_t}) = w_H(\widetilde{y}_{p_1-1} \ast \cdots \ast \widetilde{y}_{p_t-1}) =  0 \pmod{2^{\ell-t+3}}.
\end{equation*}
On the other hand, if the Schur product combines the control qubit index with $t \ge 2$ target qubit indices,~\eqref{eq:schur_prop_sc_code_cs_2} yields
\begin{equation*}
    w_H(y_{1} \ast y_{p_1} \ast \cdots \ast y_{p_t}) = w_H\left( \left( \bigoplus_{i=1}^{\widetilde{k}} \widetilde{y}_i \right) \ast \widetilde{y}_{p_1-1} \ast \cdots \ast \widetilde{y}_{p_t-1} \right) = 0 \pmod{2^{\ell-t+2}}.
\end{equation*}
Because only the first $\mathsf{s}+1$ rows of $G_1$ are non-zero, the preceding modular equations guarantee that all $t$-fold Schur products for $t > 2$ vanish in the phase exponent. Consequently, the action of the physical rotation $U(\ell+1,w)$ reduces to the linear and quadratic terms:
\begin{equation*}
    U(\ell+1,w)\ket{a}_L =\exp\Biggl\{ \iota \frac{\pi}{2^{\ell+1}} \Biggl( \sum_{i=1}^{k'} a(i) w_H(y_i) -2 \sum_{1 \le i < j \le k'} a(i)a(j) w_H(y_i \ast y_j) \Biggr) \Biggr\} \ket{a}_L.
\end{equation*}

We now evaluate the $2$-fold Schur products of non-zero rows of $G_1$ in the phase exponent. For any target qubit indices $2 \le p < q \le \mathsf{s}+1$,~\eqref{eq:secondary_css_basis_mod_cs} gives
\begin{equation*}
    w_H(y_{p} \ast y_{q}) = w_H( \widetilde{y}_{p-1} \ast \widetilde{y}_{q-1}) = 0 \pmod{2^{\ell+1}}.
\end{equation*}
If the product instead involves the control qubit and any target qubit $2 \le p \le \mathsf{s} + 1$,~\eqref{eq:schur_prop_sc_code_cs_1} yields
\begin{equation*}
    w_H(y_1 \ast y_{p}) = w_H\left( \left( \bigoplus_{i=1}^{\widetilde{k}} \widetilde{y}_i \right) \ast \widetilde{y}_{p-1} \right) = -1 \pmod{2^{\ell+1}}.
\end{equation*}
Substituting these evaluations into the quadratic terms, and recalling that $A(i,j)=1$ if and only if for $i=1$ and $j \in \{2, \ldots, \mathsf{s}+1\}$, we see that the logical action of the transversal physical rotation $U(\ell+1,w)$ further reduces to
\begin{align*}
    U(\ell+1,w)\ket{a}_L &= \exp{\iota \frac{\pi}{2^{\ell+1}} \left( \sum_{i=1}^{k'} a(i) w_H(y_i) + 2\sum_{1 \le i < j  \le k'} a(i) a(j) A(i,j) \right)}  \ket{a}_L \\
    &= \exp{\iota \frac{\pi}{2^{\ell+1}} \left( \sum_{i=1}^{k'} a(i) w_H(y_i) \right)} \addczrot \ket{a}_L.
\end{align*}
Thus, we have successfully realized the target gate $\addczrot$ up to a logical phase. 

\subsubsection{Eliminating the residual logical phase}
By~\eqref{eq:secondary_css_basis_mod_cs} and~\eqref{eq:sum}, we have $w_H(\widetilde{y}_i)=-1 \pmod{2^{\ell+2}}$ and $w_H\left( \bigoplus_{i=1}^{\widetilde{k}} \widetilde{y}_i \right) = -\widetilde{k} \pmod{2^{\ell+2}}$. Substituting these values allows us to rewrite the logical action as
\begin{equation*}
    U(\ell+1,w)\ket{a}_L = \exp{\iota \frac{\pi}{2^{\ell+1}} \left( -\widetilde{k} \sum_{i=1}^{k'} a(i) A_1(i) - \sum_{i=1}^{k'}a(i) A_2(i) \right)} \addczrot \ket{a}_L,
\end{equation*}
where $A_1 = (1, \mathbf{0}_{k'-1})$ and $A_2 = (0, \mathbf{1}_{\mathsf{s}}, \mathbf{0}_{k'-\mathsf{s}-1})$. We can thus explicitly identify the unwanted residual phase as the action of the logical $Z$-rotations $\left( R_Z\left(\frac{\pi}{2^{\ell+1}}\right)_{A_1}^\dag \right)^{\widetilde{k}}$ and $R_Z\left(\frac{\pi}{2^{\ell+1}}\right)_{A_2}^\dag$.

To cancel this unwanted logical phase, we append auxiliary matrices to $(\widehat{C}_1, \widehat{C}_2)_{\tn{CSS}}$ that realize the inverse logical $Z$-rotations $\left(R_Z\left(\frac{\pi}{2^{\ell+1}}\right)_{A_1}\right)^{\widetilde{k}}$ and $R_Z\left(\frac{\pi}{2^{\ell+1}}\right)_{A_2}$. This yields the final derived CSS code $(C_1, C_2)_{\tn{CSS}}$, which transversally realizes the exact target gate $\addczrot$. Note that the appending matrix required for the rotation $\left(R_Z\left(\frac{\pi}{2^{\ell+1}}\right)_{A_1}\right)^{\widetilde{k}}$ is the same as that of the rotation $R_Z\left(\frac{\pi}{2^{\ell+1}}\right)_{A_1}$. This is because if a transversal physical $Z$-rotation $\physicalU$ realizes $R_Z\left(\frac{\pi}{2^{\ell+1}}\right)_{A_1}$, then $\physicalU^{\widetilde{k}}$ realizes $\left(R_Z\left(\frac{\pi}{2^{\ell+1}}\right)_{A_1}\right)^{\widetilde{k}}$. 

We note here that although not explicitly invoked, Proposition~\ref{prop:target_addr_log_cs_basis}  motivates our construction. We define the control qubit indexed row as the sum $\bigoplus_{i=1}^{\widetilde{k}} \widetilde{y}_i$ and the target qubit indexed rows as the vectors $\widetilde{y}_j$ specifically to satisfy the modular equation $w \cdot (y_{i_1} \ast y_{i_2}) = -A(i_1, i_2) \pmod{2^{\ell+1}}$. The intermediate code $(\widehat{C}_1, \widehat{C}_2)_{\tn{CSS}}$ satisfies all requirements of the proposition except $w \cdot y_{i_1} = 0 \pmod{2^{\ell+2}}$. Appending auxiliary matrices to this intermediate code to cancel the residual logical phase precisely resolves this modular equation, ensuring the final derived code fully satisfies Proposition~\ref{prop:target_addr_log_cs_basis}.

Finally, observe that we can use the same code $(\widetilde{C}_1,\widetilde{C}_2)_{\tn{CSS}}$ to construct the appending matrices for $R_Z\left(\frac{\pi}{2^{\ell+1}}\right)_{A_1}$ and $R_Z\left(\frac{\pi}{2^{\ell+1}}\right)_{A_2}$. This ensures that the target rotations trivially satisfy the bounded support condition of Section~\ref{sec:bounded_support}. Thus, we have a total appended blocklength of exactly $3\widetilde{n}$ in $(C_1, C_2)_{\tn{CSS}}$. 

\subsubsection{Summary}
We now summarize the appending-matrix construction for $1$-controlled-$Z$ rotations. Fix a non-negative integer $\ell$ and an $[[\widetilde{n}, \widetilde{k}, \ge \widetilde{d}]]$ auxiliary code wherein the physical transversal $R_Z\left(\frac{\pi}{2^{\ell+1}}\right)^\dag$ realizes the logical transversal $R_Z\left(\frac{\pi}{2^{\ell+1}}\right)$. Given a primary CSS code $(C_1', C_2')_{\tn{CSS}}$ and a target address matrix $A \in \mathbb{F}_2^{k' \times k'}$, we perform the following steps to obtain the derived code:
\begin{enumerate}  
    \item First, we determine the number of appending matrices:
    \begin{enumerate}
        \item Let $1 \le c_1 < \cdots < c_{\kappa} \le k'-1$ denote the ordered indices $c_i$ for which $\supp(A(c_i,c_i+1:k'))$ is non-empty.
        \item For each $i \in [\kappa]$, let $M_{i} = \lceil \lvert \supp(A(c_i, c_i+1:k'))\rvert / \widetilde{k} \rceil$ be the number of appending matrices required for the control qubit $c_i$.
        \item The total number of appending matrices is $\numtargates = \sum_{i=1}^\kappa M_{i}$. We define the corresponding matrices $G_1^{(i,j)} \in \mathbb{F}_2^{k' \times \widetilde{n}}$ and $G_2^{(i, j)} \in \mathbb{F}_2^{k_2 \times \widetilde{n}}$ for $i \in [\kappa]$ and $j \in [\numtargates_i]$, where $k_2 := \dim(\widetilde{C}_2)$.
    \end{enumerate}
    
    \item For each $i \in [\kappa]$:
    \begin{enumerate}
        \item Let $\mathsf{s}_i=\lvert \supp(A(c_i, c_i+1:k')) \rvert$, and let $ c_i < j_1 < \cdots < j_{\mathsf{s}_i} \le k'$ denote the ordered indices in $\supp(A(c_i, c_i+1:k'))$. Setting $j_0:=c_i$, we decompose $A(c_i, \cdot)$ into $\numtargates_i$ address vectors $A_{i,1}, \ldots, A_{i,\numtargates_i}$, where:
        \begin{align*}
            A_{i,p} &= \left( \mathbf{0}_{j_{(p-1)\widetilde{k}}}, A\left(c_i, j_{(p-1)\widetilde{k}}+1:j_{p\widetilde{k}}\right), \mathbf{0}_{k'-j_{p\widetilde{k}}} \right), \quad \text{for }p \in [\numtargates_i-1] \text{ and} \\
            A_{i, \numtargates_i} &=  \left(\mathbf{0}_{j_{(\numtargates_i-1)\widetilde{k}}}, A\left(c_i, j_{(\numtargates_i-1)\widetilde{k}}+1:k'\right) \right)
        \end{align*}
        \item For each $j \in [\numtargates_i]$:
        \begin{enumerate}
            \item Let $\mathsf{s}_{i,j} = \lvert \supp(A_{i, j}) \rvert$, and let $c_i < p_1 < \cdots < p_{\mathsf{s}_{i,j}} \le k'$ denote the indices in $\supp(A_{i,j})$. We define the rows $y_1^{(i,j)}, \ldots, y_{k'}^{(i,j)}$ of $\widehat{G}_1^{(i,j)}$ for each $h \in [k']$ as:
        \begin{equation*}
            y_{h}^{(i,j)} = \begin{cases} 
            \bigoplus_{p=1}^{\widetilde{k}} \widetilde{y}_p, & \text{if } h=c_i,\\
            \widetilde{y}_q, & \text{if } h=p_q \text{ for some } q \in [\mathsf{s}_{i,j}], \\ 
                     \mathbf{0}_{\widetilde{n}}, & \text{otherwise.} 
        \end{cases} 
    \end{equation*}
        \item Set $\widehat{G}_2^{(i,j)} = G_{\widetilde{C}_2}$.
        
        \item To cancel the residual phase induced by the addressable logical gates $\left(\sqzrot_{A_1}\right)^{\widetilde{k}}$ and $\sqzrot_{A_2}$, we append additional auxiliary matrices to $\widehat{G}_1^{(i,j)}$ and $\widehat{G}_2^{(i,j)}$, as summarized in Section~\ref{sec:summary}. The corresponding address vectors are $A_1=(\mathbf{0}_{c_i-1}, 1, \mathbf{0}_{k'-c_i})$ and $A_2$, where $A_2(q)=1$ if and only if $q=p_r$ for $r \in \mathsf{s}_{i,j}$. This step yields the final appending matrices $G_1^{(i,j)}$ and $G_2^{(i,j)}$.
        \end{enumerate}
    \end{enumerate}
    \item The generator matrices of the final derived CSS code $(C_1, C_2)_{\tn{CSS}}$ are given by
    \begin{equation*}
        G_{C_2} = \begin{bmatrix}
            G_{C_2'} & 0 & \cdots & 0 \\
            0 & G_2^{(1,1)} & \cdots & 0 \\
            \vdots & \vdots &  \ddots & \vdots \\
            0 & 0 & \cdots & G_2^{(\kappa, M_{\kappa})}
        \end{bmatrix} \quad \text{and} \quad G_{C_1 / C_2} = \begin{bmatrix}
            G_{C_1' / C_2'} & G_1^{(1,1)} & \cdots & G_1^{(\kappa, \numtargates_\kappa)}
        \end{bmatrix}.
    \end{equation*}
\end{enumerate}
By Lemma~\ref{lem:params}, this procedure yields a derived CSS code with parameters $[[n'+3\widetilde{n}\numtargates, k', \ge \min\{d', \widetilde{d}\}]]$. Within this code, the target logical gate $\addczrot$ is implemented via the transversal physical $Z$-rotation 
\begin{equation*}
    \prod_{i=1}^\kappa \prod_{j=1}^{M_i} U\left(\ell+1,w_{1}^{(i,j)}\right)\left(U\left(\ell+1,w_{2}^{(i,j)} \right)\right)^{\widetilde{k}} U\left(\ell+1,w_{3}^{(i,j)} \right)
\end{equation*}
where the vectors $w_{1}^{(i,j)}$, $w_{2}^{(i,j)}$, and $w_{3}^{(i,j)}$ are given by
\begin{align*}
    w_1^{(i,j)} &= \left(\mathbf{0}_{n' + 3\widetilde{n}\left(\left(\sum_{p=1}^{i-1}M_p\right) + j-1\right)}, \mathbf{1}_{\widetilde{n}}, \mathbf{0}_{\widetilde{n}}, \mathbf{0}_{\widetilde{n}}, \mathbf{0}_{3\widetilde{n}\left( M_i - j + \left(\sum_{p=i+1}^\kappa M_p\right)\right)} \right), \\
    w_2^{(i,j)} &= \left(\mathbf{0}_{n' + 3\widetilde{n}\left(\left(\sum_{p=1}^{i-1}M_p\right) + j-1\right)}, \mathbf{0}_{\widetilde{n}}, -\mathbf{1}_{\widetilde{n}}, \mathbf{0}_{\widetilde{n}}, \mathbf{0}_{3\widetilde{n}\left( M_i - j + \left(\sum_{p=i+1}^\kappa M_p\right)\right)} \right), \\
    w_3^{(i,j)} &= \left(\mathbf{0}_{n' + 3\widetilde{n}\left(\left(\sum_{p=1}^{i-1}M_p\right) + j-1\right)}, \mathbf{0}_{\widetilde{n}}, \mathbf{0}_{\widetilde{n}}, -\mathbf{1}_{\widetilde{n}}, \mathbf{0}_{3\widetilde{n}\left( M_i - j + \left(\sum_{p=i+1}^\kappa M_p\right)\right)} \right).
\end{align*}

As in Section~\ref{ft_sq}, we conclude this section by constructing CSS codes that transversally realize any addressable logical $\czrot$ gate. A CSS code can realize any addressable logical $\czrot$ gate if and only if it can realize $\czrot$ between any pair of logical qubits. For the construction, we invoke the appending framework with $\numtargates=\binom{k'}{2}$, and for each pair of logical qubits $1 \le i < j \le k'$, define a target logical gate $\custlogicalU{i,j}=C_{i,j}\left( R_Z\left( \frac{\pi}{2^\ell} \right)\right)$. This amounts to instantiating the above construction with the target address matrices $A_{i,j}$ whose sole non-zero entry is $A_{i,j}(i,j)=1$.

\section{CSS Families Realizing Addressable Logical Single-Qubit Diagonal Gates}\label{css_families}

Given an arbitrary but fixed sequence of target address vectors $(A_i)_{i \in \mathbb{N}}$ with $A_i \in \mathbb{F}_2^{k_i'}$, we aim to construct a CSS family ${((C_{1,i}, C_{2,i})}_{\tn{CSS}})_{i \in \mathbb{N}}$ encoding $k_i'$ logical qubits and transversally realizing the addressable logical gate $\sqzrot_{A_i}$. To achieve this via the appending framework requires an auxiliary CSS code family in which the physical transversal $\sqzrot^{\dagger}$ realizes the logical transversal $\sqzrot$. As detailed in~\cite{MN2025}, puncturing $\widehat{t}$ coordinates of a $2^{\ell+1}$-divisible classical $[\widehat{n},\widehat{k},\widehat{d}]$ code with dual distance $\widehat{d}^\perp$ yields such an auxiliary $[[ \widehat{n} - \widehat{t}, \widehat{t}, \ge \min\{ \widehat{d}-\widehat{t}, \widehat{d}^\perp - \widehat{t} \}]]$ CSS code for any integer $1 \le \widehat{t} < \min\{ \widehat{k}, \widehat{d}, \widehat{d}^\perp \}$. We construct the families for the cases $\ell=1$ and $\ell > 1$ separately. In both cases, we assume $\widehat{t}$ is chosen to scale as $\Theta \left( \min\{\widehat{k}, \widehat{d}, \widehat{d}^\perp\} \right)$.

\subsection{Case $\ell=1$:}\label{sec:l=1}
For fixed constants $R, \delta \in (0,1)$, the asymptotically good family of doubly-even ($4$-divisible) codes from~\cite{SMT1972,PW2007} yields an asymptotically good CSS family $((\widetilde{C}_{1,i}, \widetilde{C}_{2,i})_{\tn{CSS}})_{i \in \mathbb{N}}$ with parameters $[[\widetilde{n}_i, \widetilde{k}_i \ge \widetilde{n}_i R, \widetilde{d}_i \ge \widetilde{n}_i \delta]]$, in which the physical transversal $S^\dag$ realizes the logical transversal $S$. For each $i \in \mathbb{N}$, selecting $(\widetilde{C}_{1,i}, \widetilde{C}_{2,i})_{\tn{CSS}}$ as both the primary and auxiliary codes satisfies the bounded support condition $\lvert\supp(A_i)\rvert \le \widetilde{k}_i$. Consequently, the appending-matrix construction in Section~\ref{sec:bounded_support} yields a $[[2\widetilde{n}_i, \widetilde{k}_i, \ge \widetilde{d}_i]]$ CSS code that transversally realizes $S_{A_i}$. This confirms the existence of an asymptotically good CSS family supporting the addressable logical gates $S_{A_i}$. Note, however, that this resulting CSS family is not explicit because the underlying doubly even codes in~\cite{SMT1972,PW2007} are not explicit.

To construct such explicit asymptotically good families, we rely on the relaxed characterization in Proposition~\ref{prop:target_addr_log_sq_basis_relaxed} (with $\ell=1$) to realize addressable logical $S$ gates. This relaxed characterization establishes that, rather than requiring doubly-even codes, it suffices to use self-orthogonal codes in the puncturing construction above. Crucially, explicit asymptotically good binary self-orthogonal codes with asymptotically good duals are known to exist~\cite{ALT2001}.

\subsection{Case $\ell > 1$:}\label{sec:l>1}For this case, we employ the classical Reed-Muller family, $C_r:=\mathrm{RM}(r,(\ell+1) r+1)$ for any integer $r \in \mathbb{N}$. By Ax's theorem~\cite{Ax1964}, this code is $2^{\lceil ((\ell+1) r + 1)/r \rceil -1} = 2^{\ell+1}$-divisible. Furthermore, it has blocklength $\widehat{n}_r = 2^{(\ell+1)r+1}$, dimension $\widehat{k}_r = \sum_{j=0}^r \binom{(\ell+1)r+1}{j} \geq 2^{2r}$, minimum distance $\widehat{d}_r = 2^{\ell r+1}$, and dual distance $\widehat{d}_r^\perp = 2^{r+1}$. Puncturing $\widehat{t}=2^{r}$ coordinates results in an $[[ \widetilde{n}_r= 2^{(\ell+1)r +1} - 2^{r}, \widetilde{k}_r= 2^{r}, \widetilde{d}_r \ge 2^{r}]]$ CSS code, ${(\widetilde{C}_{1,r}, \widetilde{C}_{2,r})}_{\tn{CSS}}$, in which the physical transversal $\sqzrot^\dag$ realizes the logical transversal $\sqzrot$~\cite{MN2025}. These represent the best parameters currently known in the literature for such CSS codes.

Because the logical dimension of the auxiliary code ${(\widetilde{C}_{1,r}, \widetilde{C}_{2,r})}_{\tn{CSS}}$ does not scale linearly with its physical blocklength, we instantiate the primary code with an asymptotically good family of CSS codes ${((C_{1,i}', C_{2,i}')_{\tn{CSS}})}_{i \in \mathbb{N}}$ from~\cite{ALT2001}. For this family, as the blocklength grows arbitrarily large, the parameters $[[n_i', k_i', d_i']]$ scale linearly: $k_i' \ge R n_i'$ and $d_i' \ge \min\{d_{\min}(C_{1,i}' / C_{2,i}'), d_{\min}({(C_{2,i}')}^\perp)\} \geq \delta n_i'$, for fixed constants $R, \delta \in (0,1)$.

\subsubsection{CSS code families for sub-linear target vector support size:} We first construct CSS code families that transversally realize the addressable logical gate $\sqzrot_{A_i}$ for $A_i \in \mathbb{F}_2^{k_i'}$, assuming a sub-linear target support $\lvert \supp(A_i) \rvert \le {(n_i')}^{\epsilon}$ for a constant $\epsilon \in (0,1)$. 

For each $i \in \mathbb{N}$, we apply the appending-matrix construction to the primary CSS code $(C_{1,i}', C_{2,i}')_{\tn{CSS}}$. Because the logical dimension already scales linearly with $n_i'$, we seek an index $r$ for the auxiliary code $(\widetilde{C}_{1,r}, \widetilde{C}_{2,r})_{\tn{CSS}}$ such that the physical blocklength of the derived code remains $\Theta(n_i')$. 

The appending-matrix construction for single-qubit $Z$-rotations from Subsection~\ref{app_targ_addr_log_sq} requires $g = \lceil |\operatorname{Supp}(A_i)| / 2^r \rceil$ appending matrices. By Lemma~\ref{lem:params}, the blocklength of the derived CSS code ${(C_{1,i,r}, C_{2,i,r})}_{\tn{CSS}}$ is given by
\begin{equation*}
    n_i' + \widetilde{n}_r g = n_i' + ((2^{(\ell+1) r+1} - 2^r) \lceil |\operatorname{Supp}(A_i)| / 2^r \rceil) \ \leq \ n_i' + 2^{\ell r+1} {(n_i')}^{\epsilon} + 2^{(\ell+1) r+1}.
\end{equation*}
To ensure that the blocklength of the derived code scales linearly with $n_i'$, we set $r$ as
\begin{equation*}
    r := \max\left\{ 1, \left\lfloor \min\left\{ \frac{1 - \epsilon}{\ell}, \frac{1}{\ell+1}\right\} \log(n_i')\right\rfloor - 1 \right\}.
\end{equation*}
For sufficiently large $i$, we have
\begin{equation*}
        r = \left\lfloor \min\left\{ \frac{1 - \epsilon}{\ell}, \frac{1}{\ell+1}\right\} \log(n_i')\right\rfloor - 1 \ < \ \min\left\{ \frac{1 - \epsilon}{\ell}, \frac{1}{\ell+1} \right\} \log(n_i')
\end{equation*}
This allows us to bound the block length relative to $n_i'$ as follows:
\begin{align*}
    2^{\ell r} &< {(n_i')}^{\min\{1 - \epsilon, \ell/(\ell+1)\}} \le \ {(n_i')}^{(1-\epsilon)} \quad \text{and} \quad
    2^{(\ell+1) r} < {(n_i')}^{\min\{((\ell+1) (1 - \epsilon))/\ell, 1\}} \leq n_i' \\
    & \qquad \qquad \qquad \qquad \ \ \ \ \ \ \ \implies n_i' + \widetilde{n}_r g <  5 n_i'.
\end{align*}
Thus, the derived CSS code has physical blocklength $\Theta(n_i')$ and preserves the logical dimension of the primary code, $k_i' \geq R n_i'$. Its minimum distance is $\min\{d_X,d_Z\}$, where $d_X \geq  \delta n_i'$ and $d_Z \geq \min\{\delta n_i', 2^r\}$. To analyze the asymptotic distance, let $\eta = \min\left\{ \frac{1 - \epsilon}{\ell}, \frac{1}{\ell+1} \right\} \in (0,1)$. Since $\eta \le 1/{(\ell+1)}$, the term $2^r$ grows sub-linearly with respect to $n_i'$. Therefore, for large $i$, the $Z$-distance of the derived CSS code evaluates to
\begin{equation*}
    \min\{ n_i' \delta, 2^r \} =  2^{\left\lfloor \eta \log(n_i')\right\rfloor-1} \ge  \frac14 {(n_i')}^{\eta}.
\end{equation*}
This establishes that the minimum distance of the derived code scales sub-linearly as $\Omega({(n_i')}^{\eta})$.

In conclusion, we constructed an explicit CSS code family with parameters ${[[ \Theta(n_i'), k_i' \geq n_i' R, \Omega((n_i')^{\eta})]]}$ that transversally realizes the addressable logical gates $\sqzrot_{A_i}$. The inverse relationship between $\eta$ and $\epsilon$ exposes a tradeoff: a larger target support (increasing $\epsilon$)  decreases the minimum distance, while a restricted support increases it.

\subsubsection{CSS code families for arbitrary target vector support size:}

A limitation of the CSS code family constructed above is its inability to realize the logical gate $\sqzrot_{A_i}$ for linearly scaling target support sizes. In particular, the special case of realizing the logical transversal $\sqzrot$ exposes the primary bottleneck of our appending construction: the best known CSS codes realizing the logical transversal $\sqzrot$ via transversal gates achieve only sub-linear parameters $[[ \widetilde{n}_r , \Omega(\widetilde{n}_r^{1/(\ell+1)}), \Omega(\widetilde{n}_r^{1/(\ell+1)})]]$. Indeed, the existence of asymptotically good CSS codes transversally realizing the logical transversal $\sqzrot$ remains a prominent open problem. As demonstrated for the case $\ell=1$, resolving this open problem for larger $\ell$ would allow our framework to yield an asymptotically good CSS family for arbitrary support sizes. Solving the following classical coding problem, initially posed in~\cite{MN2025} for $\ell=3$ ($8$-divisible codes, see~\cite{BM2012}), would directly resolve this quantum open problem.
\begin{openproblem}\label{op1}
    For any fixed $\ell \ge 3$, does there exist a family of asymptotically good $2^\ell$-divisible codes whose dual codes are also asymptotically good?
\end{openproblem}

\subsection{CSS families realizing any addressable logical $\sqzrot$ gate}To realize any addressable logical $\sqzrot$ gate, we construct auxiliary matrices to apply $\sqzrot$ to each logical qubit independently and append them to the primary code. This again requires auxiliary codes wherein the physical transversal $\sqzrot^{\dagger}$ realizes the logical transversal $\sqzrot$. Because each appending-matrix targets only a single logical qubit, we set $\widehat{t}=1$. As before, we consider the cases $\ell=1$ and $\ell>1$ separately.

\subsubsection{Case $\ell=1$:} We use the asymptotically good primary CSS family from Section~\ref{sec:l=1}, which has parameters $[[\widetilde{n}_i, \Omega(\widetilde{n}_i), \Omega(\widetilde{n}_i)]]$, and realizes the logical transversal $S$ via physical transversal $S^\dag$. For the auxiliary codes (with $\widehat{t}=1$), we utilize a CSS family with parameters $[[\Theta(\widetilde{n}_i),1,\Omega(\widetilde{n}_i)]]$ in which the physical transversal $S^\dag$ realizes the logical $S$. Applying the appending-matrix construction to realize $S$ independently on each logical qubit yields a CSS family with parameters $[[\Theta({(\widetilde{n}_i)}^2), \Omega(\widetilde{n}_i), \Omega(\widetilde{n}_i)]]$. Rescaling the parameters in terms of the total physical blocklength, we obtain a final CSS family with parameters $[[\Theta(\widetilde{n}_i), \Omega( (\widetilde{n}_i)^{1/2} ), \Omega( (\widetilde{n}_i)^{1/2} )]]$ that can realize any addressable logical $S$ gate. While this resulting family is not explicit, we can construct such explicit families achieving the same asymptotic parameters by using explicit asymptotically good self-orthogonal codes with asymptotically good duals, as discussed in Section~\ref{sec:l=1}.

\subsubsection{Case $\ell>1$:} From Section~\ref{sec:l>1}, we use the asymptotically good primary CSS family with parameters $[[n_i', k_i' \ge n_i' R, d_i' \ge n_i' \delta]]$. For the auxiliary family (with $\widehat{t}=1$), we consider the Reed-Muller based CSS family with parameters $[[2^{(\ell+1)r+1}-1, 1, 2^{r+1}-1]]$ in which the physical transversal $\sqzrot^\dag$ realizes logical $\sqzrot$. Performing the appending construction with these codes, Lemma~\ref{lem:params} guarantees a CSS code family with parameters $[[ n_i' + (2^{(\ell+1) r+1} -1) k_i', k_i', \geq \min(d_i', 2^{r+1}-1)]]$ that realizes any addressable logical $\sqzrot$ gate. 

Because $k_i'$ grows linearly with $n_i'$, any choice of $r$ where $2^{(\ell+1) r}$ grows with $n_i'$ results in a super-linear overall physical blocklength. To balance this scaling against the minimum distance, we set
\begin{equation*}
    r = \max\left\{ 1, \lfloor (\epsilon/(\ell+1)) \log(n_i') \rfloor \right\},
\end{equation*}
for a constant $\epsilon > 0$. This choice bifurcates the asymptotic scaling into two regimes depending on which code limits the overall minimum distance: 
\begin{enumerate}
    \item $\epsilon \in (0,\ell+1)$. The auxiliary code limits the distance, yielding parameters \\ ${[[\Theta({(n_i')}^{1+\epsilon}), \Omega(n_i'), \Omega({(n_i')}^{\epsilon/(\ell+1)}) ]]}$.
    \item $\epsilon \ge \ell+1$. The primary code limits the distance, yielding parameters \\ ${[[ \Theta({(n_i')}^{1+\epsilon}), \Omega(n_i'), \Omega({n_i'}) ]]}$.
\end{enumerate}
Rescaling these families to a linear physical blocklength $\Theta(n_i')$ yields a family of CSS codes with parameters ${[[ \Theta({n_i'}), \Omega((n_i')^{1/(1+\epsilon)}), \Omega((n_i')^{\epsilon/((\ell+1)(1+\epsilon))}) ]]}$ for $\epsilon \in (0,\ell+1)$, and \\ $[[\Theta(n_i'), \Omega((n_i')^{1/ (1+\epsilon)}), \Omega((n_i')^{1/(1+\epsilon)})]]$ for $\epsilon \ge \ell+1$, realizing any addressable logical $\sqzrot$ gate. 

Finally, similar to the $\ell=1$ case, an affirmative resolution to Open Problem~\ref{op1} yields a CSS code family with parameters $[[ \Theta(n_i'), \Omega((n_i')^{1/2}), \Omega((n_i')^{1/2})]]$ that transversally realizes any addressable logical $\sqzrot$ gate. Note that for multi-controlled-$Z$ rotations, analogous CSS families can be constructed for both a fixed target sequence of tensors and any arbitrary addressable gate.

\bigskip
We remark that the CSS families constructed in this section are not low-density parity-check (LDPC), as the weights of all non-trivial stabilizers in CSS codes derived from punctured classical divisible codes scale with the blocklength. If both the primary and auxiliary codes possess LDPC $X$-stabilizer generators, our construction yields a derived CSS family with an LDPC $X$-stabilizer check matrix. However, because we guarantee the $Z$-distance by bounding the minimum weight of the entire logical-$Z$ space (which includes all non-trivial $Z$-stabilizers), we must sacrifice the LDPC property for the $Z$-stabilizer check matrix. Indeed, constructing LDPC auxiliary codes that transversally realize the logical transversal $\sqzrot$ is an open problem. 

\section{Conclusion}

To summarize, in this paper, we characterized CSS codes that realize target logical diagonal gates via transversal physical $Z$-rotations. We then introduced the appending framework, which extends a primary CSS code using appending matrices to transversally realize an arbitrary number of such target gates. Because CSS codes can only realize logical single-qubit $Z$-rotations and multi-controlled-$Z$ rotations, we detailed appending-matrix constructions for both these addressable cases. Finally, we constructed CSS code families that realize addressable logical single-qubit $Z$-rotations. 

We lay out some directions for future work here. As noted in Section~\ref{sec:introduction}, the construction of CSS codes that transversally realize the full logical Clifford group relies on an inefficient combination of $\ell$ independent code blocks, where each block individually realizes this group. An important direction for future work then is to construct single-block codes with shorter physical blocklengths that achieve this. Moreover, improving the asymptotic parameters of such families remains an open problem---specifically, determining whether there exist such families whose rate or minimum distance scales better than the square root of the physical blocklength.

While the appending framework can be utilized to flexibly realize any desired set of logical gates, the resulting blocklengths are generally not minimal because each appended support is exclusively dedicated to a specific target gate. To achieve shorter blocklengths, one could bypass appending matrices and directly construct CSS codes that satisfy the characterization conditions, either analytically or via computational search. However, satisfying these simultaneous constraints for multiple target gates within a tight blocklength is inherently difficult. Consequently, length-optimized codes typically sacrifice flexibility and only support a restricted set of target gates.

To maintain a robust $Z$-distance in the appending framework, the minimum weight of the entire logical-$Z$ space, including all non-trivial $Z$-stabilizers, must remain high. While this requirement is acceptable for the shorter blocklengths used in practical quantum computing, it constitutes the primary obstacle to constructing LDPC CSS families that realize target logical diagonal gates. Therefore, a direction for future research is developing alternative constructions that circumvent the need to bound the weight of all logical-$Z$ operators. 

An affirmative resolution to Open Problem~\ref{op1} would yield improved asymptotic parameters for all CSS families
derived from Reed-Muller based codes in this work (the case $\ell > 1$ for single-qubit $Z$-rotations). 

Finally, no existing CSS codes in the literature utilize the relaxed characterizations derived in Sections~\ref{sec:sq_relaxed} and~\ref{sec:mc_relaxed}. Exploiting these conditions could yield improved codes for realizing target logical diagonal gates. Currently, we only leverage these relaxations for $S$ and $\cz$ gates, where punctured self-orthogonal codes can directly replace punctured doubly-even codes. 

\appendix

\section{Addressable Logical Multi-Controlled-$Z$ Rotations in CSS Codes}\label{ft_mc_zrot}

For integers $m \geq 2$ and $\ell \ge 0$, we provide an alternative characterization of CSS codes that transversally support the addressable logical gate $\addmczrot$. This characterization motivates our appending-matrix construction for multi-controlled-$Z$ rotations. While Proposition~\ref{prop:addressable_multi_controlled} permits any $p \ge \ell + m-1$, we restrict our focus to $p = \ell + m-1$.

Analogous to Lemma~\ref{lem:schur_basis_vectors}, the following technical lemma reduces modular constraints on the entire code to equivalent conditions on its basis vectors.
\begin{lemma}\label{lem:schur_basis_vectors_for_mc}    
    Let $C \subseteq \mathbb{F}_{2}^n$ be a binary code with basis $\{x_1, \ldots, x_k\}$, and let $A \in \mathbb{F}_2^{k \times \cdots \times k}$ be a fixed order-$m$ tensor. For any $a \in \mathbb{F}_2^k$, define the corresponding codeword $x_a = \bigoplus_{i=1}^k a(i) x_i$. For any integer vector $w \in \mathbb{Z}^n$, the modular condition
    \begin{equation}\label{eq:orig_mod_code_mc}
        w \cdot x_a = 2^{m-1} \sum_{1 \leq i_1 < \cdots < i_m \leq k} A(i_1, \ldots, i_m) a(i_1) \cdots a(i_m) \pmod{2^{\ell+m}}    
    \end{equation}    
    holds for all $a \in \mathbb{F}_2^k$ if and only if for all ordered indices $1 \le i_1 < \cdots < i_{j} \le k$ with $j \in [\ell+m]$ , the following modular equation is satisfied:
    \begin{equation} 
    w \cdot \left( x_{i_1} \ast \cdots \ast x_{i_{j}}\right) = \begin{cases} 
        {(-1)}^{m-1} A(i_1, \ldots,i_m) \pmod{2^{\ell+1}}, & \text{if } j=m, \\ 
        0 \pmod{2^{\ell+m-j+1}}, & \text{otherwise.} 
    \end{cases} \label{eq:basis_mc}
    \end{equation}
\end{lemma}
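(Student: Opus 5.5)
The plan is to expand $w \cdot x_a$ using the inclusion–exclusion identity from Section~\ref{sec:preliminaries_and_notation}:
\begin{equation*}
w \cdot x_a = \sum_{j=1}^{k} (-2)^{j-1} \sum_{1 \le i_1 < \cdots < i_j \le k} a(i_1)\cdots a(i_j)\, w\cdot(x_{i_1}\ast\cdots\ast x_{i_j}).
\end{equation*}
Write $c_S := w \cdot (\ast_{i\in S} x_i)$ for nonempty $S \subseteq [k]$. Then condition~\eqref{eq:orig_mod_code_mc} becomes an identity between two functions of $a$ modulo $2^{\ell+m}$. The right-hand side is $\sum_{|S|=m} 2^{m-1} A(S)\prod_{i\in S}a(i)$, and the left-hand side is $\sum_S (-2)^{|S|-1} c_S \prod_{i\in S} a(i)$.

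For sufficiency, substitute~\eqref{eq:basis_mc} term by term.

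\emph{Terms with $|S|=j\neq m$ and $j\le \ell+m$.} Since $c_S \equiv 0 \pmod{2^{\ell+m-j+1}}$, the term $2^{j-1}c_S$ vanishes modulo $2^{\ell+m}$.

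\emph{Terms with $j>\ell+m$.} Here $2^{j-1}$ already vanishes modulo $2^{\ell+m}$, which is why the lemma only constrains $j\in[\ell+m]$.

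\emph{Terms with $j=m$.} We have $c_S = (-1)^{m-1}A(S) + 2^{\ell+1}t$, so
\begin{equation*}
(-2)^{m-1}c_S \equiv 2^{m-1}A(S) \pmod{2^{\ell+m}}.
\end{equation*}
Summing over $S$ recovers the right-hand side.

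For necessity, I would argue by strong induction on $|S|$, evaluating~\eqref{eq:orig_mod_code_mc} at $a=\mathbf 1_S$, the indicator vector of $S$. This gives
\begin{equation*}
\sum_{\emptyset\ne T\subseteq S} (-2)^{|T|-1} c_T \equiv 2^{m-1}\sum_{T\subseteq S,\,|T|=m} A(T) \pmod{2^{\ell+m}}.
\end{equation*}
Assume~\eqref{eq:basis_mc} holds for all proper subsets $T\subsetneq S$. By the sufficiency computation, each proper-subset term with $|T|\ne m$ contributes $0$, and each with $|T|=m$ contributes exactly $2^{m-1}A(T)$. These cancel against the corresponding right-hand-side terms, leaving
\begin{equation*}
(-2)^{|S|-1}c_S \equiv 2^{m-1}A(S)\,\mathbf 1\{|S|=m\} \pmod{2^{\ell+m}}.
\end{equation*}
Dividing by $2^{|S|-1}$ is legitimate when $|S|\le \ell+m$ and yields a congruence modulo $2^{\ell+m-|S|+1}$. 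For $|S|\neq m$ this is exactly $c_S\equiv 0$. For $|S|=m$ the modulus is $2^{\ell+1}$ and we get $c_S\equiv(-1)^{m-1}A(S)$. For $|S|>\ell+m$ nothing needs to be proved, and those subsets contribute $0$ in later steps automatically.

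The main obstacle is bookkeeping rather than ideas. The cancellation step must handle the $|T|=m$ terms correctly, since their contribution is only pinned down modulo $2^{\ell+m}$ after multiplying by $2^{m-1}$. The induction hypothesis must also be applied to all proper subsets, including those of size greater than $\ell+m$, which is harmless. The case $m=1$ can be checked against Lemma~\ref{lem:schur_basis_vectors}, which this argument reproduces.
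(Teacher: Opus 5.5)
Your proposal is correct and is essentially the paper's argument. Sufficiency comes from substituting the basis conditions into the inclusion--exclusion expansion, with the terms $j>\ell+m$ killed by $(-2)^{j-1}$. Necessity comes from induction on the number of indices, evaluating at the indicator vector and isolating the top-order Schur term. The only difference is cosmetic: the paper splits the induction into two phases ($j<m$, then $j\ge m$), while your strong induction on $|S|$ treats both uniformly and makes the cancellation of the $|T|=m$ terms against $2^{m-1}\sum A(T)$ explicit.
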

The proof of this lemma closely follows that of Lemma~\ref{lem:schur_basis_vectors} and is provided in Subsection~\ref{sec:technical_decomp_mod_mc}. 

Applying the Lemma to Proposition~\ref{prop:addressable_multi_controlled} yields the following proposition.
\begin{proposition}\label{prop:target_addr_log_mc_basis}        
    Let ${(C_1, C_2)}_{\tn{CSS}}$ be an $[[n,k]]$ CSS code, where $\{y_1, \ldots, y_k\}$ is a basis of $C_1/C_2$. For any order-$m$ target addressability tensor $A \in \mathbb{F}_2^{k \times \cdots \times k}$, the code realizes the addressable logical gate $\addmczrot$ via a transversal physical $Z$-rotation $U(\ell+m-1,w)$ if and only if for all $x \in C_2$, $a \in \mathbb{F}_2^k$, and ordered indices $1 \le i_1 < \cdots < i_{j} \le k$ with $j \in [\ell+m]$, the following modular equations hold:
    \begin{align*}
        &w \cdot x = 0 \pmod{2^{\ell+m}},  \qquad
        w \cdot {(x \ast y_a)} = 0 \pmod{2^{\ell+m-1}}, \\
        &w \cdot \left( y_{i_1} \ast \cdots \ast y_{i_{j}}\right) = \begin{cases} 
            {(-1)}^{m-1} A(i_1, \ldots,i_m) \pmod{2^{\ell+1}}, & \text{if } j=m, \\ 
            0 \pmod{2^{\ell+m-j+1}}, & \text{otherwise.} 
        \end{cases}
    \end{align*}
\end{proposition}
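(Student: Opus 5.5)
The plan is to combine Proposition~\ref{prop:addressable_multi_controlled} with Lemma~\ref{lem:schur_basis_vectors_for_mc}, as was done for Proposition~\ref{prop:target_addr_log_sq_basis}. Set $p=\ell+m-1$. Since $p\ge \ell+m-1$ holds with equality, Proposition~\ref{prop:addressable_multi_controlled} says that $U(\ell+m-1,w)$ realizes $\addmczrot$ if and only if three conditions hold for all $x\in C_2$ and $a\in\mathbb{F}_2^k$:
\begin{align*}
w\cdot x&=0 \pmod{2^{\ell+m}},\\
w\cdot (x\ast y_a)&=0 \pmod{2^{\ell+m-1}},\\
w\cdot y_a&=2^{m-1}\sum_{1\le i_1<\cdots<i_m\le k}A(i_1,\ldots,i_m)\,a(i_1)\cdots a(i_m)\pmod{2^{\ell+m}}.
\end{align*}
Here $2^{p-\ell}=2^{m-1}$ and $2^{p+1}=2^{\ell+m}$. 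The first two conditions already appear verbatim in the target statement.

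It remains to handle the third condition, which is quantified over the whole coset space. I would apply Lemma~\ref{lem:schur_basis_vectors_for_mc} to it, with the $y_i$ playing the role of the basis $x_i$. Some care is needed because $\{y_1,\ldots,y_k\}$ spans $C_1/C_2$ rather than a code. However, the lemma is purely a statement about the vectors $x_a=\bigoplus_i a(i)x_i$, the fixed representatives $y_a$ are exactly of this form, and the lemma never uses linear independence beyond indexing by $a$. So the lemma applies directly. It converts the third condition into the Schur-product conditions on $y_{i_1}\ast\cdots\ast y_{i_j}$ for $j\in[\ell+m]$, with the case split $j=m$ versus $j\neq m$, exactly as displayed in the proposition.

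Both directions follow because each step is an equivalence. The only thing to double-check is the range of $j$. Products of more than $\ell+m$ vectors enter the expansion of $w\cdot y_a$ with coefficient $(-2)^{j-1}$, which is divisible by $2^{\ell+m}$, so they are vacuous. This range is already built into the statement of the lemma.

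I expect no real obstacle, since the work lies in Lemma~\ref{lem:schur_basis_vectors_for_mc}, which we are allowed to assume. The one point to handle explicitly is that the lemma is applied to coset representatives rather than to a basis of a linear code.
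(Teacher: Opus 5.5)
Your proposal is correct and is the paper's argument. The paper obtains this proposition in one line: it specializes Proposition~\ref{prop:addressable_multi_controlled} to $p=\ell+m-1$ and applies Lemma~\ref{lem:schur_basis_vectors_for_mc} to the condition on $w\cdot y_a$. Your extra check that the lemma applies to the representatives $y_a=\bigoplus_i a(i)y_i$, which are linearly independent in $\mathbb{F}_2^n$ and so form a basis of the code they span, fills in a detail the paper leaves implicit.
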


\subsection{Appending Matrices for a Targeted Addressable Logical Gate $\addmczrot$}\label{app_targ_addr_log_mc}

Our construction of appending matrices for logical $(m-1)$-controlled $Z$-rotations proceeds inductively. We assume appending matrices have already been derived for all logical single-qubit $Z$-rotations and $(\widetilde{m}-1)$-controlled $Z$-rotations, where $\widetilde{m} < m$. The single-qubit constructions established in Subsection~\ref{app_targ_addr_log_sq} serve as the base case.

Given a primary $[[n',k']]$ CSS code $(C_1', C_2')_{\tn{CSS}}$ and an order-$m$ target address tensor $A \in \mathbb{F}_2^{k' \times \cdots \times k'}$, we employ the appending framework from Section~\ref{app_framework} to construct an extended code that transversally realizes the addressable logical gate $\addmczrot$. We construct the required appending matrices using the auxiliary $[[\widetilde{n}, \widetilde{k}, \widetilde{d}]]$ CSS code $(\widetilde{C}_1, \widetilde{C}_2)_{\tn{CSS}}$ from Section~\ref{app_targ_addr_log_sq}, in which the physical transversal $R_Z\left( \frac{\pi}{2^{\ell+m-1}} \right)^{\dagger}$ realizes the logical transversal $R_Z\left( \frac{\pi}{2^{\ell+m-1}} \right)$, and $\widetilde{d}_Z \ge d_{\min}(\widetilde{C}_2^\perp) \ge \widetilde{d}$.  

Recall from Section~\ref{app_targ_addr_log_sq} that if $\widetilde{y}_1, \ldots, \widetilde{y}_{\widetilde{k}}$ are the rows of the coset generator matrix $G_{\widetilde{C}_1/\widetilde{C}_2}$, the following hold for all $\widetilde{x} \in \widetilde{C}_2$, $a \in \mathbb{F}_2^{\widetilde{k}}$, $i \in [\widetilde{k}]$, and ordered indices $1 \leq i_1 < \cdots < i_{j} \le \widetilde{k}$ with $j \in \{2, \ldots, \ell+m\}$:
\begin{alignat}{2}
    w_H(\widetilde{x}) &= 0 \pmod{2^{\ell+m}},
    &\qquad w_H(\widetilde{x} \ast \widetilde{y}_a) &= 0 \pmod{2^{\ell+m-1}}, \nonumber \\
    w_H(\widetilde{y}_i) &= -1 \pmod{2^{\ell+m}}, &\qquad 
    w_H(\widetilde{y}_{i_1} \ast \cdots \ast \widetilde{y}_{i_j}) &= 0 \pmod{2^{\ell+m-j+1}}.
    \label{eq:secondary_css_basis_mod_mc}
\end{alignat}

The following proposition, which is crucial for our subsequent construction, analyzes the Schur product of $\bigoplus_{i=1}^{\widetilde{k}} \widetilde{y}_i$ with the rows of the coset generator matrix $G_{\widetilde{C}_1/\widetilde{C}_2}$.
\begin{proposition}\label{prop:schur_sum_basis_mc_secc_ode}
    Let $\widetilde{y}_1, \ldots, \widetilde{y}_{\widetilde{k}}$ denote the rows of $G_{\widetilde{C}_1/\widetilde{C}_2}$. The following modular conditions hold:
    \begin{enumerate}
        \item 
        \begin{equation}\label{eq:sum_1}
            w_H\left( \bigoplus_{i=1}^{\widetilde{k}} \widetilde{y}_i \right) = -\widetilde{k} \pmod{2^{\ell+m}}.
        \end{equation}
        \item For any index $p \in [\widetilde{k}]$,
        \begin{equation}
            w_H\left( \left(\bigoplus_{i=1}^{\widetilde{k}} \widetilde{y}_i \right) \ast \widetilde{y}_p \right) = -1 \pmod{2^{\ell+m-1}}.\label{eq:schur_prop_sc_code_mc_1}
        \end{equation}
        \item For any ordered sequence of indices $1 \le p_1 < \cdots < p_q \le \widetilde{k}$ with $q \ge 2$,
        \begin{equation}
            w_H\left( \left(\bigoplus_{i=1}^{\widetilde{k}} \widetilde{y}_i \right) \ast \left( \widetilde{y}_{p_1} \ast \cdots \ast \widetilde{y}_{p_q} \right) \right) = 0 \pmod{2^{\ell+m-q}}.\label{eq:schur_prop_sc_code_mc_2}
        \end{equation}
    \end{enumerate}
\end{proposition}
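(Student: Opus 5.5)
The plan is to prove all three parts with the integer expansion of a binary sum from Section~\ref{sec:preliminaries_and_notation}, namely
\begin{equation*}
w \cdot \Bigl(\bigoplus_{i=1}^{k} a(i) x_i\Bigr) = \sum_{t \ge 1} (-2)^{t-1} \sum_{j_1<\cdots<j_t} \Bigl(\prod_{r} a(j_r)\Bigr)\, w\cdot(x_{j_1}\ast\cdots\ast x_{j_t}),
\end{equation*}
taking $a=\mathbf{1}_{\widetilde{k}}$ and $x_i=\widetilde{y}_i$. The three parts differ only in the choice of $w$: part (1) uses $w=\mathbf{1}_{\widetilde{n}}$, part (2) uses $w=\widetilde{y}_p$, and part (3) uses $w=\widetilde{y}_{p_1}\ast\cdots\ast\widetilde{y}_{p_q}$, each viewed as a $0/1$ integer vector. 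In every case the left side is the Hamming weight we want. Write $L:=\ell+m$. The only inputs are the conditions \eqref{eq:secondary_css_basis_mod_mc}:
\begin{itemize}
\item $w_H(\widetilde{y}_i)=-1 \pmod{2^L}$;
\item $w_H$ of any $u$-fold Schur product of distinct $\widetilde{y}$'s is $0 \pmod{2^{L-u+1}}$ for $2\le u\le L$;
\item for $u>L$ this constraint is vacuous, which is harmless because the multiplier $2^{t-1}$ will then already supply enough powers of $2$.
\end{itemize}

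For part (1), the $t=1$ terms contribute $\sum_i w_H(\widetilde{y}_i) = -\widetilde{k} \pmod{2^L}$. A term with $t\ge 2$ carries the factor $2^{t-1}$ times a weight divisible by $2^{L-t+1}$, so it is divisible by $2^L$ (and if $t>L$, the factor $2^{t-1}$ alone suffices). This gives \eqref{eq:sum_1}.

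For part (2), observe that $\widetilde{y}_p\cdot(\widetilde{y}_{j_1}\ast\cdots\ast\widetilde{y}_{j_t}) = w_H$ of the Schur product over the index set $J\cup\{p\}$, using idempotence $x\ast x = x$ for binary vectors.
\begin{itemize}
\item For $t=1$: the term $j=p$ gives $w_H(\widetilde{y}_p)\equiv -1$, and each $j\ne p$ gives a pairwise product $\equiv 0 \pmod{2^{L-1}}$. So the total is $-1 \pmod{2^{L-1}}$.
\item For $t\ge2$: the union has size $u\in\{t,t+1\}$, so the term is divisible by $2^{t-1}\cdot 2^{L-u+1}\ge 2^{L-1}$.
\end{itemize}
This yields \eqref{eq:schur_prop_sc_code_mc_1}.

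For part (3), let $P=\{p_1,\dots,p_q\}$ with $q\ge2$. Each term becomes $2^{t-1}\, w_H\bigl(\ast_{i\in J\cup P}\widetilde{y}_i\bigr)$ with $u:=|J\cup P|\ge q\ge 2$. It is therefore divisible by $2^{t-1+L-u+1}=2^{L+t-u}$. Since $u\le t+q$, this exponent is at least $L-q$, so every term vanishes modulo $2^{L-q}$. When $u>L$, or when exponents are non-positive, the bound is trivial and the conclusion is unaffected. This gives \eqref{eq:schur_prop_sc_code_mc_2}.

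The only delicate point is the bookkeeping in part (3): products over $J$ and $P$ must collapse to products over $J\cup P$ by idempotence, and the bound $|J\cup P|\le t+q$ is exactly what yields the modulus $2^{\ell+m-q}$. I do not expect any genuine obstacle beyond checking these exponent inequalities.
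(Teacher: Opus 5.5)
Your proof is correct, but it organizes the computation differently from the paper. The paper handles parts (2) and (3) in three steps:
\begin{enumerate}
\item It uses idempotence to rewrite $\bigl(\bigoplus_i \widetilde{y}_i\bigr)\ast \widetilde{y}_p$ as $\widetilde{y}_p \oplus \bigoplus_{i\neq p}(\widetilde{y}_i\ast\widetilde{y}_p)$.
\item It peels off the $\widetilde{y}_p$ term via $w_H(a\oplus b)=w_H(a)-w_H(b)$ for $\supp(b)\subseteq\supp(a)$.
\item Only then does it apply the unweighted inclusion--exclusion formula to an XOR indexed by $[\widetilde{k}]\setminus P$.
\end{enumerate}
In part (3) this forces a case split on the parity of $q$, because $q$ copies of $\widetilde{y}_{p_1}\ast\cdots\ast\widetilde{y}_{p_q}$ appear in the XOR.

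You instead use the weighted expansion from the preliminaries, with $w$ the $0/1$ indicator of $\widetilde{y}_p$ or of $\widetilde{y}_{p_1}\ast\cdots\ast\widetilde{y}_{p_q}$. This turns all three parts into a single computation and removes the parity split.

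The price is that index sets $J$ meeting $P$ now appear in the sum and must also be bounded. Your estimate $|J\cup P|\le t+q$ handles them. In the paper's route every surviving term has exactly $r+q$ distinct factors, so the exponent count $2^{r-1}\cdot 2^{\ell+m-r-q+1}=2^{\ell+m-q}$ is exact with no overlap bookkeeping.

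Your explicit treatment of products with more than $\ell+m$ factors is also worth keeping. There the hypotheses in~\eqref{eq:secondary_css_basis_mod_mc} say nothing, and the multiplier $2^{t-1}$ must supply the divisibility. The paper's proof uses this step silently when it cites~\eqref{eq:secondary_css_basis_mod_mc} for inclusion--exclusion sums running up to $\widetilde{k}$.
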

The proof of this proposition is provided in Subsection~\ref{sec:schur_sum_basis_mc_secc_ode}.

\subsubsection{Reduction to a simpler special case}
We first decompose the target gate $\addmczrot$ by lexicographically ordering its control qubit tuples. Let $(1,2, \ldots, m-1) \leq \left(c_1^{(1)}, \ldots, c_{m-1}^{(1)}\right) < \cdots < \left(c_1^{(\kappa)}, \ldots, c_{m-1}^{(\kappa)}\right) \le (k'-m+2,\ldots,k')$ denote all the active control qubit tuples---that is, tuples $(i_1, \ldots, i_{m-1})$ for which there exists at least one target qubit $t \in \{i_{m-1}+1, \ldots, k'\}$ such that $A(i_1, \ldots, i_{m-1}, t)=1$. We factor $\addmczrot$ into $\kappa$ addressed gates
$C_{A_1}^{(m-1)}(\sqzrot), \ldots, C_{A_\kappa}^{(m-1)}(\sqzrot)$. Each tensor $A_j$ isolates the interactions controlled by the qubit tuple $\left(c_1^{(j)}, \ldots, c_{m-1}^{(j)}\right)$ by inheriting the following tuple slice of $A$ and vanishing elsewhere:
\begin{equation*}
    A_j\left(i_1, \ldots, i_{m-1},i_m\right):=
    \begin{cases}
        A\left(c_1^{(j)}, \ldots, c_{m-1}^{(j)},i_m\right), & \text{if } (i_1, \ldots, i_{m-1})=(c_1^{(j)}, \ldots, c_{m-1}^{(j)}) \\ &\quad \quad \quad \text{ and } i_m \in \{c_{m-1}^{(j)}+1, \ldots, k'\}, \\
        0, & \text{otherwise.}
    \end{cases}
\end{equation*}

If the appending-matrix construction is established for each factor $C_{A_j}^{(m-1)}\left(\sqzrot\right)$, the appending framework guarantees that each gate $C_{A_j}^{(m-1)}\left(\sqzrot\right)$ is realized by some transversal $Z$-rotation $U(\ell+m-1,w_j)$ in the final extended code. The composition $\prod_{j=1}^\kappa U(\ell+m-1,w_j)$ then realizes the full target gate $\addmczrot$. Consequently, without loss of generality, we assume that the only active control qubit tuple is $(1,\ldots,m-1)$---meaning, the logical $C^{(m-1)}\left( R_Z\left( \frac{\pi}{2^\ell} \right) \right)$ gates act exclusively with the first qubit tuple $(1,2,\ldots,m-1)$ as the control qubits and a subset of the remaining qubits as targets.

Furthermore, we restrict the construction to target matrices satisfying $\lvert \supp(A(1,\ldots, m-1,m:k')) \rvert \le \widetilde{k}$, where 
\begin{equation*}
    A(1, \ldots,m-1, m:k'):=(A(1, \ldots,m-1, m),A(1, \ldots,m-1, m+1), \ldots, A(1, \ldots,m-1, k')).
\end{equation*}
If the target support exceeds $\widetilde{k}$, the flexibility of the appending framework allows us to partition the addressable gate $\addczrot$ into $g=\lceil \lvert \supp(A(1,\ldots,m-1,m:k')) \rvert / \widetilde{k} \rceil$ factors, each acting from the control qubit tuple $(1, \ldots, m-1)$ onto at most $\widetilde{k}$ targets in $\{m, \ldots, k'\}$. Constructing appending matrices independently for each factor and composing their corresponding physical $Z$-rotations realizes the target gate $\addmczrot$ within the final extended code. 

\subsubsection{The appending-matrix construction in the special case of $|\supp(A(1,\ldots, m-1,m:k'))| \ \le \widetilde{k}$} 

To simplify the presentation of the construction, we further assume that $A(i_1, \ldots, i_{m-1},i_m)=1$ if and only if $(i_1, \ldots, i_{m-1})=(1,\ldots,m-1)$ and $i_m \in \{m, m+1, \ldots, \mathsf{s}+m-1 \}$, where $\mathsf{s}:=\lvert \supp(A(1,\ldots, m-1,m:k')) \rvert \le \widetilde{k}$. This effectively designates $\{m, \ldots, \mathsf{s}+m-1\}$ as the set of all target qubits. Again, recall that we are given a primary $[[n',k']]$ CSS code $(C_1', C_2')_{\tn{CSS}}$, and that we will obtain appending matrices $G_1$ and $G_2$ using an auxiliary $[[\widetilde{n}, \widetilde{k}, \widetilde{d}]]$ CSS code $(\widetilde{C}_1, \widetilde{C}_2)_{\tn{CSS}}$ for which the modular equations in~\eqref{eq:secondary_css_basis_mod_mc} hold. We define the rows $y_1, \ldots, y_{k'}$ of $G_1$ as follows: we set $y_j = \bigoplus_{p=1}^{\widetilde{k}} \widetilde{y}_p$ for the control qubit rows $j \in [m-1]$, assign $y_j = \widetilde{y}_{j-m+1}$ for target qubits $m \le j \le \mathsf{s}+m-1$, and set $y_j=0$ for all remaining rows $\mathsf{s}+m-1 < j \le k'$. 

Next, we set $G_2 = G_{\widetilde{C}_2}$, and let ${(\widehat{C}_1, \widehat{C}_2)}_{\tn{CSS}}$ denote the CSS code obtained by applying the appending-matrix construction to $(C_1', C_2')_{\tn{CSS}}$. 

\subsubsection{$U(\ell+m-1,w)$ is a logical operator for $(\widehat{C}_1, \widehat{C}_2)_{\tn{CSS}}$}
Consider the transversal $Z$-rotation $U(\ell+m-1, w)$ with $w=(\mathbf{0}_{n'}, {(-1)}^{m} \mathbf{1}_{\widetilde{n}})$. By construction, $G_2 = G_{\widetilde{C}_2}$ and the rows of $G_1$ are drawn from $\widetilde{C}_1/\widetilde{C}_2$. Consequently, for any $x=(x',\widetilde{x}) \in C_2$ and $y_a=(y', \widetilde{y}) \in C_1/ C_2$ (with $a \in \mathbb{F}_2^{k'}$), we have $\widetilde{x} \in \widetilde{C}_2$ and $\widetilde{y} \in \widetilde{C}_1/ \widetilde{C}_2$. By~\eqref{eq:secondary_css_basis_mod_mc}, this guarantees
\begin{equation*}
    w \cdot x = 0 \pmod{2^{\ell+m}} \quad \text{ and } \quad w \cdot (x \ast y_a) = 0 \pmod{2^{\ell + m -1}}.
\end{equation*}
Remark~\ref{rmk:logical_op} then establishes $U(\ell+m-1,w)$ as a logical operator on $(\widehat{C}_1, \widehat{C}_2)_{\tn{CSS}}$. 

\subsubsection{$U(\ell+m-1,w)$ realizes $\addmczrot$ up to a logical phase}
For any logical state $\ket{a}$ with $a \in \mathbb{F}_2^{k'}$, the action of $U(\ell+m-1,w)$ is given by
\begin{align*}
    U&(\ell+m-1,w)\ket{a}_L = \exp{\iota \frac{\pi}{2^{\ell+m-1}} \left(w \cdot \left( \bigoplus_{i=1}^{k'} a(i) y_i\right) \right)} \ket{a}_L \\
    &= \exp{\iota \frac{\pi}{2^{\ell+m-1}} \left( \sum_{i=1}^{k'} {(-1)}^{i+m-1} {2}^{i-1} \left( \sum_{1 \leq j_1 < \cdots < j_i \leq k'} \left( \prod_{p=1}^i a({j_p}) \right) w_H\left( y_{j_1} \ast \cdots \ast y_{j_i}\right) \right)\right)} \ket{a}_L.
\end{align*}

We next show that the $t$-fold Schur products for $t > m$ vanish in the phase exponent, while the $m$-fold products exactly realize the target gate $\addmczrot$. The $t$-fold products for $t < m$, however, introduce an extraneous logical phase, which we will eliminate in the next subsection. 

To see why the $t$-fold Schur products for $t > m$ vanish, first consider any ordered sequence of $t>m$ target qubit indices $m \le p_1 < \cdots < p_t \le \mathsf{s}+m-1$. Condition~\eqref{eq:secondary_css_basis_mod_mc} implies
\begin{equation*}
    w_H(y_{p_1} \ast \cdots \ast y_{p_t}) = w_H(\widetilde{y}_{p_1-m+1} \ast \cdots \ast \widetilde{y}_{p_t-m+1}) =  0 \pmod{2^{\ell+m-t+1}}.
\end{equation*}
On the other hand, if the Schur product combines the control qubit indices $1 \le p_1 < \cdots < p_q \le m-1$ with $q \ge 1$ and target qubit indices $m \le r_1 < \cdots < r_t \le \mathsf{s}+m-1$ with $t \ge m$ (and $q+t > m$),~\eqref{eq:schur_prop_sc_code_mc_2} yields
\begin{align*}
    w_H(y_{p_1} \ast \cdots \ast y_{p_q} \ast y_{r_1} \ast \cdots \ast y_{r_t}) &= w_H\left( \left( \bigoplus_{i=1}^{\widetilde{k}} \widetilde{y}_i \right) \ast \widetilde{y}_{r_1-m+1} \ast \cdots \ast \widetilde{y}_{r_t-m+1} \right) \\
    &= 0 \pmod{2^{\ell+m-t}} \\
    &= 0 \pmod{2^{\ell+m-q-t+1}}.
\end{align*}
Because only the first $\mathsf{s}+m-1$ rows of $G_1$ are non-zero, the preceding modular equations guarantee that all $t$-fold Schur products for $t > m$ vanish in the phase exponent. Consequently, the action of the physical rotation $U(\ell+m-1,w)$ reduces to
\begin{align*}
    U(&\ell+m-1,w)\ket{a}_L \\
    &= \exp{\iota \frac{\pi}{2^{\ell+m-1}} \left( \sum_{i=1}^{m} {(-1)}^{i+m-1} {2}^{i-1} \left( \sum_{1 \leq j_1 < \cdots < j_i \leq k'} \left( \prod_{p=1}^i a({j_p}) \right) w_H\left( y_{j_1} \ast \cdots \ast y_{j_i}\right) \right)\right)} \ket{a}_L.
\end{align*}

We now evaluate the $m$-fold Schur products of non-zero rows of $G_1$ in the phase exponent. For any control qubit indices $1 \leq p_1 < \cdots < p_q \leq m-1$ and target qubit indices $m \le r_1 < \cdots < r_t \le \mathsf{s}+m-1$ with $q+t=m$. We distinguish three cases: $q=0$, $1 \le q < m-1$ and $q=m-1$. If $q=0$, then by~\eqref{eq:secondary_css_basis_mod_mc}, these $m$-fold Schur products vanish:
\begin{equation*}
    w_H(y_{r_1} \ast \cdots \ast y_{r_m}) = w_H(\widetilde{y}_{r_1-m+1} \ast \cdots \ast \widetilde{y}_{r_m-m+1}) = 0 \pmod{2^{\ell+1}}.
\end{equation*}
If $1 \le q < m-1$, then $t \ge 2$. Consequently, by~\eqref{eq:schur_prop_sc_code_mc_2}, these products similarly vanish:
\begin{align*}
    w_H(y_{p_1} \ast \cdots \ast y_{p_q} \ast y_{{r_1}} \ast \cdots \ast y_{{r_t}}) &= w_H\left( \left( \bigoplus_{i=1}^{\widetilde{k}} \widetilde{y}_i \right) \ast \widetilde{y}_{r_1-m+1} \ast \cdots \ast \widetilde{y}_{r_t-m+1} \right) \\
    &= 0 \pmod{2^{\ell+m-t}} \\
    &= 0 \pmod{2^{\ell+1}}.
\end{align*}
If $q=m-1$, all control qubits are included, and~\eqref{eq:schur_prop_sc_code_mc_1} yields
\begin{equation*}
    w_H(y_{p_1} \ast \cdots \ast y_{p_{m-1}} \ast y_{{r_1}}) = w_H\left( \left( \bigoplus_{i=1}^{\widetilde{k}} \widetilde{y}_i \right) \ast \widetilde{y}_{r_1-m+1} \right) = -1 \pmod{2^{\ell+m-1}} = -1 \pmod{2^{\ell+1}}.
\end{equation*}
Substituting these evaluations into the $m$-fold product terms, and recalling that $A(i_1,\ldots,i_{m-1},i_m)=1$ if and only if $(i_1, \ldots, i_{m-1})=(1,\ldots,m-1)$ and $i_m \in \{m, m+1, \ldots, \mathsf{s}+m-1 \}$, we see that the logical action of the transversal physical rotation $U(\ell+m-1,w)$ further reduces to
\begin{align*}
    &U(\ell+m-1,w)\ket{a}_L \\
    &= \exp\Biggl\{\iota \frac{\pi}{2^{\ell+m-1}} \Biggl( \left( \sum_{i=1}^{m-1} {(-1)}^{i+m-1} {2}^{i-1} \left( \sum_{1 \leq j_1 < \cdots < j_i \leq k'} \left( \prod_{p=1}^i a({j_p}) \right) w_H\left( y_{j_1} \ast \cdots \ast y_{j_i}\right) \right)\right) + \\
    &\qquad \qquad \qquad \qquad \qquad \qquad \qquad \qquad \left(2^{m-1}\sum_{1 \le j_1 < \cdots < j_m \le k'} A(j_1, \ldots, j_m) a(j_1) \cdots a(j_m) \right) \Biggr)\Biggr\} \ket{a}_L \\
    &=  \exp{\iota \frac{\pi}{2^{\ell+m-1}} \left( \sum_{i=1}^{m-1} {(-1)}^{i+m-1} {2}^{i-1} \left( \sum_{1 \leq j_1 < \cdots < j_i \leq k'} \left( \prod_{p=1}^i a({j_p}) \right) w_H\left( y_{j_1} \ast \cdots \ast y_{j_i}\right) \right)\right)} \\
    &\qquad \qquad \qquad \qquad \qquad \qquad \qquad \qquad \qquad \qquad \qquad \qquad \qquad \qquad \qquad  \addmczrot \ket{a}_L.
\end{align*}
Thus, we have successfully realized the target gate $\addmczrot$ up to a logical phase. 

\subsubsection{Eliminating the residual logical phase}
The residual logical phase arises entirely from the $\widetilde{m}$-fold Schur products for $2 \le \widetilde{m} < m$, which correspond to addressable logical $(\widetilde{m}-1)$-controlled-$Z$ rotations, and the linear terms, which correspond to addressable logical single-qubit $Z$-rotations. Let $\logicalU'$ denote the composite logical gate induced by this extraneous phase. By our inductive construction hypothesis on the number of control qubits, we can systematically construct appending matrices for both these $\widetilde{m}$-controlled-$Z$ and single-qubit $Z$-rotations to transversally realize the inverse gate, $(\logicalU')^\dag$. Incorporating the appending matrices for $(\logicalU')^\dag$ cancels the residual phase, yielding the final extended CSS code that transversally realizes the target gate $\addmczrot$.  

To make this cancellation concrete for the $t$-fold Schur products, we explicitly evaluate the residual logical phases for $t=1$ and $t=2$. For the linear terms ($t=1$), by~\eqref{eq:secondary_css_basis_mod_mc} and~\eqref{eq:sum_1}, we have $w_H(\widetilde{y}_i)=-1 \pmod{2^{\ell+m}}$ and $w_H\left( \bigoplus_{i=1}^{\widetilde{k}} \widetilde{y}_i \right) = -\widetilde{k} \pmod{2^{\ell+m}}$. Consequently, the linear terms induce the addressable single-qubit $Z$-rotations $\left(R_Z\left(\frac{\pi}{2^{\ell+m-1}}\right)_{A_1}^{(-1)^{m-1}}\right)^{\widetilde{k}}$ and $R_Z\left(\frac{\pi}{2^{\ell+m-1}}\right)_{A_2}^{(-1)^{m-1}}$, where $A_1 = (\mathbf{1}_{m-1}, \mathbf{0}_{k'-m+1})$ and $A_2 = (\mathbf{0}_{m-1}, \mathbf{1}_{\mathsf{s}}, \mathbf{0}_{k'-\mathsf{s}-m+1})$. 

We systematically eliminate the phase induced by these linear terms by appending auxiliary matrices realizing $R_Z\left(\frac{\pi}{2^{\ell+m-1}}\right)_{A_1}$ and $R_Z\left(\frac{\pi}{2^{\ell+m-1}}\right)_{A_2}$. Depending on the parity of $m$, we may specifically need to realize the inverses of this gate; crucially, Remark~\ref{rmk:inverse} guarantees that this inversion is straightforward: if $U(p,w)$ realizes a logical gate $\logicalU$, then $U(p,-w)$ realizes $\logicalU^\dag$. 

Similarly, evaluating the $2$-fold product terms using $w_H\left( \left( \bigoplus_{i=1}^{\widetilde{k}} \widetilde{y}_i \right) \ast \widetilde{y}_j \right) = -1 \pmod{2^{\ell+m-1}}$ and $w_H\left( \bigoplus_{i=1}^{\widetilde{k}} \widetilde{y}_i \right) = -\widetilde{k} \pmod{2^{\ell+m-1}}$ identifies the residual addressable $1$-controlled-$Z$ rotations as $\left(C_{A_1}\left(R_Z\left(\frac{\pi}{2^{\ell+m-2}}\right)\right)^{(-1)^m}\right)^{\widetilde{k}}$ and $C_{A_2}\left(R_Z\left(\frac{\pi}{2^{\ell+m-2}}\right)\right)^{(-1)^m}$, where the non-zero entries of $A_1$ and $A_2$ in $\mathbb{F}_2^{k' \times k'}$ are
\begin{equation*}
    A_1(i,j)=1 \text{ for } 1 \leq i < j \le m-1 \quad \text{ and } \quad A_2(i,j)=1 \text{ for } i \in [m-1], \ m \le j \le \mathsf{s}+m-1.
\end{equation*}
Appending auxiliary matrices that realize the inverse of these gates cancels the phase induced by the $2$-fold products. Iterating this cancellation procedure through all remaining $t$-fold products ($3 \le t < m$) completely eliminates the extraneous logical phases. This completes our construction, yielding the final extended CSS code $(C_1, C_2)_{\tn{CSS}}$ that transversally realizes the target gate $\addmczrot$. 

\bigskip

As in Section~\ref{ft_cs}, we can construct CSS codes that transversally realize any addressable logical $\mczrot$ gate. A CSS code can realize any addressable logical $\mczrot$ gate if and only if it can realize $\mczrot$ across any logical qubit tuple $(i_1, \ldots, i_m)$, where $1 \le i_1 < \cdots < i_m \le k'$. For the construction, we invoke the appending framework with $\numtargates=\binom{k'}{m}$, and for every qubit tuple, define a target logical gate $\custlogicalU{i_1, \ldots, i_m}=C_{i_1, \ldots, i_m}\left( R_Z\left( \frac{\pi}{2^\ell} \right)\right)$. This amounts to instantiating the above construction with the target address tensors $A_{i_1, \ldots, i_m}$ whose sole non-zero entry is $A_{i_1, \ldots, i_m}(i_1, \ldots, i_m)=1$.

\subsection{Relaxed Modular Equations to Realize Addressable Logical Multi-Controlled-$Z$ Rotations}\label{sec:mc_relaxed}

As in Section~\ref{sec:sq_relaxed}, we demonstrate that the system of modular equations in Proposition~\ref{prop:target_addr_log_mc_basis} can be relaxed. Applying Lemma~\ref{lem:relaxed} in Appendix~\ref{sec:relaxed} to Proposition~\ref{prop:target_addr_log_mc_basis} yields the following alternative characterization:

\begin{proposition}\label{prop:target_addr_log_mc_basis_relaxed}        
    Let ${(C_1, C_2)}_{\tn{CSS}}$ be an $[[n,k]]$ CSS code, where $\{x_1, \ldots, x_{k_2}\}$ and $\{y_1, \ldots, y_k\}$ are bases for $C_2$ and $C_1/C_2$, respectively. For any order-$m$ target address tensor $A \in \mathbb{F}_2^{k \times \cdots \times k}$, the code realizes the addressable logical gate $\addmczrot$ via a transversal physical $Z$-rotation $U(\ell+m-1,w')$ for some $w' \in \mathbb{Z}^n$ if and only if there exists a vector $w \in \mathbb{Z}^n$ such that for all $a \in \mathbb{F}_2^k$, $p \in [k_2]$, ordered indices $1 \le p_1 < \cdots < p_q \le k_2$ with $q \in \{2, \ldots, k_2\}$, $i \in [k]$, ordered indices $1 \le i_1 < \cdots < i_j \le k$, the following conditions hold:
    \begin{align*}
        &w \cdot x_p = 0 \pmod{2^{\ell+m-1}}, \qquad \qquad  w \cdot (x_{p_1} \ast \cdots \ast  x_{p_q}) = 0 \pmod{2^{\ell + m - q +1}}, \\
        \qquad
        &\qquad \qquad \qquad  \; \; \; \   w \cdot {(x \ast y_a)} = 0 \pmod{2^{\ell+m-1}}, \\
        &\qquad\qquad w \cdot \left( y_{i_1} \ast \cdots \ast y_{i_{j}}\right) = \begin{cases} 
            {(-1)}^{m-1} A(i_1, \ldots,i_m) \pmod{2^{\ell+1}}, & \text{if } j=m, \\ 
            0 \pmod{2^{\ell+m-1}}, & \text{if }j=1, \\
            0 \pmod{2^{\ell+m-j+1}}, & \text{otherwise.} 
        \end{cases}
    \end{align*}
\end{proposition}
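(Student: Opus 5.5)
Necessity is immediate. If $U(\ell+m-1,w')$ realizes $\addmczrot$, then Proposition~\ref{prop:target_addr_log_mc_basis} holds for $w'$, and we simply take $w=w'$. Every strict condition there implies its relaxed counterpart. For $x_p\in C_2$, $w'\cdot x_p=0 \pmod{2^{\ell+m}}$ gives the $2^{\ell+m-1}$ condition. The $j=1$ condition $0 \pmod{2^{\ell+m}}$ gives $0\pmod{2^{\ell+m-1}}$. The remaining $y$-conditions and the $x\ast y_a$ condition are identical. For the Schur products of $C_2$ basis vectors, apply Lemma~\ref{lem:schur_basis_vectors_for_mc} to the code $C_2$ with $A=0$ (or the analogous expansion of $w'\cdot x=0 \pmod{2^{\ell+m}}$ for all $x\in C_2$). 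This yields $w'\cdot(x_{p_1}\ast\cdots\ast x_{p_q})=0 \pmod{2^{\ell+m-q+1}}$.

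For sufficiency I would mirror the single-qubit discussion after Proposition~\ref{prop:target_addr_log_sq_basis_relaxed} and invoke Lemma~\ref{lem:relaxed}. Starting from $w$, I construct $w'=w-2^{\ell+m-1}w_1-2^{\ell+m-1}w_2$, with $w_1,w_2$ chosen $0/1$-valued. Adding multiples of $2^{\ell+m-1}$ changes $w\cdot v$ by $2^{\ell+m-1}(w_i\cdot v)$. It therefore does not disturb any condition whose modulus is at most $2^{\ell+m-1}$. That covers every condition of Proposition~\ref{prop:target_addr_log_mc_basis} except $w\cdot x\equiv0$ and $w\cdot y_i\equiv0$ modulo $2^{\ell+m}$. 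The $x\ast y_a$ condition, the $j=m$ condition modulo $2^{\ell+1}$ (note $\ell+1\le \ell+m-1$ as $m\ge2$), and the $j\ge2$ conditions are all unaffected.

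The lifting step is next. Since $\{x_1,\dots,x_{k_2},y_1,\dots,y_k\}$ is a basis of $C_1$, it is linearly independent over $\mathbb F_2$. Hence for any target bits $b\in\mathbb F_2^{k_2}$ and $c\in\mathbb F_2^{k}$ there is a binary vector $u$ with $u\cdot x_p\equiv b_p$ and $u\cdot y_i\equiv c_i \pmod 2$. Choose $b_p=(w\cdot x_p)/2^{\ell+m-1} \bmod 2$ and $c_i=(w\cdot y_i)/2^{\ell+m-1}\bmod 2$. Then $w'=w-2^{\ell+m-1}u$ satisfies $w'\cdot x_p\equiv0$ and $w'\cdot y_i\equiv 0 \pmod{2^{\ell+m}}$; splitting $u$ into $w_1,w_2$ is cosmetic. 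Only the dot products modulo 2 of $u$ matter here, because of the factor $2^{\ell+m-1}$.

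The main obstacle is passing from basis vectors $x_p$ to all $x\in C_2$ modulo $2^{\ell+m}$. For this I use the inclusion–exclusion expansion of $w'\cdot x$ over the basis. The $q$-fold terms carry a factor $2^{q-1}$, and by hypothesis (unchanged by the lift) they are $0\pmod{2^{\ell+m-q+1}}$, so each contributes $0 \pmod{2^{\ell+m}}$. Similarly, the $y$-side conditions assemble, via Lemma~\ref{lem:schur_basis_vectors_for_mc}, into the required condition on $w'\cdot y_a$. Then Proposition~\ref{prop:target_addr_log_mc_basis} applies to $w'$, completing the proof. I would verify carefully that the range $q\le k_2$ fully covers the expansion and that the $q=1$ case is precisely the lifted condition.
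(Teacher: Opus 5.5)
Your proposal is correct and follows essentially the paper's route: necessity is the direct weakening of Proposition~\ref{prop:target_addr_log_mc_basis}, and sufficiency is the lifting mechanism of Lemma~\ref{lem:relaxed}, namely $w' = w - 2^{\ell+m-1}u$ for a binary $u$ that encodes the parity of the excess. Your choice of a single $u$ on the joint basis $\{x_1,\ldots,x_{k_2},y_1,\ldots,y_k\}$ of $C_1$, together with the direct check that only the modulus-$2^{\ell+m}$ conditions are perturbed, is a slightly cleaner execution than the paper's separate $w_1, w_2$. It also sidesteps the fact that Lemma~\ref{lem:relaxed} is literally stated only for the single-qubit pattern, where the Schur-product conditions are all zero, whereas here the $m$-fold condition equals $(-1)^{m-1}A$.
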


\subsection{Proof of Lemma~\ref{lem:schur_basis_vectors_for_mc}}\label{sec:technical_decomp_mod_mc}
\begin{proof}       
    We establish necessity via a two-phase induction on $j \in [\ell+m]$: the first phase covers $j=1$ through $m-1$ (with base case $j=1$), while the second spans $j=m$ through $\ell+m$ (with base case $j=m$). We begin with the base case of the first phase: for any $i \in [k]$, evaluating~\eqref{eq:orig_mod_code_mc} at $a = e_{i}$ establishes the base case
    \begin{equation*}
        w \cdot x_i = 0 \pmod{2^{\ell+m}},
    \end{equation*}        
    where the last equality holds because $m \ge 2$. For the induction step, assume the result holds for all $p$ where $1 \le p \le j < m-1$. Then, for any ordered indices $1 \le i_1 < \cdots < i_{p} \le k$, we have
    \begin{equation}\label{eq:ind_hyp_mc}
        w \cdot \left(x_{i_1} \ast \cdots \ast x_{i_p} \right) = 0 \pmod{2^{\ell+m-p+1}}.
    \end{equation}
    Now, for any ordered indices $1 \le i_1 <  \cdots < i_{j+1} \le k$, evaluating~\eqref{eq:orig_mod_code_mc} at $a = \bigoplus_{p=1}^{j+1} e_{i_p}$ yields
    \begin{equation}\label{eq:eval_1_mc}
        w \cdot \left( \bigoplus_{p=1}^{j+1} x_{i_p} \right) = 0 \pmod{2^{\ell+m}}.
    \end{equation}    
    By expanding the binary addition on the left-hand side of~\eqref{eq:eval_1_mc} and applying the induction hypothesis~\eqref{eq:ind_hyp_mc}, we obtain
    \begin{equation}\label{eq:eval2_mc}
        w \cdot \left( \bigoplus_{p=1}^{j+1} x_{i_p} \right) = {(-2)}^{j} w \cdot \left( x_{i_1} \ast \cdots \ast x_{i_{j+1}} \right) \pmod{2^{\ell+m}}. 
    \end{equation}
    Equating~\eqref{eq:eval_1_mc} and~\eqref{eq:eval2_mc}, we complete the phase one induction argument:
    \begin{equation}\label{eq:ind_interim_mc}
         w \cdot \left( x_{i_1} \ast \cdots \ast x_{i_{j+1}} \right) = 0 \pmod{2^{\ell+m-j}}.
    \end{equation}  
    
    We now turn to the base case of the second phase ($j=m$). For any ordered indices $1 \le i_1 < \cdots < i_m \le k$, evaluating~\eqref{eq:orig_mod_code_mc} at $a = \bigoplus_{p=1}^{m} e_{i_p}$ yields
    \begin{equation}\label{eq:eval_3_mc}
        w \cdot \left( \bigoplus_{p=1}^{m} x_{i_p} \right) = 2^{m-1}A(i_1, \ldots, i_m) \pmod{2^{\ell+m}}. 
    \end{equation}      
    Expanding the binary addition on the left-hand side of~\eqref{eq:eval_3_mc} and incorporating the modular conditions for $1 \le j \le m-1$, we arrive at the base case, $j=m$:
    \begin{equation}
        w \cdot \left( x_{i_1} \ast \cdots \ast x_{i_m} \right) = {(-1)}^{m-1} A(i_1, \ldots, i_m) \pmod{2^{\ell + 1}}.
    \end{equation}    
    For the second phase induction, assume the result holds for all $p$ where $m \le p \le j < \ell + m$. Then, for any ordered indices $1 \leq i_1 < \cdots < i_p \le k$, we have
    \begin{equation}\label{eq:ind_hyp_phase_2_mc}
        w \cdot (x_{i_1} \ast \cdots \ast x_{i_p}) =  \begin{cases} 
    {(-1)}^{m-1} A(i_1, \ldots, i_m) \pmod{2^{\ell+1}}, & \text{if } p=m, \\ 
                 0 \pmod{2^{\ell+m-p+1}}, & \text{otherwise.} 
    \end{cases} 
    \end{equation}
    Now, for any ordered indices $1 \le i_1 < \cdots < i_{j+1} \le k$, evaluating~\eqref{eq:orig_mod_code_mc} at $a=\bigoplus_{p=1}^{j+1}e_{i_p}$, we obtain
    \begin{equation}\label{eq:eval_4_mc}
        w \cdot \left( \bigoplus_{p=1}^{j+1} x_{i_p} \right) = 2^{m-1} \sum_{1 \le q_1 < \cdots < q_m \leq j+1} A(i_{q_1}, \ldots, i_{q_m}) \pmod{2^{\ell+m}}.
    \end{equation}    
    By expanding the binary addition on the left-hand side of~\eqref{eq:eval_4_mc} and applying the induction hypothesis~\eqref{eq:ind_hyp_phase_2_mc}, as well as the first phase modular conditions, we obtain
    \begin{equation*}
        w \cdot \left( x_{i_1} \ast \cdots \ast x_{i_{j+1}} \right) = 0 \pmod{2^{\ell+m-j}},
    \end{equation*}
    completing the induction and establishing necessity. To prove sufficiency, we begin by expanding the binary addition on the left-hand side of~\eqref{eq:orig_mod_code_mc} into an integer sum:
    \begin{equation*}
        w \cdot \left(\bigoplus_{i=1}^{k} a(i) x_i \right) = \sum_{j=1}^{k} (-2)^{j-1} \left( \sum_{1 \le i_1 < \cdots < i_j \le k} \left( \prod_{p=1}^{j} a({i_p}) \right) w \cdot (x_{i_1} \ast \cdots \ast x_{i_j}) \right).
    \end{equation*}
    
    Because we evaluate this expression modulo $2^{\ell+m}$, any term in the outer summation with $j \geq \ell+m+1$ vanishes. Substituting~\eqref{eq:basis_mc} into the remaining terms completes the proof:
    \begin{equation*}
        w \cdot {\left(\bigoplus_{i=1}^{k} a(i) x_i \right)} 
        = 2^{m-1} \sum_{1 \leq i_1 < \cdots < i_m \le k} \left( \prod_{p=1}^m a({i_p}) \right) A(i_1, \ldots, i_m) \pmod{2^{\ell+m}}.
    \end{equation*}
\end{proof}
\subsection{Proof of Proposition~\ref{prop:schur_sum_basis_mc_secc_ode}}\label{sec:schur_sum_basis_mc_secc_ode}
\begin{proof}
By the inclusion-exclusion principle,
\begin{align*}
    w_H\left(\bigoplus_{i=1}^{\widetilde{k}} \widetilde{y}_i \right) &= \sum_{j=1}^{\widetilde{k}} {(-2)}^{j-1} \left( \sum_{1 \le i_1 < \cdots < i_j \le \widetilde{k}} w_H(\widetilde{y}_{i_1} \ast \cdots \ast \widetilde{y}_{i_j}) \right) \\
    &= -\widetilde{k} \pmod{2^{\ell+m}},
\end{align*}
where the last equality follows by~\eqref{eq:secondary_css_basis_mod_mc}. We now turn to prove the second claim: For any $p \in [\widetilde{k}]$, expanding the left-hand side of~\eqref{eq:schur_prop_sc_code_mc_1} via the inclusion-exclusion principle alongside the idempotent property $\widetilde{y}_p \ast \widetilde{y}_p = \widetilde{y}_p$ yields
\begin{align*}
   w_H\left( \left(\bigoplus_{i=1}^{\widetilde{k}} \widetilde{y}_i \right) \ast \widetilde{y}_p \right) &= w_H\left(\widetilde{y}_p \oplus \left(\bigoplus_{i \ne p} \widetilde{y}_i \ast \widetilde{y}_p\right)\right) \\
   &= w_H(\widetilde{y}_p) + w_H\left( \bigoplus_{i\ne p} \widetilde{y}_i \ast \widetilde{y}_p \right) - 2 w_H\left( \widetilde{y}_p \ast \left( \bigoplus_{i\ne p} \widetilde{y}_i \ast \widetilde{y}_p \right) \right) \\
   &= w_H(\widetilde{y}_p) - w_H\left( \bigoplus_{i\ne p} \widetilde{y}_i \ast \widetilde{y}_p \right) 
\end{align*}
Expanding the final term above via the inclusion-exclusion formula, we obtain
\begin{align*}
    w_H\left( \bigoplus_{i\ne p} \widetilde{y}_i \ast \widetilde{y}_p \right)  &= \sum_{q=1}^{\widetilde{k}-1} {(-2)}^{q-1} \left( \sum_{\substack{1 \le r_1 < \cdots < r_q \le \widetilde{k}: \\
    r_s \ne p \ \forall s \in [q] }} w_H(\widetilde{y}_{r_1} \ast \cdots \ast \widetilde{y}_{r_q} \ast \widetilde{y}_p) \right) \\
    &= 0 \pmod{2^{\ell+m-1}},
\end{align*}
by virtue of~\eqref{eq:secondary_css_basis_mod_mc} again. Substituting this result back into our main computation gives
\begin{equation*}
   w_H\left( \left(\bigoplus_{i=1}^{\widetilde{k}} \widetilde{y}_i \right) \ast \widetilde{y}_p \right) = w_H(\widetilde{y}_p) \pmod{2^{\ell+m-1}}. 
\end{equation*}
Since we already know from~\eqref{eq:secondary_css_basis_mod_mc} that $w_H(\widetilde{y}_p)=-1\pmod{2^{\ell+m}}$, reducing the modulus establishes
\begin{equation*}
    w_H\left( \left(\bigoplus_{i=1}^{\widetilde{k}} \widetilde{y}_i \right) \ast \widetilde{y}_p \right) = -1 \pmod{2^{\ell+m-1}},
\end{equation*}
completing the proof of the second claim~\eqref{eq:schur_prop_sc_code_mc_1}.

To prove the third claim~\eqref{eq:schur_prop_sc_code_mc_2}, consider an arbitrary ordered sequence of distinct indices $1 \le p_1 < \cdots < p_q \leq \widetilde{k}$ with $q \ge 2$, and let $P=\{p_1, \ldots, p_q\}$. Evaluating the left-hand side of~\eqref{eq:schur_prop_sc_code_mc_2} based on the parity of $q$, we have
\begin{equation*}
   w_H\left( \left(\bigoplus_{i=1}^{\widetilde{k}} \widetilde{y}_i \right) \ast \left( \widetilde{y}_{p_1} \ast \cdots \ast \widetilde{y}_{p_q}  \right) \right) = \begin{cases} 
        w_H\left(\bigoplus_{i \in [\widetilde{k}] \setminus P} \widetilde{y}_i \ast \widetilde{y}_{p_1} \ast \cdots \ast \widetilde{y}_{p_q} \right), & \text{if $q$ is even}, \\ 
         w_H\left( \left( \widetilde{y}_{p_1} \ast \cdots \ast \widetilde{y}_{p_q} \right) \oplus \bigoplus_{i \in [\widetilde{k}] \setminus P} \widetilde{y}_i \ast \widetilde{y}_{p_1} \ast \cdots \ast \widetilde{y}_{p_q} \right), & \text{otherwise.} 
    \end{cases} 
\end{equation*} 
For odd $q$, applying inclusion-exclusion yields
\begin{align*}
    w_H&\left( \left( \widetilde{y}_{p_1} \ast \cdots \ast \widetilde{y}_{p_q} \right) \oplus \left( \bigoplus_{i \in [\widetilde{k}]\setminus P} \widetilde{y}_i \ast \widetilde{y}_{p_1} \ast \cdots \ast \widetilde{y}_{p_q} \right) \right) \\
    &\qquad \qquad \qquad =w_H(\widetilde{y}_{p_1} \ast \cdots \ast \widetilde{y}_{p_q}) - w_H\left( \bigoplus_{i \in [\widetilde{k}]\setminus P} \widetilde{y}_i \ast \widetilde{y}_{p_1} \ast \cdots \ast \widetilde{y}_{p_q} \right) \\
    &\qquad \qquad \qquad = - w_H\left( \bigoplus_{i \in [\widetilde{k}]\setminus P} \widetilde{y}_i \ast \widetilde{y}_{p_1} \ast \cdots \ast \widetilde{y}_{p_q} \right) \pmod{2^{\ell+m-q}},
\end{align*}
via~\eqref{eq:secondary_css_basis_mod_mc}. Note that showing that the last expression above is equal to $0 \pmod{2^{\ell+m-q}}$ also resolves the case of even $q$. 
To this end, applying inclusion-exclusion once again gives us
\begin{align*}
    w_H\left( \bigoplus_{i \in [\widetilde{k}] \setminus P} \widetilde{y}_i \ast \widetilde{y}_{p_1} \ast \cdots \ast \widetilde{y}_{p_q} \right) &= \sum_{r=1}^{\widetilde{k}-q} {(-2)}^{r-1} \left( \sum_{\substack{1 \le s_1 < \cdots < s_r \le \widetilde{k}: \\s_t \notin P \ \forall t \in [r]}} w_H(\widetilde{y}_{s_1} \ast \cdots \ast \widetilde{y}_{s_r} \ast \widetilde{y}_{p_1} \ast \cdots \ast \widetilde{y}_{p_q} ) \right) \\
    &= 0 \pmod{2^{\ell+m-q}},
\end{align*}
again by~\eqref{eq:secondary_css_basis_mod_mc}. This completes the proof of the third claim~\eqref{eq:schur_prop_sc_code_mc_2}.
\end{proof}

\section{Lifting Modular Equations}\label{sec:relaxed}

In this appendix, we prove the lemma that provides the mechanism for obtaining Propositions~\ref{prop:target_addr_log_sq_basis_relaxed} and~\ref{prop:target_addr_log_mc_basis_relaxed} from Propositions~\ref{prop:target_addr_log_sq_basis} and~\ref{prop:target_addr_log_mc_basis}, respectively. 
Specifically, the lemma demonstrates that if an integer vector $w$ satisfies certain modulo-$2^\ell$ equations over a code's basis vectors, alongside modular constraints on their $t$-fold Schur products (for $t \ge 2$), then we can systematically refine $w$ into a new vector $w'$ in such a way that the basis modular equations are lifted to their modulo-$2^{\ell+1}$ counterparts while leaving unchanged the modular constraints on all the $t$-fold ($t \ge 2$) Schur products.

\begin{lemma}\label{lem:relaxed}    
    Let $C \subseteq \mathbb{F}_{2}^n$ be a binary code with basis $\{x_1, \ldots, x_k\}$, and let $A \in \mathbb{F}_2^k$ be a fixed vector. Suppose there exists a vector $w \in \mathbb{Z}^n$ such that for all $i \in [k]$ and ordered indices $1 \leq i_1 < \cdots < i_{j} \le k$ with $j\in \{2, \ldots, \ell+1\}$, the following conditions hold:
    \begin{align} 
        w \cdot x_{i} &= A(i) \pmod{2^{\ell}}, \label{eq:basis_1_relaxed} \\ 
        w \cdot (x_{i_1} \ast \cdots \ast x_{i_{j}}) &= 0 \pmod{2^{\ell-j+2}}. \label{eq:basis_2_relaxed}  
    \end{align}
    Then, there exists an integer vector $w' \in \mathbb{Z}^n$ such that
    \begin{align} 
        w' \cdot x_{i} &= A(i) \pmod{2^{\ell+1}}, \label{eq:basis_3_relaxed} \\ 
        w' \cdot (x_{i_1} \ast \cdots \ast x_{i_{j}}) &= 0 \pmod{2^{\ell-j+2}}. \label{eq:basis_4_relaxed}
    \end{align}
\end{lemma}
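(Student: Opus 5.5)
The plan is to take $w'$ to be $w$ plus a correction by a multiple of $2^{\ell}$. Adding a multiple of $2^{\ell}$ to $w$ leaves every constraint of the form \eqref{eq:basis_4_relaxed} unchanged, since those constraints are modulo $2^{\ell-j+2}$ with $j \ge 2$, and $2^{\ell-j+2}$ divides $2^{\ell}$. It can, however, shift each $w \cdot x_i$ by $2^{\ell}$ modulo $2^{\ell+1}$, and that is exactly the freedom needed to lift \eqref{eq:basis_1_relaxed} to \eqref{eq:basis_3_relaxed}. Concretely, I will set $w' = w - 2^{\ell} v$ for a suitably chosen binary vector $v \in \{0,1\}^n \subseteq \mathbb{Z}^n$.

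First, I record the defect. By \eqref{eq:basis_1_relaxed}, for each $i \in [k]$ there is an integer $t_i$ with $w \cdot x_i = A(i) + 2^{\ell} t_i$. Let $c_i := t_i \bmod 2 \in \mathbb{F}_2$. Since
\begin{equation*}
w' \cdot x_i = A(i) + 2^{\ell}\left(t_i - v \cdot x_i\right),
\end{equation*}
condition \eqref{eq:basis_3_relaxed} holds if and only if $v \cdot x_i \equiv c_i \pmod 2$ for every $i \in [k]$.

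Second, I solve this parity system. For binary vectors, the integer dot product $v \cdot x_i$ reduces modulo $2$ to the $\mathbb{F}_2$ inner product. So I need $v \in \mathbb{F}_2^n$ with $G v^{T} = c^{T}$, where $G$ is the $k \times n$ matrix whose rows are $x_1, \ldots, x_k$. Because $\{x_1, \ldots, x_k\}$ is a basis of $C$, $G$ has full row rank $k$ over $\mathbb{F}_2$. The map $v \mapsto G v^{T}$ is therefore surjective onto $\mathbb{F}_2^k$, and a solution $v$ exists. I then lift it to a $\{0,1\}$-valued integer vector.

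Third, I check the Schur-product constraints. For $2 \le j \le \ell+1$,
\begin{equation*}
w' \cdot (x_{i_1} \ast \cdots \ast x_{i_j}) = w \cdot (x_{i_1} \ast \cdots \ast x_{i_j}) - 2^{\ell}\, v \cdot (x_{i_1} \ast \cdots \ast x_{i_j}).
\end{equation*}
The second term is divisible by $2^{\ell} $, hence by $2^{\ell-j+2}$. Together with \eqref{eq:basis_2_relaxed}, this gives \eqref{eq:basis_4_relaxed}.

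The only step that requires an actual argument is the solvability of the parity system, and it reduces entirely to linear independence of the basis. The degenerate case $\ell = 0$ needs no separate treatment: there are no Schur-product constraints to check, and the same correction works.
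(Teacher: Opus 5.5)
Your proof is correct and is essentially the paper's argument: the paper also sets $w' = w - 2^{\ell} w_1$, where $w_1 \in \mathbb{F}_2^n$ represents the linear parity-defect functional on $C$, and your solvability of $G v^{T} = c^{T}$ via full row rank is exactly the existence of such a $w_1$. The only difference is in the verification: the paper first uses the Schur-product conditions on $w$ to linearize $w \cdot x_a$ modulo $2^{\ell+1}$, then checks the code-wide congruence $w' \cdot x_a \equiv w_H(a \ast A) \pmod{2^{\ell+1}}$ and appeals to Lemma~\ref{lem:schur_basis_vectors}, whereas you check the basis and Schur-product conditions directly, which is slightly more economical.
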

\begin{proof}
    By Lemma~\ref{lem:schur_basis_vectors}, conditions~\eqref{eq:basis_3_relaxed} and~\eqref{eq:basis_4_relaxed} hold if and only if
    \begin{equation*}
        w' \cdot x_a  = w_H(a \ast A) \pmod{2^{\ell+1}}. 
    \end{equation*}
    To proceed, we introduce a function $f_1$ defined over the code $C$ as follows:
    \begin{equation*} 
        f_1(x_a) = w \cdot x_a \pmod{2^{\ell+1}},
    \end{equation*}
    where $x_a = \bigoplus_{p=1}^k a(p) x_p$. Expanding the dot product yields
    \begin{align}
        w \cdot x_a = \sum_{j=1}^k {(-2)}^{j-1} \left( \sum_{1 \le i_1 < \cdots < i_j \le k} \left( \prod_{p=1}^j a({i_p}) \right) w \cdot \left( x_{i_1} \ast \cdots \ast x_{i_j} \right) \right).
    \end{align}
        From~\eqref{eq:basis_2_relaxed}, $f_1(x_a)$ simplifies to
    \begin{equation*}
        f_1(x_a) = \sum_{i=1}^k a(i) w \cdot x_i \pmod{2^{\ell+1}}.
    \end{equation*}    
    Furthermore,~\eqref{eq:basis_1_relaxed} requires that for each $i \in [k]$, the difference $w \cdot x_{i} - A(i)$ must be either $0$ or $2^{\ell} \pmod{2^{\ell+1}}$. Factoring $2^\ell$ from these terms, the expression evaluates to
    \begin{align}
        f_1(x_a) &= \left( w_H(a \ast A) \pmod{2^{\ell+1}} \right) +  \left( \sum_{i=1}^k a(i) \left( w \cdot x_i - A(i) \right) \pmod{2^{\ell+1}} \right) \nonumber \\
        &=\left( w_H(a \ast A) \pmod{2^{\ell+1}} \right) + \left( 2^{\ell}\left( \sum_{i=1}^k a(i) \frac{\left( w \cdot x_i - A(i) \right)}{2^\ell} \pmod{2} \right) \pmod{2^{\ell+1}} \right).\label{eq45b}
    \end{align}
    Next, we isolate the modulo-$2$ summation by defining the function $f_2$ as
    \begin{equation*}
        f_2 \left(\bigoplus_{i=1}^k a(i) x_i \right) = \sum_{i=1}^k a(i) \frac{w \cdot x_i - A(i)}{2^\ell} \pmod{2}.
    \end{equation*}    
    As $f_2$ is a linear transformation from $C$ to $\mathbb{F}_2$, there exists a vector $w_1\in \mathbb{F}_2^n$ such that
    \begin{equation*}
        f_2(x_a) =  w_1 \cdot x_a \pmod{2}.
    \end{equation*}
    Substituting this dot product representation back into~\eqref{eq45b} yields
    \begin{equation*}
        (w - 2^\ell w_1) \cdot x_a = w_H(a \ast A) \pmod{2^{\ell+1}}.
    \end{equation*}
    Setting $w' = w - 2^{\ell} w_1$ satisfies the target modular equations~\eqref{eq:basis_3_relaxed} and~\eqref{eq:basis_4_relaxed}, thereby completing the proof.
\end{proof}

It should be pointed out that these modular equations cannot be further relaxed. If we weaken the Schur product condition $w \cdot (x_i \ast x_j) = 0 \pmod{2^{\ell}}$ to modulo $2^{\ell-1}$, the expansion of $f_1$ in~\eqref{eq45b} gains an additional residual term:
    \begin{equation*}
        2^{\ell} \sum_{1 \leq i < j \leq k} a(i) a(j) \frac{w \cdot (x_i \ast x_j)}{2^{\ell-1}} \pmod{2}.
    \end{equation*}
The cross-terms $a(i) a(j)$ make this expression non-linear over $C$. Consequently, it cannot be expressed as a dot product with a fixed vector, which prevents the linear refinement of $w$. While this residual could be interpreted as a bi-linear map over $C \times C$, physically implementing such a map requires physical two-qubit $CZ$ gates, destroying fault tolerance.

Furthermore, weakening the constraint $w \cdot x_i = 0 \pmod{2^{\ell+1}}$ to  modulo-$2^{\ell-1}$ produces the residual term:
    \begin{equation*}
        2^{\ell-1} \sum_{i=1}^{k} a(i) \frac{w \cdot x_i-A(i)}{2^{\ell-1}} \pmod{4}.
    \end{equation*}    
Because arithmetic modulo $4$ is not linear over $C$, we once again forfeit the linearity required to refine $w$.

\section{Proof of Lemma~\ref{lem:z_rotation_parameterization}}\label{app:z_rotation_parameterization_proof}
\begin{proof}
    We first verify sufficiency. For any non-negative integer $p$ and integer vector $w$, the operator $U$ defined in~\eqref{eq:z_rotation_diag} is precisely the transversal $Z$-rotation
    \begin{equation*}
        U = \bigotimes_{i=1}^n R_Z \left( \frac{\pi w(i)}{2^p} \right).
    \end{equation*}
    Conversely, consider an arbitrary dyadic transversal $Z$-rotation as described in Remark~\ref{rmk:transversalzrotations}:
    \begin{equation*}
        U = \bigotimes_{i=1}^n R_Z\left( \frac{\pi q_i}{2^{p_i}} \right).
    \end{equation*}
    Let $p_{\max} = \max \{ p_i : i \in [n] \}$, and define the integer vector $w = (w(1), \ldots, w(n)) \in \mathbb{Z}^n$ by setting $w(i) = 2^{p_{\max} - p_i} q_i$.
    By construction, at least one component of $w$ is odd---specifically, any $w(i)$ for which $p_i = p_{\max}$. For any computational basis state $\ket{x}$ with $x \in \mathbb{F}_2^n$, the action of $U$ evaluates to 
\begin{align*}
    U \ket{x} &= \left( \prod_{i=1}^n  \exp{ \iota \frac{\pi}{2^{p_i}} q_i x_i } \right)  \ket{x} \\
    &= \exp{ \iota \frac{\pi}{2^{p_{\max}}} \left( w \cdot x \right)} \ket{x}.
\end{align*}
Consequently, $\physicalU$ admits the diagonal representation
\begin{equation*}
U = \diag\left( \exp{\iota \frac{\pi}{2^{p_{\max}}} \left( w \cdot x \right) } : x \in \mathbb{F}_2^n \right).
\end{equation*}
To establish uniqueness, suppose there exist two such parameterizations:
\begin{equation}\label{eq:diagonal_equality}
    U = \diag \left( \exp{\iota \frac{\pi}{2^{p}} \left( w_1 \cdot x \right) } : x \in \mathbb{F}_2^n \right) = \diag\left( \exp{\iota \frac{\pi}{2^{q}} \left( w_2 \cdot x \right) } : x \in \mathbb{F}_2^n \right),
\end{equation}
where $p, q \ge 0$ are integers, and $w_1, w_2 \in \mathbb{Z}^n$ each contain at least one odd component. Let $i$ be an index such that $w_1(i)$ is odd. Evaluating both the representations at the standard basis vector $x = e_i$ yields
\begin{equation*}
    \exp{\iota \frac{\pi}{2^{p}} w_1(i)} = \exp{\iota \frac{\pi}{2^{q}} w_2(i)}.
\end{equation*}
Because $w_1(i)$ is odd, phase equality mandates that $q \geq p$. A symmetric argument applied to an odd component of $w_2$ implies that $p \geq q$, forcing $p = q$. Substituting $p = q$ in the diagonal equality~\eqref{eq:diagonal_equality} and evaluating at $x = e_i$ for every index $i \in [n]$ demonstrates that
\begin{equation*}
w_1(i) = w_2(i) \pmod{2^{p+1}},
\end{equation*}
which completes the proof.
\end{proof}

\section{Proof of Proposition~\ref{prop:addressable_multi_controlled}}\label{app:prop_addressable_multi_controlled}
\begin{proof}
    By Eq.~\eqref{eq:orig_char} in the proof of Theorem~\ref{thm:refined_fault_tolerant_logical_diag_gates}, the physical gate $U(p,w)$ realizes the logical gate $\logicalU$ if and only if $p \geq \ell$ and for all $x \in C_2$ and $a \in \mathbb{F}_2^k$, the following condition holds:
    \begin{equation}\label{eq:strict_mod_condition}
    w \cdot \left( x \oplus y_a \right) = 2^{p - \ell} \left( \sum_{1 \leq i_1 < \cdots < i_m \leq k} A(i_1, \ldots, i_m)  a(i_1) \cdots  a(i_m) \right) \pmod{2^{p+1}}.
    \end{equation}
    We now show that condition~\eqref{eq:strict_mod_condition} implies the stricter bound $p \geq \ell + m - 1$. We prove by contradiction, assuming $p \leq \ell+m-2$. Because the addressability tensor $A$ possesses at least one odd component, we may permute the indices such that $A(1, \ldots, m)$ is odd.

    Let $\{y_1, \ldots, y_k\}$ denote the basis for the coset space $C_1/C_2$. For any non-empty subset $S \subseteq [m-1]$, setting $x=0$ and $a=\bigoplus_{i \in S} e_i$ in condition~\eqref{eq:strict_mod_condition}, we obtain
    \begin{equation}\label{eq:necessity_expanded}
    w \cdot y_{a} = w \cdot \left( \bigoplus_{i \in S} y_i \right) = 0 \pmod{2^{p+1}}.
    \end{equation}
    As a consequence of~\eqref{eq:necessity_expanded}, for any non-empty subset $\{i_1, \ldots, i_r\} \subseteq [m-1]$, it follows that
    \begin{equation}\label{eq:schur_product_simplified}
    (-2)^{r-1} \left( w \cdot (y_{i_1} \ast \cdots \ast y_{i_r}) \right) = 0 \pmod{2^{p+1}}.
    \end{equation}
    Applying~\eqref{eq:schur_product_simplified} to the evaluation at $a = \bigoplus_{i=1}^m e_i$, we deduce that
    \begin{equation}\label{eq:necessity_simplified}
    w \cdot y_{a} = (-2)^{m-1} \left( w \cdot (y_1 \ast \cdots \ast y_m) \right) \pmod{2^{p+1}}.
    \end{equation}
    On the other hand, directly evaluating condition~\eqref{eq:strict_mod_condition} at $a = \bigoplus_{i=1}^m e_i$ (with $x = 0$) yields
    \begin{equation}\label{eq:strict_mod_inference} 
        w \cdot y_{a} = 2^{p - \ell} A(1, \ldots, m) \pmod{2^{p+1}}. 
    \end{equation}
    Combining~\eqref{eq:necessity_simplified} and~\eqref{eq:strict_mod_inference}, we can express this modular congruence as an integer equation. For some $c \in \mathbb{Z}$, we have
    \begin{align*}
        2^{p - \ell} A(1, \ldots, m) &=  {(-2)}^{m-1} \left( w \cdot (y_1 \ast \cdots \ast y_m) \right)  + 2^{p+1} c \\[5pt]
        \implies A(1, \ldots, m) &= {(-1)}^{m-1} 2^{m-p+\ell-1} \left( w \cdot (y_1 \ast \cdots \ast y_m) \right)  + 2^{\ell+1} c.
    \end{align*}
    Because we assumed $p \leq m + \ell -2$, the exponent $m - p + \ell -1$ is strictly positive. Furthermore, since $\ell \geq 0$, the exponent $\ell+1$ is also strictly positive. Thus, the right-hand side of the equation is even, implying that $A(1, \ldots, m)$ must be an even integer. This directly contradicts our initial premise that $A(1,\ldots, m)$ is odd. We therefore conclude that $p \geq \ell+m-1$.
\end{proof}

\section{Proof of Lemma~\ref{lem:schur_basis_vectors}}\label{app:lem_schur_basis_vectors}
\begin{proof}
    We begin by proving necessity. For any $i \in [k]$, evaluating~\eqref{eq:orig_mod_code} at $a = e_{i}$ yields~\eqref{eq:basis_1}:
    \begin{equation}\label{eq:basis_1a}
        w \cdot x_{i} = A(i) \pmod{2^{\ell+1}}.
    \end{equation}
    
    We prove~\eqref{eq:basis_2} by induction on $j \in [\ell+1]$, where the base case $j=1$ is already established by~\eqref{eq:basis_1a}. For the induction step, assume the result holds for all $1 \le p \le j < \ell+1$. Then, for any ordered indices $1 \le i_1 < \cdots < i_p \le k$, if $p=1$ we have~\eqref{eq:basis_1a} and if $p \ge 2$ we have
    \begin{equation}\label{eq:ind_hyp}
        w \cdot \left( x_{i_1} \ast \cdots \ast x_{i_p} \right) = 0 \pmod{2^{\ell -p +2}}.
    \end{equation}
    Now, for any ordered indices $1 \le i_1 <  \cdots < i_{j+1} \le k$, evaluating~\eqref{eq:orig_mod_code} at $a = \bigoplus_{p=1}^{j+1} e_{i_p}$ yields
    \begin{equation}\label{eq:eval_1}
        w \cdot \left( \bigoplus_{p=1}^{j+1} x_{i_p} \right) = \sum_{p=1}^{j+1} A(i_p) \pmod{2^{\ell+1}}. 
    \end{equation}    
    By expanding the binary addition, the left-hand side of~\eqref{eq:eval_1} can be rewritten as
    \begin{equation*}
        w \cdot \left( \bigoplus_{p=1}^{j+1} x_{i_p} \right) = \sum_{p=1}^{j+1} {(-2)}^{p-1} \sum_{1\leq q_1 < \cdots < q_p \leq j+1} w \cdot \left( x_{i_{q_1}} \ast \cdots \ast x_{i_{q_p}} \right) \pmod{2^{\ell+1}}.
    \end{equation*}
    Applying the induction hypothesis~\eqref{eq:basis_1a} and~\eqref{eq:ind_hyp}, we obtain
    \begin{equation}\label{eq:eval2}
        w \cdot \left( \bigoplus_{p=1}^{j+1} x_{i_p} \right) = \sum_{p=1}^{j+1} A(i_p) + {(-2)}^{j} w \cdot \left( x_{i_1} \ast \cdots \ast x_{i_{j+1}} \right) \pmod{2^{\ell+1}}. 
    \end{equation}
    Equating~\eqref{eq:eval_1} and~\eqref{eq:eval2}, we finally obtain
    \begin{equation*}
         w \cdot \left( x_{i_1} \ast \cdots \ast x_{i_{j+1}} \right) = 0 \pmod{2^{\ell-j+1}}, 
    \end{equation*}
    completing the induction and establishing necessity. To prove sufficiency, we begin by expanding the binary addition on the left-hand side of~\eqref{eq:orig_mod_code} into an integer sum:
    \begin{equation*}
        w \cdot \left(\bigoplus_{i=1}^{k} a(i) x_i \right) = \sum_{j=1}^{k} (-2)^{j-1} \left( \sum_{1 \le i_1 < \cdots < i_j \le k} \left( \prod_{p=1}^{j} a({i_p}) \right) w \cdot (x_{i_1} \ast \cdots \ast x_{i_j}) \right).
    \end{equation*}
    
    Because we evaluate this expression modulo $2^{\ell+1}$, any term in the outer summation with $j \geq \ell+2$ contains a coefficient of $(-2)^{j-1}$, which is a multiple of $2^{\ell+1}$ and therefore vanishes. Thus, the expansion strictly truncates to the first $\ell+1$ terms. Substituting~\eqref{eq:basis_1} and~\eqref{eq:basis_2} into these remaining terms causes all higher-order Schur products ($2 \le j \le \ell+1$) to vanish modulo $2^{\ell+1}$ as well, leaving only the linear terms:
    \begin{align*}
        w \cdot {\left(\bigoplus_{i=1}^{k} a(i) x_i \right)} 
        &= \sum_{i=1}^k a(i) A(i) \pmod{2^{\ell+1}}\\
        &= w_H(a \ast A) \pmod{2^{\ell+1}}. 
    \end{align*}
    This completes the proof.
\end{proof}

\section{Proof of Lemma~\ref{lem:params}}\label{app:lem_params}
\begin{proof}
    The block length and logical dimension of the derived CSS code follow directly from the construction. To establish the minimum distance, we begin by lower-bounding the $X$-distance. For any $x \in C_1 \setminus C_2$, we can decompose $x$ as $x = \widehat{x} + y$, where $\widehat{x} \in C_2$ and $y \in C_1/C_2$ represents a non-zero coset element. Partitioning these vectors according to their primary and appended coordinates, we write $\widehat{x} = (x', \widetilde{x})$ and $y = (y', \widetilde{y})$, where $x' \in C_2'$ and $y' \in C_1'/C_2'$. Because the appended submatrix $\left[
        G_1^{(1)} \; \cdots \; G_1^{(m)}
    \right]$ merely extends the linearly independent rows of $G_{C_1' / C_2'}$, the non-zero condition on $y$ enforces that $y' \neq 0$. Consequently, the weight of $x$ is lower-bounded by the $X$-distance of the primary CSS code, yielding
    \begin{equation*}
        d_X = d_{\min}(C_1 \setminus C_2) \geq d_{\min}(C_1' \setminus C_2') \geq d'.
    \end{equation*}
    Next, we turn to the $Z$-distance, $d_Z$. By the block-diagonal structure of $G_{C_2}$, the dual code $C_2^\perp$ decomposes into a direct sum of its constituent spaces:
    \begin{equation*}
    C_2^\perp = (C_2')^\perp \oplus (C_2^{(1)})^\perp \oplus \cdots \oplus (C_2^{(m)})^\perp.
    \end{equation*}
    Consequently, we obtain the lower bound
    \begin{equation*}
    d_Z = d_{\min}(C_2^\perp \setminus C_1^\perp) \geq d_{\min}( C_2^\perp ) \geq \min\{ d_{\min}((C_2')^\perp), d_{\min}((C_2^{(1)})^\perp), \ldots, d_{\min}((C_2^{(m)})^\perp) \}.
    \end{equation*}
    This completes the proof.
\end{proof}

\bibliographystyle{ieeetr}
\bibliography{references}

@article{RCNP2020,
  title = {On optimality of {CSS} codes for transversal {$T$}},
  author = {Rengaswamy, N. and Calderbank, R. and Newman, M. and Pfister, H.D.},
  journal = {IEEE Journal on Selected Areas in Information Theory},
  volume = {1},
  pages = {499--514},
  year = {2020},
  doi = {10.1109/JSAIT.2020.3012914}
  }

@article{ADP2014,
  title = {Fault-Tolerant Conversion between the {S}teane and {Reed}-{Muller} Quantum Codes},
  author = {Anderson, Jonas T. and Duclos-Cianci, Guillaume and Poulin, David},
  journal = {Phys. Rev. Lett.},
  volume = {113},
  issue = {8},
  pages = {080501},
  numpages = {5},
  year = {2014},
  month = {Aug},
  publisher = {American Physical Society},
  doi = {10.1103/PhysRevLett.113.080501},
  url = {https://link.aps.org/doi/10.1103/PhysRevLett.113.080501}
}

@misc{Z2025,
      title={A topological theory for q{LDPC}: non-{C}lifford gates and magic state fountain on homological product codes with constant rate and beyond the {$N^{1/3}$} distance barrier}, 
      author={Guanyu Zhu},
      year={2025},
      eprint={2501.19375},
      archivePrefix={arXiv},
      primaryClass={quant-ph},
      url={https://arxiv.org/abs/2501.19375},
      note={arXiv:2501.19375}
}

@article{ZSPWB2025,
  title = {{N}on-{C}lifford and Parallelizable Fault-Tolerant Logical Gates on Constant and Almost-Constant Rate Homological Quantum Low-Density Parity-Check Codes via Higher Symmetries},
  author = {Zhu, Guanyu and Sikander, Shehryar and Portnoy, Elia and Cross, Andrew W. and Brown, Benjamin J.},
  journal = {PRX Quantum},
  volume = {6},
  issue = {4},
  pages = {040361},
  numpages = {43},
  year = {2025},
  month = {Dec},
  publisher = {American Physical Society},
  doi = {10.1103/wcxs-w69t},
  url = {https://link.aps.org/doi/10.1103/wcxs-w69t}
}

@misc{ET2026,
      title={Quantum Codes with Transversal {$CCZ$} Gates and Sublinear {$Z$}-Stabilizers}, 
      author={Ohad Elishco and Itzhak Tamo},
      year={2026},
      eprint={2606.22472},
      archivePrefix={arXiv},
      primaryClass={cs.IT},
      url={https://arxiv.org/abs/2606.22472},
      note={arXiv:2606.22472}
}

@misc{G2025,
      title={Good quantum codes with addressable and parallelizable transversal non-{C}lifford gates}, 
      author={Virgile Guémard},
      year={2025},
      eprint={2510.19809},
      archivePrefix={arXiv},
      primaryClass={quant-ph},
      url={https://arxiv.org/abs/2510.19809}, 
      note={arXiv:2510.19809}
}

@misc{TB2026,
      title={Copy-cup Gates in Tensor Products of Group Algebra Codes}, 
      author={Ryan Tiew and Nikolas P. Breuckmann},
      year={2026},
      eprint={2602.23307},
      archivePrefix={arXiv},
      primaryClass={quant-ph},
      url={https://arxiv.org/abs/2602.23307}, 
      note={arXiv:2602.23307}
}

@misc{GVG2025,
      title={Near-Asymptotically-Good Quantum Codes with Transversal {CCZ} Gates and Sublinear-Weight Parity-Checks}, 
      author={Louis Golowich and Venkatesan Guruswami},
      year={2025},
      eprint={2510.06798},
      archivePrefix={arXiv},
      primaryClass={quant-ph},
      url={https://arxiv.org/abs/2510.06798},
      note={arXiv:2510.06798}
}

@misc{JCD2026,
      title={Single-Shot Decoding and Fault-tolerant Gates with Trivariate Tricycle Codes}, 
      author={Abraham Jacob and Campbell McLauchlan and Dan E. Browne},
      year={2026},
      eprint={2508.08191},
      archivePrefix={arXiv},
      primaryClass={quant-ph},
      url={https://arxiv.org/abs/2508.08191}, 
      note={arXiv:2508.08191}
}

@misc{YZZQ2025,
      title={Poincar\'e Duality and Multiplicative Structures on Quantum Codes}, 
      author={Yiming Li and Zimu Li and Zi-Wen Liu and Quynh T. Nguyen},
      year={2025},
      eprint={2512.21922},
      archivePrefix={arXiv},
      primaryClass={quant-ph},
      url={https://arxiv.org/abs/2512.21922}, 
      note={arXiv:2512.21922}
}

@article{MBJMRADL2026,
  title = {Magic Tricycles: Efficient Magic-State Generation with Finite Block-Length Quantum {LDPC} Codes},
  author = {Menon, Varun and Bonilla Ataides, J. Pablo and Mehta, Rohan and Gu, Andi and Tan, Daniel Bochen and Lukin, Mikhail D.},
  journal = {Phys. Rev. X},
  volume = {16},
  issue = {2},
  pages = {021014},
  numpages = {37},
  year = {2026},
  month = {Apr},
  publisher = {American Physical Society},
  doi = {10.1103/ghhp-cytl},
  url = {https://link.aps.org/doi/10.1103/ghhp-cytl}
}

@article{CT2021,
  title = {Four-dimensional toric code with non-{C}lifford transversal gates},
  author = {Jochym-O'Connor, Tomas and Yoder, Theodore J.},
  journal = {Phys. Rev. Res.},
  volume = {3},
  issue = {1},
  pages = {013118},
  numpages = {19},
  year = {2021},
  month = {Feb},
  publisher = {American Physical Society},
  doi = {10.1103/PhysRevResearch.3.013118},
  url = {https://link.aps.org/doi/10.1103/PhysRevResearch.3.013118}
}

@misc{THLGH2025,
      title={Single-Shot Universality in Quantum {LDPC} Codes via Code-Switching}, 
      author={Shi Jie Samuel Tan and Yifan Hong and Ting-Chun Lin and Michael J. Gullans and Min-Hsiu Hsieh},
      year={2025},
      eprint={2510.08552},
      archivePrefix={arXiv},
      primaryClass={quant-ph},
      url={https://arxiv.org/abs/2510.08552},
      note={arXiv:2510.08552}
}

@misc{GT2024,
      title={Quantum {LDPC} Codes with Transversal non-{C}lifford Gates via Products of Algebraic Codes}, 
      author={Louis Golowich and Ting-Chun Lin},
      year={2024},
      eprint={2410.14662},
      archivePrefix={arXiv},
      primaryClass={quant-ph},
      url={https://arxiv.org/abs/2410.14662}, 
      note={arXiv:2410.14662}
}

@article{HLC2022,
  author={Hu, J. and Liang, Q. and Calderbank, R.},
  journal={Quantum}, 
  title={Designing the quantum channels induced by diagonal gates}, 
  year={2022},
  volume={6},
  pages={802},
  doi={10.22331/q-2022-09-08-802}
  }

@article{BDMJL2025,
  title = {Asymptotically good {CSS}-{T} codes and a new construction of triorthogonal codes},
  author = {Berardini, E. and Dastbasteh, R. and Martinez, J.E. and Jain, S. and Larrarte, O.S.},
  journal = {IEEE Journal on Selected Areas in Information Theory},
  volume = {6},
  pages = {189--198},
  year = {2025},
  doi = {10.1109/JSAIT.2025.3582156}
  }

@article{JA2025,
      title={Transversal {Clifford} and {$T$}-Gate Codes of Short Length and High Distance}, 
      author={Jain, S.P. and Albert, V.V.},
      journal = {IEEE Journal on Selected Areas in Information Theory},
      volume = {6},
      pages = {127--137},
      year = {2025},
      doi = {10.1109/JSAIT.2025.3570832}
}

@article{SMT1972,
  title = {Good self dual codes exist},
  author = {Sloane, N.J.A. and MacWilliams, F.J. and Thompson, J.G.},
  journal = {Discrete Mathematics},
  volume = {3},
  pages = {153--162},
  year = {1972},
  doi = {10.1016/0012-365X(72)90030-1}
}

@article{WQAS2023,
doi = {10.1088/1367-2630/acfc5f},
url = {https://doi.org/10.1088/1367-2630/acfc5f},
year = {2023},
month = {Oct},
publisher = {IOP Publishing},
volume = {25},
number = {10},
pages = {103018},
author = {Webster, Mark A and Quintavalle, Armanda O and Bartlett, Stephen D},
title = {Transversal diagonal logical operators for stabiliser codes},
journal = {New Journal of Physics}
}

@article{S1995,
  title = {Scheme for reducing decoherence in quantum computer memory},
  author = {Shor, Peter W.},
  journal = {Phys. Rev. A},
  volume = {52},
  issue = {4},
  pages = {R2493(R)--R2496(R)},
  numpages = {0},
  year = {1995},
  month = {Oct},
  publisher = {American Physical Society},
  doi = {10.1103/PhysRevA.52.R2493},
  url = {https://link.aps.org/doi/10.1103/PhysRevA.52.R2493}
}

@article{PW2007,
author = {Martinez-Perez, C. and Willems, W.},
title = {Self-Dual Doubly Even 2-Quasi-Cyclic Transitive Codes Are Asymptotically Good},
year = {2007},
issue_date = {November 2007},
publisher = {IEEE Press},
volume = {53},
number = {11},
issn = {0018-9448},
url = {https://doi.org/10.1109/TIT.2007.907500},
doi = {10.1109/TIT.2007.907500},
journal = {IEEE Trans. Inf. Theor.},
month = nov,
pages = {4302–4308},
numpages = {7}
}

@article{CLMRSS2024,
  title = {An algebraic characterization of binary {{CSS}-T} codes and cyclic {{CSS}-T} codes for quantum fault tolerance},
  author = {Camps-Moreno, E. and López, H.H. and Matthews, G.L. and Ruano, D. and San-José, R. and Soprunov, I.},
  journal = {Quantum Information Processing},
  volume = {23},
  pages = {230},
  year = {2024},
  doi = {10.1007/s11128-024-04427-5}
  }

@article{Haah_2018,
   title={Towers of generalized divisible quantum codes},
   volume={97},
   number={4},
   journal={Phys. Rev. A},
   publisher={American Physical Society (APS)},
   author={Haah, Jeongwan},
   year={2018},
   month=Apr }

@article{HLC2025,
  title = {Climbing the diagonal {Clifford} hierarchy},
  author = {Hu, J. and Liang, Q. and Calderbank, R.},
  journal = {Academia Quantum},
  volume = {2},
  year = {2025},
  doi = {10.20935/AcadQuant7957}
}

@article{KT2019,
  title = {Towards Low Overhead Magic State Distillation},
  author = {Krishna, A. and Tillich, J.P.},
  journal = {Phys. Rev. Lett.},
  volume = {123},
  issue = {7},
  year = {2019},
  publisher = {American Physical Society},
  doi = {10.1103/PhysRevLett.123.070507},
}

@article{BM2012,
  title = {On triply even binary codes},
  author = {Betsumiya, K. and Munemasa, A.},
  journal = {Journal of the London Mathematical Society},
  volume = {86},
  issue = {1},
  year = {2012},
  doi = {10.1112/jlms/jdr054},
}

@misc{TTF2025,
  title={Clifford gates with logical transversality for self-dual {CSS} codes},
  author={Tansuwannont, T. and Takada, Y. and Fujii, K.},
  year={2025},
  eprint={2503.19790},
  archivePrefix={arXiv},
  primaryClass={quant-ph},
  note={arXiv:2503.19790}
}

@article{EK2009,
  title = {Restrictions on Transversal Encoded Quantum Gate Sets},
  author = {Eastin, Bryan and Knill, Emanuel},
  journal = {Phys. Rev. Lett.},
  volume = {102},
  issue = {11},
  pages = {110502},
  numpages = {4},
  year = {2009},
  month = {Mar},
  publisher = {American Physical Society},
  doi = {10.1103/PhysRevLett.102.110502},
  url = {https://link.aps.org/doi/10.1103/PhysRevLett.102.110502}
}

@book{NC2010,
author = {Nielsen, Michael A. and Chuang, Isaac L.},
title = {Quantum Computation and {Quantum} Information: 10th {Anniversary} Edition},
year = {2011},
isbn = {1107002176},
publisher = {Cambridge University Press},
address = {USA},
edition = {10th}
}

@article{CTV2017,
  title = {Roads towards fault-tolerant universal quantum computation},
  author = {Campbell, E. T. and Vuillot, C.},
  journal = {Nature},
  volume = {549},
  pages = {172--179},
  year = {2017},
  doi = {10.1038/nature23460}
}

@article{BH2012,
  title = {Magic-state distillation with low overhead},
  author = {Bravyi, Sergey and Haah, Jeongwan},
  journal = {Phys. Rev. A},
  volume = {86},
  issue = {5},
  pages = {052329},
  numpages = {10},
  year = {2012},
  month = {Nov},
  publisher = {American Physical Society},
  doi = {10.1103/PhysRevA.86.052329},
  url = {https://link.aps.org/doi/10.1103/PhysRevA.86.052329}
}

@ARTICLE{BADEFSM2025,
  author={Bolkema, Jessalyn and Andrade, Emma and Dexter, Thomas and Eggers, Harrison and Fisher, Victoria L. and Szramowsky, Luke and Manganiello, Felice},
  journal={IEEE Journal on Selected Areas in Information Theory}, 
  title={{CSS}-{T} Codes From {Reed}-{Muller} Codes}, 
  year={2025},
  volume={6},
  number={},
  pages={199-204},
  doi={10.1109/JSAIT.2025.3583217}}

@article{HH2018,
  title = {Distillation with Sublogarithmic Overhead},
  author = {Hastings, Matthew B. and Haah, Jeongwan},
  journal = {Phys. Rev. Lett.},
  volume = {120},
  issue = {5},
  pages = {050504},
  numpages = {3},
  year = {2018},
  month = {Jan},
  publisher = {American Physical Society},
  doi = {10.1103/PhysRevLett.120.050504},
  url = {https://link.aps.org/doi/10.1103/PhysRevLett.120.050504}
}

@article{HHccz18,
  doi = {10.22331/q-2018-06-07-71},
  url = {https://doi.org/10.22331/q-2018-06-07-71},
  title = {Codes and protocols for distilling {$T$}, Controlled-{$S$}, and {T}offoli gates},
  author = {Haah, Jeongwan and Hastings, Matthew B.},
  journal = {{Quantum}},
  issn = {2521-327X},
  publisher = {{Verein zur F{\"{o}}rderung des Open Access Publizierens in den Quantenwissenschaften}},
  volume = {2},
  pages = {71},
  year = {2018}
}

@article{WHY2025,
  title = {Constant-overhead magic state distillation},
  author = {Wills, A. and Hsieh, MH. and Yamasaki, H},
  journal = {Nat. Phys.},
  volume = {21},
  pages = {1842--1846},
  year = {2025},
  doi = {10.1038/s41567-025-03026-0}
}

@inproceedings{GG2025,
author = {Golowich, Louis and Guruswami, Venkatesan},
title = {Asymptotically Good Quantum Codes with Transversal non-{C}lifford Gates},
year = {2025},
isbn = {9798400715105},
publisher = {Association for Computing Machinery},
url = {https://doi.org/10.1145/3717823.3718234},
doi = {10.1145/3717823.3718234},
booktitle = {Proceedings of the 57th Annual ACM Symposium on Theory of Computing},
pages = {707–717},
numpages = {11},
series = {STOC '25}
}

@inproceedings{N2025,
author = {Nguyen, Quynh T.},
title = {Good Binary Quantum Codes with Transversal {$CCZ$} Gate},
year = {2025},
isbn = {9798400715105},
publisher = {Association for Computing Machinery},
url = {https://doi.org/10.1145/3717823.3718186},
doi = {10.1145/3717823.3718186},
booktitle = {Proceedings of the 57th Annual ACM Symposium on Theory of Computing},
pages = {697–706},
numpages = {10},
series = {STOC '25}
}

@inproceedings{RCNPisit2020,
author = {Rengaswamy, Narayanan and Calderbank, Robert and Newman, Michael and Pfister, Henry D.},
title = {Classical Coding Problem from Transversal {$T$} Gates},
year = {2020},
publisher = {IEEE Press},
url = {https://doi.org/10.1109/ISIT44484.2020.9174408},
doi = {10.1109/ISIT44484.2020.9174408},
booktitle = {2020 IEEE International Symposium on Information Theory (ISIT)},
pages = {1891–1896},
numpages = {6},
location = {Los Angeles, CA, USA}
}

@misc{HLC2022div,
  title={Divisible Codes for Quantum Computation},
  author={Hu, J. and Liang, Q. and Calderbank, R.},
  year={2022},
  eprint={2204.13176},
  archivePrefix={arXiv},
  primaryClass={quant-ph},
  note={arXiv:2204.13176}
}

@misc{G1997,
      title={Stabilizer Codes and Quantum Error Correction}, 
      author={Daniel Gottesman},
      year={1997},
      eprint={quant-ph/9705052},
      archivePrefix={arXiv},
      primaryClass={quant-ph},
      url={https://arxiv.org/abs/quant-ph/9705052}, 
      note={arXiv:quant-ph/9705052}
}

@misc{MN2025,
      title={Asymptotically good {CSS} codes that realize the logical transversal {Clifford} group fault-tolerantly}, 
      author={K. Sai Mineesh Reddy and Navin Kashyap},
      year={2026},
      eprint={2601.08568},
      archivePrefix={arXiv},
      primaryClass={quant-ph},
      note={arXiv:2601.08568}, 
}

@misc{CLMRS2026,
      title={Transversal gates for quantum {CSS} codes}, 
      author={Eduardo Camps-Moreno and Hiram H. López and Gretchen L. Matthews and Narayanan Rengaswamy and Rodrigo San-José},
      year={2026},
      eprint={2601.21514},
      archivePrefix={arXiv},
      primaryClass={cs.IT},
      note={arXiv:2601.21514}, 
}

@article{AT2016,
author = {Anderson, Jonas T. and Jochym-O'Connor, Tomas},
title = {Classification of transversal gates in qubit stabilizer codes},
year = {2016},
issue_date = {July 2016},
publisher = {Rinton Press, Incorporated},
address = {Paramus, NJ},
volume = {16},
number = {9–10},
issn = {1533-7146},
journal = {Quantum Info. Comput.},
month = jul,
pages = {771–802},
numpages = {32}
}

@article{Ax1964,
  title={ZEROES OF POLYNOMIALS OVER FINITE FIELDS.},
  author={James Ax},
  journal={American Journal of Mathematics},
  year={1964},
  volume={86},
  pages={255},
  url={https://api.semanticscholar.org/CorpusID:124604831}
}

@misc{TTR2026,
      title={Construction of the full logical {Clifford} group for high-rate quantum {Reed}-{Muller} codes using only transversal and fold-transversal gates}, 
      author={Theerapat Tansuwannont and Tim Chan and Ryuji Takagi},
      year={2026},
      eprint={2602.09788},
      archivePrefix={arXiv},
      primaryClass={quant-ph},
      url={https://arxiv.org/abs/2602.09788},
      note={arXiv:2602.09788}
}

@article{BB2024,
  doi = {10.22331/q-2024-06-13-1372},
  url = {https://doi.org/10.22331/q-2024-06-13-1372},
  title = {Fold-transversal {C}lifford Gates for Quantum Codes},
  author = {Breuckmann, Nikolas P. and Burton, Simon},
  journal = {{Quantum}},
  issn = {2521-327X},
  publisher = {{Verein zur F{\"{o}}rderung des Open Access Publizierens in den Quantenwissenschaften}},
  volume = {8},
  pages = {1372},
  month = jun,
  year = {2024}
}

@book{CLRS2009,
author = {Cormen, Thomas H. and Leiserson, Charles E. and Rivest, Ronald L. and Stein, Clifford},
title = {Introduction to Algorithms, Third Edition},
year = {2009},
isbn = {0262033844},
publisher = {The MIT Press},
edition = {3rd}
}

@misc{HVAR2025,
      title={Asymptotically Good Quantum Codes with Addressable and Transversal non-{C}lifford Gates}, 
      author={Zhiyang He and Vinod Vaikuntanathan and Adam Wills and Rachel Yun Zhang},
      year={2025},
      eprint={2507.05392},
      archivePrefix={arXiv},
      primaryClass={quant-ph},
      url={https://arxiv.org/abs/2507.05392}, 
      note={arXiv:2507.05392}
}

@misc{HVAR20251,
      title={Quantum Codes with Addressable and Transversal non-{C}lifford Gates}, 
      author={Zhiyang He and Vinod Vaikuntanathan and Adam Wills and Rachel Yun Zhang},
      year={2025},
      eprint={2502.01864},
      archivePrefix={arXiv},
      primaryClass={quant-ph},
      url={https://arxiv.org/abs/2502.01864}, 
      note={arXiv:2502.01864}
}

@article{S1996,
  title = {Simple quantum error-correcting codes},
  author = {Steane, A. M.},
  journal = {Phys. Rev. A},
  volume = {54},
  issue = {6},
  pages = {4741--4751},
  numpages = {0},
  year = {1996},
  month = {Dec},
  publisher = {American Physical Society},
  doi = {10.1103/PhysRevA.54.4741},
  url = {https://link.aps.org/doi/10.1103/PhysRevA.54.4741}
}

@article{B2025,
  title = {Color Code with a Logical Control-{$S$} Gate Using Transversal {$T$} Rotations},
  author = {Brown, Benjamin J.},
  journal = {Phys. Rev. Lett.},
  volume = {135},
  issue = {7},
  pages = {070602},
  numpages = {6},
  year = {2025},
  month = {Aug},
  publisher = {American Physical Society},
  doi = {10.1103/lwxd-fdlb},
  url = {https://link.aps.org/doi/10.1103/lwxd-fdlb}
}

@article{ALT2001,
  title = {Asymptotically good quantum codes},
  author = {Ashikhmin, Alexei and Litsyn, Simon and Tsfasman, Michael A.},
  journal = {Phys. Rev. A},
  volume = {63},
  issue = {3},
  pages = {032311},
  numpages = {5},
  year = {2001},
  month = {Feb},
  publisher = {American Physical Society},
  doi = {10.1103/PhysRevA.63.032311},
  url = {https://link.aps.org/doi/10.1103/PhysRevA.63.032311}
}

@misc{CHGE2024,
      title={Toward Quantum {CSS}-{T} Codes from Sparse Matrices}, 
      author={Eduardo Camps-Moreno and Hiram H. López and Gretchen L. Matthews and Emily McMillon},
      year={2024},
      note={arXiv:2406.00425},
      archivePrefix={arXiv},
}

@article{BCR2024,
   title={Structure of {CSS} and {CSS}-{T} quantum codes},
   volume={92},
   journal={Des. Codes Cryptogr.},
   publisher={Springer Science and Business Media LLC},
   author={Berardini, Elena and Caminata, Alessio and Ravagnani, Alberto},
   year={2024},
   month=May }

@misc{AOHEM2026,
      title={On Constructing and Decoding Quantum Triorthogonal Codes}, 
      author={Alessio Baldelli and Olai A. Mostad and Hsuan-Yin Lin and Eirik Rosnes and Massimo Battaglioni},
      year={2026},
      note={arXiv:2605.24519},
      archivePrefix={arXiv},
}

@misc{PCR2025,
      title={{CSS-T} codes over Binary Extension Fields and their Physical Foundations}, 
      author={Jasper J. Postema and F. Conca and A. Ravagnani},
      year={2025},
      note={arXiv:2507.17611},
      archivePrefix={arXiv},
}

@misc{BHMR2026,
      title={The {S}chur product of evaluation codes and its application to {CSS}-{T} quantum codes and private information retrieval}, 
      author={Şeyma Bodur and Fernando Hernando and Edgar Martínez-Moro and Diego Ruano},
      year={2026},
      note={arXiv:2505.10068},
      archivePrefix={arXiv},
}

@INPROCEEDINGS{CHGDRI2024,
  author={Camps-Moreno, Eduardo and López, Hiram H. and Matthews, Gretchen L. and Ruano, Diego and San–José, Rodrigo and Soprunov, Ivan},
  booktitle={2024 60th Annual Allerton Conference on Communication, Control, and Computing}, 
  title={Binary Triorthogonal and {CSS-T} Codes for Quantum Error Correction}, 
  year={2024},
  volume={},
  number={},
  pages={01-06},
  doi={10.1109/Allerton63246.2024.10735332}}

@misc{BA2015,
      title={Doubled Color Codes}, 
      author={Sergey Bravyi and Andrew Cross},
      year={2015},
      eprint={1509.03239},
      archivePrefix={arXiv},
      primaryClass={quant-ph},
      url={https://arxiv.org/abs/1509.03239}, 
      note={arXiv:1509.03239}
}

@article{B2015,
doi = {10.1088/1367-2630/17/8/083002},
url = {https://doi.org/10.1088/1367-2630/17/8/083002},
year = {2015},
month = {aug},
publisher = {IOP Publishing},
volume = {17},
number = {8},
pages = {083002},
author = {Bombín, Héctor},
title = {Gauge color codes: optimal transversal gates and gauge fixing in topological stabilizer codes},
journal = {New Journal of Physics},
}

@article{KB2015,
  title = {Universal transversal gates with color codes: A simplified approach},
  author = {Kubica, Aleksander and Beverland, Michael E.},
  journal = {Phys. Rev. A},
  volume = {91},
  issue = {3},
  pages = {032330},
  numpages = {12},
  year = {2015},
  month = {Mar},
  publisher = {American Physical Society},
  doi = {10.1103/PhysRevA.91.032330},
  url = {https://link.aps.org/doi/10.1103/PhysRevA.91.032330}
}

@article{VB2019,
  title = {Three-dimensional surface codes: Transversal gates and fault-tolerant architectures},
  author = {Vasmer, Michael and Browne, Dan E.},
  journal = {Phys. Rev. A},
  volume = {100},
  issue = {1},
  pages = {012312},
  numpages = {20},
  year = {2019},
  month = {Jul},
  publisher = {American Physical Society},
  doi = {10.1103/PhysRevA.100.012312},
  url = {https://link.aps.org/doi/10.1103/PhysRevA.100.012312}
}

@article{KYP2015,
doi = {10.1088/1367-2630/17/8/083026},
url = {https://doi.org/10.1088/1367-2630/17/8/083026},
year = {2015},
month = {aug},
publisher = {IOP Publishing},
volume = {17},
number = {8},
pages = {083026},
author = {Kubica, Aleksander and Yoshida, Beni and Pastawski, Fernando},
title = {Unfolding the color code},
journal = {New Journal of Physics},
}

@article{PR2013,
  title = {Universal Fault-Tolerant Quantum Computation with Only Transversal Gates and Error Correction},
  author = {Paetznick, Adam and Reichardt, Ben W.},
  journal = {Phys. Rev. Lett.},
  volume = {111},
  issue = {9},
  pages = {090505},
  numpages = {5},
  year = {2013},
  month = {Aug},
  publisher = {American Physical Society},
  doi = {10.1103/PhysRevLett.111.090505},
  url = {https://link.aps.org/doi/10.1103/PhysRevLett.111.090505}
}

@misc{DOCJ2026,
      title={Quantum Codes with Arbitrary {Z}-Rotation logical Gates and Applications to Fault-Tolerant Code Switching}, 
      author={Reza Dastbasteh and Ruben M. Otxoa and Pedro M. Crespo and Josu Etxezarreta Martinez},
      year={2026},
      eprint={2608.11160},
      archivePrefix={arXiv},
      primaryClass={quant-ph},
      url={https://arxiv.org/abs/2608.11160}, 
      note={arXiv:2608.11160}
}

@ARTICLE{CYCXSZ2026,
  author={Chen, Hao and Yu, Cong and Chen, Yaqi and Xie, Conghui and Sun, Zhonghua and Zhu, Shixin},
  journal={IEEE Transactions on Information Theory}, 
  title={Optimal, Best-Known Divisible Binary Codes and Related Quantum Codes}, 
  year={2026},
  doi={10.1109/TIT.2026.3719641}}

@inproceedings{G1996,
author = {Grover, Lov K.},
title = {A fast quantum mechanical algorithm for database search},
year = {1996},
isbn = {0897917855},
publisher = {Association for Computing Machinery},
address = {New York, NY, USA},
url = {https://doi.org/10.1145/237814.237866},
doi = {10.1145/237814.237866},
booktitle = {Proceedings of the Twenty-Eighth Annual ACM Symposium on Theory of Computing},
pages = {212–219},
numpages = {8},
location = {Philadelphia, Pennsylvania, USA},
series = {STOC '96}
}

@article{S1997,
author = {Shor, Peter W.},
title = {Polynomial-Time Algorithms for Prime Factorization and Discrete Logarithms on a Quantum Computer},
journal = {SIAM Journal on Computing},
volume = {26},
number = {5},
pages = {1484-1509},
year = {1997},
doi = {10.1137/S0097539795293172},
URL = {https://doi.org/10.1137/S0097539795293172},
eprint = {https://doi.org/10.1137/S0097539795293172
}
}

@article{GJ2026,
  doi = {10.22331/q-2026-06-30-2145},
  url = {https://doi.org/10.22331/q-2026-06-30-2145},
  title = {On the {A}ddressability {P}roblem on {CSS} {C}odes},
  author = {Guyot, J{\'{e}}r{\^{o}}me and Jaques, Samuel},
  journal = {{Quantum}},
  issn = {2521-327X},
  publisher = {{Verein zur F{\"{o}}rderung des Open Access Publizierens in den Quantenwissenschaften}},
  volume = {10},
  pages = {2145},
  month = jun,
  year = {2026}
}

@article{QWV2023,
  doi = {10.22331/q-2023-10-24-1153},
  url = {https://doi.org/10.22331/q-2023-10-24-1153},
  title = {Partitioning qubits in hypergraph product codes to implement logical gates},
  author = {Quintavalle, Armanda O. and Webster, Paul and Vasmer, Michael},
  journal = {{Quantum}},
  issn = {2521-327X},
  publisher = {{Verein zur F{\"{o}}rderung des Open Access Publizierens in den Quantenwissenschaften}},
  volume = {7},
  pages = {1153},
  month = oct,
  year = {2023}
}

@article{PB2025,
  doi = {10.22331/q-2025-08-29-1842},
  url = {https://doi.org/10.22331/q-2025-08-29-1842},
  title = {Targeted {C}lifford logical gates for hypergraph product codes},
  author = {Patra, Adway and Barg, Alexander},
  journal = {{Quantum}},
  issn = {2521-327X},
  publisher = {{Verein zur F{\"{o}}rderung des Open Access Publizierens in den Quantenwissenschaften}},
  volume = {9},
  pages = {1842},
  month = aug,
  year = {2025}
}

\end{document}